\newcommand{\indep}{\perp \!\!\! \perp}
\newcommand{\expit}{\mathrm{expit}}
\newcommand{\logit}{\mathrm{logit}}
\newcommand{\beginsupplement}{
        \setcounter{table}{0}
        \renewcommand{\thetable}{S\arabic{table}}%
        \setcounter{figure}{0}
        \renewcommand{\thefigure}{S\arabic{figure}}%
        \setcounter{section}{0}
        \renewcommand{\thesection}{S\arabic{section}}
        \setcounter{assumption}{0}
        \renewcommand{\theassumption}{S\arabic{assumption}}
        \setcounter{model}{0}
        \renewcommand{\themodel}{S\arabic{model}}
        \setcounter{theorem}{0}
        \renewcommand{\thetheorem}{S\arabic{theorem}}
        \setcounter{lemma}{0}
        \renewcommand{\thelemma}{S\arabic{lemma}}
     }
\renewcommand{\algocf@captiontext}[2]{#1\algocf@typo. \AlCapFnt{}#2} 
\def\@algocf@capt@plain{top}
\renewcommand{\algocf@makecaption}[2]{%
  \addtolength{\hsize}{\algomargin}%
  \sbox\@tempboxa{\algocf@captiontext{#1}{#2}}%
  \ifdim\wd\@tempboxa >\hsize
    \hskip .5\algomargin%
    \parbox[t]{\hsize}{\algocf@captiontext{#1}{#2}}
  \else%
    \global\@minipagefalse%
    \hbox to\hsize{\box\@tempboxa}
  \fi%
  \addtolength{\hsize}{-\algomargin}%
}
\begin{document}

\jname{Biometrika}

\jyear{0000}
\jvol{000}
\jnum{0}

\markboth{K. Li et~al.}{Biometrika style}

\title{Doubly Robust Proximal Causal Inference under Confounded Outcome-Dependent Sampling}

\author{Kendrick Qijun Li}
\affil{Department of Biostatistics, University of Michigan, \\ 1415 Washington Heights, Ann Arbor, Michigan, U.S.A.
\email{qijunli@umich.edu}}

\author{Xu Shi}
\affil{Department of Biostatistics, University of Michigan, \\1415 Washington Heights, Ann Arbor,  Michigan, U.S.A.
\email{shixu@umich.edu}}

\author{Wang Miao}
\affil{Department of Probability and Statistics, Peking University, \\ 5 Yiheyuan Rd, Haidian District, Beijing, China
\email{mwfy@pku.edu.cn}}

\author{\and Eric Tchetgen Tchetgen}
\affil{Department of Statistics and Data Science, The Wharton School, University of Pennsylvania,\\ 265 South 37th Street, Philadelphia, Pennsylvania, U.S.A
\email{ett@wharton.upenn.edu}}

\maketitle

\begin{abstract}
Unmeasured confounding  and selection bias are often of concern in observational studies and may invalidate a causal analysis if not appropriately accounted for. Under outcome-dependent sampling, a latent factor that has causal effects on the treatment, outcome, and sample selection process may cause both unmeasured confounding and selection bias, rendering standard causal parameters unidentifiable without additional assumptions. Under an odds ratio model for the treatment effect, \citet{li2022double} established both proximal identification and estimation of causal effects by leveraging a pair of negative control variables as proxies of latent factors at the source of both confounding and selection bias. However, their approach relies exclusively on the existence and correct specification of a so-called treatment confounding bridge function, a model that restricts the treatment assignment mechanism. In this article, we propose doubly robust estimation under the odds ratio model with respect to two nuisance functions -- a treatment confounding bridge function and an outcome confounding bridge function that restricts the outcome law, such that our estimator is consistent and asymptotically normal if either bridge function model is correctly specified, without knowing which one is. Thus, our proposed doubly robust estimator is potentially more robust than that of \citet{li2022double}. Our simulations confirm that the proposed proximal  estimators of an odds ratio causal effect can adequately account for both residual confounding and selection bias under stated conditions with well-calibrated confidence intervals in a wide range of scenarios, where standard methods generally fail to be consistent. In addition, the proposed doubly robust estimator is consistent if at least one confounding bridge function is correctly specified.
 \end{abstract}

\begin{keywords}
endogenous selection bias; kernel machine learning; proximal causal inference; semiparametric estimation. 
\end{keywords}

\section{Introduction}

Unmeasured confounding and selection bias are two ubiquitous challenges in observational studies, which may lead to biased causal effect estimates and misleading causal conclusions~\citep{hernan2020causal}. \citet{lipsitch2010negative} reviewed and formalized the use of negative control variables as tools to detect the presence of unmeasured confounding in epidemiological research. Specifically, they identify two types of negative control variables:   negative control exposures (NCE) known a priori to have no causal effect on the outcome and negative control outcomes (NCO) known a priori not to be causally impacted by the  treatment.  Such variables may be viewed as valid proxies of unmeasured confounders to the extent that they are associated with the latter. \citet{tchetgen2014control} and \citet{flanders2017new} proposed regression-type calibration for unmeasured confounding adjustment using NCOs and NCEs, respectively, under fairly strong parametric restrictions. More recently, a double negative control approach which uses NCOs and NCEs jointly to correct for unmeasured confounding has been developed, which achieves nonparametric identification of the average treatment effect~\citep{miao2018identifying}; also see \citep{shi2020selective} for a recent review of negative control methods in Epidemiology.  Methods for identification, estimation and inference about causal parameters by leveraging such proxies to address unmeasured confounding bias have been referred to as ``proximal causal inference''~\citep{tchetgen2020introduction}. 

The literature on proximal causal inference is fast growing. \citet{miao2018confounding} introduced an outcome confounding bridge function approach to identify and estimate the average treatment effect by leveraging both NCE and NCO variables, while \citet{cui2020semiparametric} and \citet{deaner2018proxy} introduced identification via a treatment confounding bridge function. \citet{cui2020semiparametric} further developed a doubly robust approach which can identify and consistently estimate the  average treatment effect if either a treatment or an outcome confounding bridge function exists and can be estimated consistently. \citet{ghassami2021minimax} and \citet{kallus2021causal} concurrently proposed a cross-fitting minimax kernel learning approach for nonparametric estimation of the confounding bridge functions, where the efficient influence function has a mixed bias structure \citep{rotnitzky2021characterization}, such that $\sqrt n$-consistent estimation of the average treatment effect remains possible even when both confounding bridge functions might be estimated at considerably slower than $\sqrt n$ rates.  

Despite the rapid growth in methods to address confounding bias, few have considered  situations in which confounding bias and selection bias might coexist. Selection bias, also sometimes called endogenous selection bias or collider stratification bias~\citep{greenland2003quantifying,elwert2014endogenous,cole2010illustrating,hernan2004structural}, arises when the analysis conditions on selection that may be induced by both the primary treatment and outcome variables.  Selection bias naturally occurs in outcome-dependent sampling designs that are widely used in epidemiological and econometric research to reduce cost and effort, such as case-control design~\citep{breslow1980statistical,pearce2018bias}, case-cohort design~\citep{prentice1986case}, choice-based design~\citep{manski1981alternative}, test-negative design~\citep{jackson2013test,sullivan2016theoretical}, and retrospective cohort designs with electronic health records (EHR) data~\citep{streeter2017adjusting}. Similar to confounding bias, selection bias potentially induces spurious associations between treatment and outcome variables of primary scientific interest, even under the null hypothesis of no causal effect. Existing methods to adjust for selection bias include explicit modeling of sample selection mechanism~\citep{heckman1979sample,heckman1998characterizing,cuddeback2004detecting}, matching~\citep{heckman1996sources,heckman1998characterizing}, difference-in-differences models~\citep{heckman1998characterizing} and inverse probability weighting~\citep{hernan2004structural,mansournia2016inverse}. However, when a common latent factor causes the  treatment, outcome and  selection mechanisms, the above methods are not applicable and the causal effect cannot generally be identified. \citet{gabriel2020causal} referred to such a sampling design as ``confounded outcome-dependent sampling''. They derived causal bounds of the average treatment effect, but these bounds are often too wide to be useful in practice, which highlights the challenges of identification. \citet{didelez2010graphical}, \citet{bareinboim2012controlling} and \citet{bareinboim2014recovering} studied graphical conditions for recoverability of the causal effect encoded on the additive scale and odds ratio scale under different forms of selection bias including outcome-dependent sampling. They concluded that neither causal effect is identifiable if  selection is dependent on both outcome and unmeasured confounders.

Recently, \citet{li2022double} studied inference about an odds ratio model encoding a treatment's association with an outcome of interest conditional on both measured and unmeasured confounders under confounded outcome-dependent sampling, leveraging a pair of negative control variables.  The consistency of their estimator, which we refer to as the proximal inverse probability weighted (PIPW) estimator, requires consistently estimating a treatment confounding bridge function which restricts the treatment assignment mechanism. However, in practice, it may be difficult to posit a suitable parametric model for the treatment confounding bridge function, which may limit the applicability of their approach.

In this paper, we develop semiparametric inference for the conditional odds ratio estimand in~\citet{li2022double}. We present a new identification result and propose the proximal outcome regression (POR) estimator of conditional odds ratio that relies on the existence and correct specification of an outcome confounding bridge function which restricts the outcome law by the conditional odds of the outcome given the treatment and confounders. We further introduce a doubly-robust closed-form expression for the conditional odds ratio, based on which we propose a proximal doubly robust (PDR) estimator, which is consistent if either the treatment or the outcome confounding bridge function is consistently estimated and thus has improved robustness against model misspecification of the nuisance functions compared with PIPW and POR. We demonstrate the performance of the proposed estimators through comprehensive simulations. Throughout, we relegate all proofs to Section~\ref{supp:proof} of the Supplementary Materials.

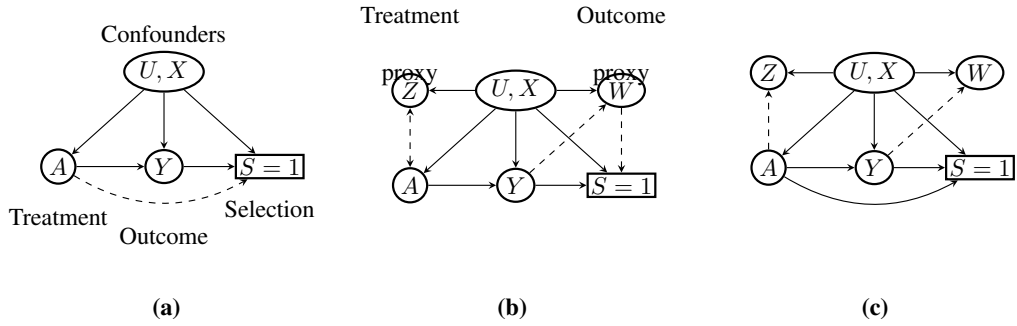
\begin{figure}[!htbp]
		\centering
\resizebox{0.9\textwidth}{!}{
\begin{tabular}{ccc}	
\begin{minipage}{0.31\textwidth}\vskip 1em
	\centering
		\begin{tikzpicture}
			
			\tikzset{line width=1pt,inner sep=5pt,
				swig vsplit={gap=3pt, inner line width right=0.4pt},
				ell/.style={draw, inner sep=1.5pt,line width=1pt}}

			\node[shape = circle, ell] (A) at (-1.67, 0) {$A$};

			\node[shape=ellipse,ell] (Y) at (0,0) {$Y$};
			
			\node[shape=ellipse,ell] (U) at (0,1.5) {$U, X$};
			
			\node[shape=rectangle,ell] (S) at (1.67,0) {$S=1$};

			\node(Aname) at (-1.67, -0.8) {Treatment};
			
			\node(Yname) at (0, -1.1) {Outcome};
			
			\node(Uname) at (0, 2.1) {Confounders};

            \node (Sname) at (1.67, -0.67) {Selection};


			\draw[-stealth, line width=0.5pt, bend right, dashed](A) to (S);
			
			\draw[-stealth,line width=0.5pt](A) to (Y);
			
			\foreach \from/\to in {Y/S, U/S, U/Y, U/A}
			\draw[-stealth, line width = 0.5pt] (\from) -- (\to);
			
		\end{tikzpicture}
	\end{minipage} &
	\hspace{0.1in}
	\begin{minipage}{0.31\textwidth}\vspace{0em}
	\centering
		\begin{tikzpicture}
			
			\tikzset{line width=1pt,inner sep=5pt,
				swig vsplit={gap=3pt, inner line width right=0.4pt},
				ell/.style={draw, inner sep=1.5pt,line width=1pt}}

			\node[shape = circle, ell] (A) at (-1.67, 0) {$A$};
			\node[shape=ellipse,ell] (Y) at (0,0) {$Y$};
			
			\node[shape=ellipse,ell] (U) at (0,1.5) {$U, X$};

			\node[shape=ellipse,ell] (Z) at (-1.67, 1.5) {$Z$};
						
			\node[shape=rectangle,ell] (S) at (1.67,0) {$S=1$};
			
			\node[shape=ellipse,ell] (W) at (1.67, 1.5) {$W$};

			\node (Wname) at (1.67, 2.3) {\begin{tabular}{cc}Outcome\\ proxy\end{tabular}};
			\node (Zname) at (-1.67, 2.3) {\begin{tabular}{cc}Treatment\\ proxy\end{tabular}};

			\node(Yname) at (0, -1.1) {\textcolor{white}{Outcome}};
			
			\draw[-stealth, line width=0.5pt, bend right,color=white](A) to (S);
			
			\draw[-stealth,line width=0.5pt](A) to (Y);
			
			\draw[dashed,stealth-stealth, line width=0.5pt](A) to (Z);
			
			\draw[dashed, -stealth, line width = 0.5pt] (Y) to (W); 
			
			\draw[dashed, -stealth, line width = 0.5pt] (W) to (S);
			
			\foreach \from/\to in {Y/S, U/S, U/Y, U/A, U/W, U/Z}
			\draw[-stealth, line width = 0.5pt] (\from) -- (\to);
			
		\end{tikzpicture}
	\end{minipage}&
	\hspace{0.1in}
	\begin{minipage}{0.31\textwidth}\vskip 1em
	\centering
		\begin{tikzpicture}
			
			\tikzset{line width=1pt,inner sep=5pt,
				swig vsplit={gap=3pt, inner line width right=0.4pt},
				ell/.style={draw, inner sep=1.5pt,line width=1pt}}

			\node[shape = circle, ell] (A) at (-1.67, 0) {$A$};

			\node[shape=ellipse,ell] (Y) at (0,0) {$Y$};
			
			\node[shape=ellipse,ell] (U) at (0,1.5) {$U, X$};
			
			\node[shape=ellipse,ell] (W) at (1.67, 1.5) {$W$};
			
			\node[shape=ellipse,ell] (Z) at (-1.67, 1.5) {$Z$};
			
			\node[shape=rectangle,ell] (S) at (1.67,0) {$S=1$};
			
			\draw[dashed,-stealth, line width=0.5pt](A) to (Z);
			
			\draw[-stealth, line width=0.5pt, bend right](A) to (S);
			
			\draw[-stealth,line width=0.5pt](A) to (Y);
			
			\draw[-stealth,line width=0.5pt, dashed](Y) to (W);
			
			\foreach \from/\to in {Y/S, U/S, U/Y, U/A, U/W, U/Z}
			\draw[-stealth, line width = 0.5pt] (\from) -- (\to);
			
				\node (Wname) at (1.7, 2.1) {\textcolor{white}{Outcome proxy}};
			\node (Zname) at (-1.7, 2.1) {\textcolor{white}{Treatment proxy}};
			\node(Yname) at (0, -1.1) {\textcolor{white}{Outcome}};
		\end{tikzpicture}
	\end{minipage} 
	\\
	\begin{minipage}{0.31\textwidth}\centering$\quad\quad$\textbf{(a)}\end{minipage} &
	\begin{minipage}{0.31\textwidth}\centering$\quad\quad\quad\,\,\,$\textbf{(b)}\end{minipage}
	&\begin{minipage}{0.31\textwidth}\centering$\qquad\quad\quad\quad\,\,\,$\textbf{(c)}\end{minipage}\\
	
\end{tabular}}

\caption{\label{fig:dag} (a) shows the directed acyclic graphs of a confounded outcome-dependent sampling in view. (b) and (c) shows two scenarios where the odds ratio can be identified leveraging additional treatment proxy ($Z$) and outcome proxy ($W$) variables.}
\end{figure}

\section{Identification and estimation under a homogeneous odds ratio model}\label{sec:homo-or}
\subsection{Notation and Setup}\label{sec:setup}
Suppose the data contain a sample of $n$ identically and independently distributed observations drawn from the population of interest, referred to as the ``target population''. For each observation in the target population, let $A$ and $Y$ be the treatment and outcome of primary interest, respectively, assumed to be binary for the time being. Let $S$ be a binary indicator for selection into the study sample, such that the available data only include observations with $S=1$. Let $X$ be a vector of measured pre-treatment covariates that may be associated with $A$, $Y$ and $S$. Suppose that in addition to $X$, there exist pre-treatment latent factors, denoted as $U$, that may cause $A$, $Y$ and $S$. Figure~\ref{fig:dag}(a) shows the directed acyclic graph (DAG) for the causal relationships between the above variables. Similar to \citet{li2022double} Section 2.5, we assume the following model, where $\beta_0$, the conditional log odds ratio given $(U,X)$, encodes the treatment effect of primary interest: 
\begin{model}[Homogeneous odds ratio model; \citet{li2022double} Assumption 3']\label{mod:logit-model}
    \begin{equation*}
    \logit \,P(Y=1\mid A, U, X) = \beta_0 A + \eta(U, X)
\end{equation*}
\end{model}
 where $\logit(x) = \log\left\{x/(1-x)\right\}$ denotes the logit function and $\eta$ is an unknown real-valued function. Model~\ref{mod:logit-model} describes a semiparametric logistic regression model that assumes a homogeneous association between $A$ and $Y$ on the odds ratio scale, across strata of all measured and latent factors $(U,X)$. Model~\ref{mod:logit-model} encodes the structural model
 \begin{equation*}
    \logit \,P(Y(a)=1\mid U, X) = \beta_0 a + \eta(U, X),\qquad{a=0,1},
\end{equation*} where $Y(a)$ ($a=0,1$) denotes the potential outcome were the individual given the treatment status $A=a$, under standard identifiability assumptions of (a) consistency, i.e. $Y(a)=Y$ if $A=a$,  which requires there is only one version of treatment and an individual's outcome is not affected by others' treatment status~\citep{cole2009consistency}; (b) latent ignorability, i.e. $Y(a)\indep A\mid U,X$, which essentially requires $(U,X)$ to contain all confounders of A-Y association; and (c) positivity, i.e. $0<P(A=1\mid U,X)<1$ almost surely~\citep{hernan2020causal}. Under these conditions, $\beta_0$ encodes the log conditional causal odds ratio effect in every $(U,X)$ stratum.
 
 In \citet{li2022double} Assumption 3, they also considered a homogeneous risk ratio model, by which the marginal causal risk ratio can be identified. We focus on Model~\ref{mod:logit-model}  for two reasons: first, proximal identification of $\beta_0$ under Model~\ref{mod:logit-model} permits treatment-induced selection under Assumption~\ref{assump:selection} below, while identification under the homogeneous risk ratio model appears to require that $A$ has no causal effect on selection other than through $Y$; second, \citet{li2022double} invoked a rare outcome assumption in the target population for proximal identification under the homogeneous risk ratio model, which restricts the applicability of the method. Such a rare outcome assumption is not necessary for proximal identification of $\beta_0$ under Model~\ref{mod:logit-model}. As discussed in Section~\ref{supp:or-and-rr} of the Supplementary Materials, however, under standard identifiability assumptions (a)-(c) above and a rare outcome assumption akin to that assumed by \citet{li2022double}, $\beta_0$ in Model~\ref{mod:logit-model} still approximates the log marginal causal risk ratio $P(Y(1)=1) / P(Y(0)=1)$. Alternatively, as discussed in Section~\ref{supp:tnd} of the Supplementary Materials, under settings where control subjects are selected to be positive for an outcome variable that is a priori known to be not affected by the treatment, $\beta_0$ may recover the marginal causal risk ratio without invoking the rare outcome condition. Such a setting may occur, for example, in test-negative design studies of vaccine effectiveness, where the control subjects are selected to have a pre-determined infection that is not affected by the vaccine~\citep{jackson2013test,sullivan2016theoretical,chua2020use,schnitzer2022estimands}.  
 
 Although to simplify the presentation, results given in the main text focus primarily on Model~\ref{mod:logit-model}, Section~\ref{supp:eff-mod} of the Supplemental Materials extends these results to the more general model \begin{equation*}
    \logit \,P(Y=1\mid A, U, X) = \beta_0(X) A + \eta'(U, X),
\end{equation*} which explicitly accounts for effect modification of odds ratio with respect to measured covariates $X$.

In general, the conditional log odds ratio $\beta_0$ is not identifiable: first, the latent factors $U$ defining the strata are not observed; second, the data only include selected subjects with $S=1$ whilst Model~\ref{mod:logit-model} is defined in the target population.

As in \citet{li2022double} Assumption 2', we make an important assumption regarding the selection mechanism that the treatment $A$ does not modify the conditional risk ratio of selection against the outcome in every $(U, X)$ stratum. Formally, we assume:
\begin{assumption}[No effect modification by $A$ on the outcome-dependent selection]\label{assump:selection}For $a=0,1$ and some positive-valued unknown function $\xi$,
$$\dfrac{P(S=1\mid Y=1, A=a, U, X)}{P(S=1\mid Y=0, A=a, U, X)}=\xi(U, X).$$
\end{assumption}

As  mentioned above, a special case of Assumption~\ref{assump:selection} is if $A$ has no causal effect on $S$ other than through $Y$, a stronger assumption considered by \citet{li2022double} to identify the marginal causal risk ratio. As a result of Assumption~\ref{assump:selection}, the conditional odds ratio given $(U, X, S=1)$ is identical to that given $(U, X)$, which leaves only the challenge of controlling for unmeasured confounding by $U$ in the selected sample.

\subsection{Proximal Identification of $\beta_0$ in \citet{li2022double}}\label{sec:PIPW}

To detect and correct for unmeasured confounding bias, the proximal  inference framework proposes to leverage a pair of proxy measurements of unmeasured confounders~\citep{miao2018identifying,cui2020semiparametric,tchetgen2020introduction}. In this line of works, \citet{li2022double} developed proximal causal inference for $\beta_0$ under confounded outcome-dependent sampling as represented in Figure~\ref{fig:dag}(a)~\citep{jackson2013test,chua2020use}. We similarly assume proxies of the latent factors are available.
\begin{assumption}[Proximal independence conditions]\label{assump:nc} There exist a pair of proxies of the latent factor $U$: a treatment proxy (which \citet{li2022double} refers to as a ``negative control exposure''), denoted as $Z$, and an outcome proxy (which \citet{li2022double} refers to as a ``negative control outcome''), denoted as $W$, that satisfy \[\text{(a) } W\indep A\mid U, X, Y, S; \qquad \text{(b) }Z\indep (Y, W, S)\mid A, U, X.\]
\end{assumption}
Assumption~\ref{assump:nc} requires the  treatment proxy $Z$ and outcome proxy $W$ to satisfy certain conditional independences in the target population. Namely, the treatment proxy $Z$ has no direct effect on the primary outcome $Y$, selection indicator $S$, and the outcome proxy $W$; and the primary treatment $A$ has no direct effect on the outcome proxy $W$. It is crucial that both $Z$ and $W$ are both $U$-relevant, that is they carry information and therefore are associated with $U$~\citep{shi2020selective,tchetgen2020introduction}; this is formalized by Assumptions~\ref{assump:completeness-a}(a) and \ref{assump:completeness-b}(a) below.  As such, in the selected sample, an observed $A$-$W$ association after adjusting for $Y, X$, or an observed $Z$-$Y$ or $Z$-$W$ association after adjusting for $A, X$, may indicate the presence of bias due to the latent factor $U$. Figure~\ref{fig:dag} (b) and (c) show the DAGs of scenarios where Assumption~\ref{assump:nc} holds. 

As in \citet{li2022double}, we assumed the existence of a treatment confounding bridge function that connects the treatment proxy to the latent factor.

\begin{assumption}[Treatment confounding bridge function]\label{assump:trt-bridge} There exists a treatment confounding bridge function $q(A, Z, X)$ that satisfies
    \begin{equation}\label{eq:trt-bridge-u}
        E[q(A, Z, X)\mid A=a, U, X, Y=0, S=1] = 1/P(A=a\mid U, X, Y=0, S=1)
    \end{equation}
almost surely for $a=0,1$.
\end{assumption}

  Equation~\eqref{eq:trt-bridge-u} formally defines a Fredholm integral equation of the first kind. To ensure the existence and identifiability of the solution to Equation~\eqref{eq:trt-bridge-u}, we make the following completeness assumptions, similar to those assumed by \citet{li2022double} and \citet{cui2020semiparametric}: 

\begin{assumption}[Completeness]
\label{assump:completeness-a} $\quad$

\begin{enumerate}[(a)]
    \item For any square-integrable function $g$, if $E[g(U)|A, Z, X, Y=0, S=1]=0$ almost surely, then $g(U)=0$ almost surely;
    \item For any square-integrable function $g$, if $E[g(Z)|A, W, X, Y=0, S=1]=0$ almost surely, then $g(Z)=0$ almost surely.
\end{enumerate}
\end{assumption}

The completeness condition has been used to achieve identification for statistical functionals in econometrics and semiparametric statistics literature~\citep{newey2003instrumental,d2011completeness}. Essentially, Assumption~\ref{assump:completeness-a}(a) formalizes the $U$-relevance of $Z$ and requires that $Z$ has at least as much variability as $U$ in every $(A,X)$ stratum of the outcome-free subjects in the selected sample. When $U$ and $Z$ are both categorical variables, $Z$ necessarily has at least as many categories as $U$. Assumption~\ref{assump:completeness-a}(a) and the regularity conditions discussed in Section~\ref{supp:bridge-existence} of the Supplementary Materials, together constitute a sufficient condition for Assumption~\ref{assump:trt-bridge}, i.e. the existence of $q$. 

Similarly, Assumption~\ref{assump:completeness-a}(b) roughly requires that the outcome proxy $W$ is at least as variable as the treatment proxy $Z$, and is a sufficient condition under which $q(A, Z, X)$ can be uniquely identified from the selected sample, as stated in the theorem below:

\begin{theorem}[Identification of $q(A, Z, X)$; \citet{li2022double} Theorem 2']\label{thm:trt-bridge}
Under Assumptions~\ref{assump:nc} and \ref{assump:trt-bridge}, any treatment confounding bridge function $q(A, Z, X)$ satisfies the moment condition
\begin{equation}\label{eq:trt-bridge}
    E\left\{q(a, Z, X)\mid A=a, W, X,Y=0, S=1\right\}=1/P(A=a\mid W, X, Y=0, S=1).
\end{equation}If Assumption~\ref{assump:completeness-a}(b) holds in addition, then the function $q(\cdot)$ is unique and can be identified as the unique solution to Equation~\eqref{eq:trt-bridge}.
\end{theorem}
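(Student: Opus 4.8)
The plan is to derive the moment condition~\eqref{eq:trt-bridge} from the defining equation~\eqref{eq:trt-bridge-u} of Assumption~\ref{assump:trt-bridge} by ``integrating out'' the latent factor $U$ against its conditional law given $(A, W, X, Y=0, S=1)$, using the proximal independence conditions of Assumption~\ref{assump:nc} to make the bookkeeping go through, and then to invoke the completeness Assumption~\ref{assump:completeness-a}(b) to upgrade this to the uniqueness claim.

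For the moment condition, I would fix $a\in\{0,1\}$, work throughout within the event $\{Y=0, S=1\}$, and start from the tower property,
\begin{equation*}
E\{q(a,Z,X)\mid A=a, W, X, Y=0, S=1\} = E\bigl[\,E\{q(a,Z,X)\mid A=a, U, W, X, Y=0, S=1\}\,\bigm|\, A=a, W, X, Y=0, S=1\bigr].
\end{equation*}
Assumption~\ref{assump:nc}(b) gives $Z\indep(Y,W,S)\mid A, U, X$, which by the weak union property of conditional independence yields $Z\indep W\mid A, U, X, Y=0, S=1$; hence the inner conditional expectation does not depend on $W$ and collapses to $E\{q(a,Z,X)\mid A=a, U, X, Y=0, S=1\}$, which by Assumption~\ref{assump:trt-bridge} equals $1/P(A=a\mid U, X, Y=0, S=1)$. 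It then remains to show
\begin{equation*}
E\biggl\{\frac{1}{P(A=a\mid U, X, Y=0, S=1)}\,\biggm|\, A=a, W, X, Y=0, S=1\biggr\} = \frac{1}{P(A=a\mid W, X, Y=0, S=1)}.
\end{equation*}
For this I would apply Bayes' rule to the conditional density of $U$ given $(A=a, W, X, Y=0, S=1)$: by Assumption~\ref{assump:nc}(a) we have $W\indep A\mid U, X, Y=0, S=1$, so this density equals $P(A=a\mid U, X, Y=0, S=1)/P(A=a\mid W, X, Y=0, S=1)$ times the conditional density of $U$ given $(W, X, Y=0, S=1)$; integrating $1/P(A=a\mid U, X, Y=0, S=1)$ against it, the propensity factor cancels and what remains is the integral of a conditional density, equal to one. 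This step uses the usual regularity and positivity conditions (existence of the relevant densities, propensity scores bounded away from zero) alluded to in Section~\ref{supp:bridge-existence}.

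For the uniqueness claim, I would suppose $q_1$ and $q_2$ both satisfy~\eqref{eq:trt-bridge}, subtract the two moment conditions to get $E\{q_1(a,Z,X)-q_2(a,Z,X)\mid A=a, W, X, Y=0, S=1\}=0$ almost surely for $a=0,1$, and then apply the completeness Assumption~\ref{assump:completeness-a}(b)---within each stratum of $(A,X)$, to the square-integrable function $q_1(a,Z,X)-q_2(a,Z,X)$ of $Z$---to conclude $q_1=q_2$ almost surely. Since the first part of the argument shows that the treatment confounding bridge function posited in Assumption~\ref{assump:trt-bridge} is itself a solution of~\eqref{eq:trt-bridge}, it is then the unique such solution.

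I expect the main obstacle to be not any substantial computation but the careful handling of the conditional-independence manipulations: picking out exactly the right consequence of Assumption~\ref{assump:nc} at each stage (the weak union of part (b) for the tower-property collapse, part (a) for the change-of-measure step), tracking the ``almost surely'' qualifiers, and being explicit about the implicit regularity and positivity conditions under which the Bayes-rule manipulation and the square-integrability requirement of completeness are legitimate.
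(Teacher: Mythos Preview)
Your proposal is correct and follows essentially the same argument as the paper's proof: tower over $U$, use Assumption~\ref{assump:nc}(b) to drop $W$ from the inner expectation, apply Assumption~\ref{assump:trt-bridge}, then use Bayes' rule together with Assumption~\ref{assump:nc}(a) so that the propensity factor cancels; for uniqueness, subtract two solutions and invoke Assumption~\ref{assump:completeness-a}(b). The only cosmetic difference is that the paper additionally records $E\{q\mid A,U,X,Y=0,S=1\}=E\{q\mid A,U,X\}$ before applying Assumption~\ref{assump:trt-bridge}, a step you correctly recognize as unnecessary given how the bridge equation~\eqref{eq:trt-bridge-u} is stated.
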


Leveraging the proxies, \citet{li2022double} established the following identification result for $\beta_0$.

\begin{theorem}[Identification of $\beta_0$; \citet{li2022double} Theorem 1']\label{thm:PIPW}

Under Model~\ref{mod:logit-model}, if Assumptions~\ref{assump:selection}-\ref{assump:trt-bridge} hold, then  the conditional log odds ratio parameter $\beta_0$ solves the moment equation 
\begin{equation}\label{eq:ee-PIPW}
    E\left\{(-1)^{1-A}c(X)Yq(A, Z, X)e^{-\beta A}\mid S=1\right\}=0
\end{equation}
where $c(X)$ is an arbitrary square-integrable unidimensional function that satisfies
$$E\left\{c(X)I(A=a)Yq(A, Z, X)\mid S=1\right\}\neq 0,\quad a=0,1.$$

\end{theorem}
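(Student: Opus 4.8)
The plan is to show directly that $\beta=\beta_0$ is a root of $m(\beta):=E\{(-1)^{1-A}c(X)Yq(A,Z,X)e^{-\beta A}\mid S=1\}$. Since $Y$ is binary, $Y=I(Y=1)$, and splitting on the value of $A$ gives $m(\beta)=e^{-\beta}K_1-K_0$ with $K_a:=E\{c(X)I(A=a)I(Y=1)q(a,Z,X)\mid S=1\}$; the nondegeneracy hypothesis is precisely $K_0\neq0$ and $K_1\neq0$, which also forces $m$ to be strictly monotone in $\beta$, so it will be enough to show $e^{-\beta_0}K_1=K_0$ and then the root is moreover unique. Throughout I take the relevant conditional laws to be well defined, i.e.\ $0<P(S=1\mid Y,A,U,X)$ and $0<P(A=1\mid U,X,S=1)<1$ a.s.

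First I would push Model~\ref{mod:logit-model} through the selection. By Bayes' rule, for $a=0,1$,
\[
\frac{P(Y=1\mid A=a,U,X,S=1)}{P(Y=0\mid A=a,U,X,S=1)}=\frac{P(S=1\mid Y=1,A=a,U,X)}{P(S=1\mid Y=0,A=a,U,X)}\cdot\frac{P(Y=1\mid A=a,U,X)}{P(Y=0\mid A=a,U,X)},
\]
which by Assumption~\ref{assump:selection} and Model~\ref{mod:logit-model} equals $\xi(U,X)\,e^{\beta_0 a+\eta(U,X)}$. Hence $\logit P(Y=1\mid A,U,X,S=1)=\beta_0 A+\eta^{\ast}(U,X)$ with $\eta^{\ast}:=\eta+\log\xi$; in particular the selected-sample outcome odds at treatment level $a$ equal $e^{\beta_0 a}e^{\eta^{\ast}(U,X)}$.

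Next I would evaluate $K_a$ by conditioning on $(U,X)$ inside $\{S=1\}$ and pulling out $c(X)$, so that $K_a=E\{c(X)\,\Delta_a(U,X)\mid S=1\}$ with $\Delta_a(U,X)=E\{q(a,Z,X)I(Y=1)\mid A=a,U,X,S=1\}\,P(A=a\mid U,X,S=1)$. Assumption~\ref{assump:nc}(b) gives $Z\indep(Y,S)\mid A,U,X$, hence (weak union, then conditioning on $S=1$) $Z\indep Y\mid A,U,X,S=1$. Since $q(a,Z,X)$ is a function of $(Z,X)$, this lets me factor $E\{q(a,Z,X)I(Y=1)\mid A=a,U,X,S=1\}=E\{q(a,Z,X)\mid A=a,U,X,S=1\}\,P(Y=1\mid A=a,U,X,S=1)$, and the same independence turns the defining equation~\eqref{eq:trt-bridge-u} of $q$ (stated at $Y=0$) into $E\{q(a,Z,X)\mid A=a,U,X,S=1\}=E\{q(a,Z,X)\mid A=a,U,X,Y=0,S=1\}=1/P(A=a\mid U,X,Y=0,S=1)$. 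A single Bayes step gives $P(A=a\mid U,X,S=1)/P(A=a\mid U,X,Y=0,S=1)=P(Y=0\mid U,X,S=1)/P(Y=0\mid A=a,U,X,S=1)$, so
\[
\Delta_a(U,X)=P(Y=0\mid U,X,S=1)\cdot\frac{P(Y=1\mid A=a,U,X,S=1)}{P(Y=0\mid A=a,U,X,S=1)}=P(Y=0\mid U,X,S=1)\,e^{\beta_0 a}e^{\eta^{\ast}(U,X)}
\]
by the previous paragraph. Therefore $K_a=e^{\beta_0 a}\,E\{c(X)P(Y=0\mid U,X,S=1)e^{\eta^{\ast}(U,X)}\mid S=1\}$, whence $K_1=e^{\beta_0}K_0$, i.e.\ $e^{-\beta_0}K_1=K_0$ and $m(\beta_0)=0$, as claimed; combined with the monotonicity from the nondegeneracy condition, $\beta_0$ is the unique root.

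The argument is essentially bookkeeping with iterated expectations; the only steps that need care are (i) checking that conditioning on $\{S=1\}$ preserves the independences used — which is exactly why Assumption~\ref{assump:nc}(b) lists $S$ inside the conditioning tuple — and (ii) reconciling the fact that $q$ is defined by a moment condition at $Y=0$ while the estimating equation carries the factor $Y$: the $Z\indep Y$ reduction bridges this, and the leftover discrepancy between $P(A=a\mid\cdots,Y=0,\cdots)$ and $P(A=a\mid\cdots)$ is precisely what manufactures the $e^{\beta_0 a}$ factor through Bayes' rule and the logistic form of Model~\ref{mod:logit-model}. I do not foresee a genuine obstacle beyond carefully tracking conditioning sets, but I would want to state the positivity conditions above explicitly so that every conditional expectation appearing in the displays is legitimate.
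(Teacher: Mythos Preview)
Your proof is correct and takes essentially the same approach as the paper: both first show (via Assumption~\ref{assump:selection} and Bayes' rule) that the conditional odds ratio in the selected sample remains $e^{\beta_0}$, then evaluate $K_a=E\{c(X)I(A=a)Yq(a,Z,X)\mid S=1\}$ by iterated conditioning on $(U,X)$, using $Z\indep Y\mid A,U,X,S$ from Assumption~\ref{assump:nc}(b) together with the defining equation~\eqref{eq:trt-bridge-u} to conclude $K_1=e^{\beta_0}K_0$. The only difference is organizational---the paper isolates the first step as a standalone lemma (reused for the POR and PDR proofs) and writes the resulting integrand as $\tfrac{P(A=a\mid U,X,Y=1,S=1)}{P(A=a\mid U,X,Y=0,S=1)}P(Y=1\mid U,X,S=1)$, which is exactly your $\Delta_a(U,X)$ after one more Bayes step.
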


Equation~\eqref{eq:ee-PIPW} admits a closed-form
solution for $\beta_0$:
\begin{equation}\label{eq:beta-PIPW}\beta_0 = \log\left(\dfrac{E\left\{c(X)I(A=1)Yq(A, Z, X)\mid S=1\right\}}{E\left\{c(X)I(A=0)Yq(A, Z, X)\mid S=1\right\}}\right).\end{equation}

Unique identification of $q$ by Assumption~\ref{assump:completeness-a}(b) is necessary. Although Equation~\eqref{eq:beta-PIPW} holds for any treatment confounding bridge function $q$ that satisfies Equation~\eqref{eq:trt-bridge-u}, $q$ can only be identified in the selected sample through solving Equation~\eqref{eq:trt-bridge}. If Equation~\eqref{eq:trt-bridge} has multiple solutions, there is no guarantee that a solution to Equation~\eqref{eq:trt-bridge} is also a solution to Equation~\eqref{eq:trt-bridge-u}.

\subsection{New Proximal Identification Results}
From Theorem~\ref{thm:PIPW}(b), it is clear that the treatment confounding bridge function $q(A,Z,X)$ plays a crucial role for identifying $\beta_0$. However, in case Assumption~\ref{assump:trt-bridge} does not hold, or Assumption~\ref{assump:trt-bridge} holds but Assumption~\ref{assump:completeness-a}(b) does not hold, then the solution to Equation~\eqref{eq:trt-bridge} in the selected sample may not be unique and satisfy  Equation~\eqref{eq:trt-bridge-u} that defines a treatment confounding bridge function. As a result, the parameter $\beta_0$  identified by Equation~\eqref{eq:beta-PIPW}, where $q$ is a solution to Equation~\eqref{eq:trt-bridge}, may not have a causal interpretation. In this section, we establish a new identification result for $\beta_0$ that does not rely on $q(A, Z, X)$. We instead define an outcome confounding bridge function as below:
\begin{assumption}[Outcome confounding bridge function]\label{assump:outcome-bridge}There exists an outcome confounding bridge function $h(A, W, X)$ that satisfies
\begin{equation}
    E\{h(A,W,X)\mid A, U, X,Y=0, S=1\} = \frac{P(Y=1\mid A,U,X,S=1)}{P(Y=0\mid A,U,X, S=1)}\label{eq:h-odds-u}
\end{equation}
almost surely.
\end{assumption}
\citet{miao2018confounding} and \citet{cui2020semiparametric} previously proposed proximal identification of average treatment effect using an outcome confounding bridge function. Our definition of $h(A, W, X)$ by Equation~\eqref{eq:h-odds-u} is different from those in previous works of proximal causal inference, in that the conditional expectation of our $h(A,W,X)$ is defined to equal the conditional odds of the outcome, whilst the conditional expectation of the outcome confounding bridge function in previous works was defined to equal the conditional mean outcome. Theorem~\ref{thm:POR} below suggests that the standard outcome bridge function of prior literature would indeed  fail to identify $\beta_0$ while our alternative definition yields the desired identification result.

 In contrast with Assumption~\ref{assump:completeness-a}, we make the following completeness assumption for the existence and identifiability of the outcome confounding bridge function $h$.

\begin{assumption}[Completeness]
\label{assump:completeness-b} $\quad$

\begin{enumerate}[(a)]
    \item For any square-integrable function $g$, if $E[g(U)|A, W, X, Y=0, S=1]=0$ almost surely, then $g(U)=0$ almost surely;
    \item For any square-integrable function $g$, if $E[g(W)|A, Z, X, Y=0, S=1]=0$ almost surely, then $g(W)=0$ almost surely.
\end{enumerate}
\end{assumption}

 Similar to Assumption~\ref{assump:completeness-a}(a), Assumption~\ref{assump:completeness-b}(a) essentially requires that $W$ has sufficient variability relative to $U$ in every $(A,X)$ stratum of the outcome-free subjects in the selected sample, and constitutes a sufficient condition for Assumption~\ref{assump:outcome-bridge}, i.e. the existence of $h$, together with other regularity conditions, as discussed in Section~\ref{supp:bridge-existence} of the Supplementary Materials. Assumption~\ref{assump:completeness-b}(b) requires the opposite of Assumption~\ref{assump:completeness-a}(b), that the treatment proxy $Z$ is sufficiently variable relative to the outcome proxy $W$, and is a sufficient condition under which $h(A, W, X)$ can be uniquely identified from the selected sample, as stated in Theorem~\ref{thm:outcome-bridge} below.

\begin{theorem}[Identification of $h(A, W, X)$]\label{thm:outcome-bridge}
Under Assumptions~\ref{assump:nc} and \ref{assump:outcome-bridge}, any outcome confounding bridge function $h(A, W, X)$ satisfies the moment condition
    \begin{equation}\label{eq:h-odds}
         E\left\{h(A, W, X)\mid A, Z, X, Y=0, S=1\right\} = \dfrac{P(Y=1\mid A, Z, X, S=1)}{P(Y=0\mid A, Z, X,S=1)}.
    \end{equation}
    If Assumption~\ref{assump:completeness-b}(b) holds in addition, then the function $h(\cdot)$ is unique and can be identified as the unique solution to Equation~\eqref{eq:h-odds}.
\end{theorem}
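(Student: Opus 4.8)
The plan is to establish the two assertions separately: first, that every outcome confounding bridge function, i.e.\ every solution $h$ of the defining equation~\eqref{eq:h-odds-u}, satisfies the observable moment condition~\eqref{eq:h-odds}; and second, that Assumption~\ref{assump:completeness-b}(b) forces the solution of~\eqref{eq:h-odds} to be unique, which combined with the existence guaranteed by Assumption~\ref{assump:outcome-bridge} identifies $h$.

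For the first assertion I would fix a stratum $A=a$, $X=x$, $S=1$ and insert $U$ into the conditioning set by iterated expectations, $E\{h(A,W,X)\mid A,Z,X,Y=0,S=1\}=E[\,E\{h(A,W,X)\mid A,Z,U,X,Y=0,S=1\}\mid A,Z,X,Y=0,S=1\,]$. From Assumption~\ref{assump:nc}(b), $Z\indep(Y,W,S)\mid A,U,X$, I would extract two consequences: (i) $Z\indep W\mid A,U,X,Y,S$, so the inner expectation reduces to $E\{h(A,W,X)\mid A,U,X,Y=0,S=1\}$, which by~\eqref{eq:h-odds-u} equals the conditional odds $O(U):=P(Y=1\mid A,U,X,S=1)/P(Y=0\mid A,U,X,S=1)$; and (ii) $Z\indep Y\mid A,U,X,S$. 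Then I would write the conditional law of $U$ given $\{A=a,Z=z,X=x,Y=0,S=1\}$ via Bayes' rule and use (ii) to replace $P(Y=0\mid U,a,z,x,S=1)$ in the density by $P(Y=0\mid U,a,x,S=1)$. The crucial cancellation is that $O(U)\,P(Y=0\mid U,a,x,S=1)=P(Y=1\mid U,a,x,S=1)$, which by (ii) again equals $P(Y=1\mid U,a,z,x,S=1)$; integrating this against $p(U\mid a,z,x,S=1)$ gives $P(Y=1\mid a,z,x,S=1)$, so the whole expression collapses to $P(Y=1\mid a,z,x,S=1)/P(Y=0\mid a,z,x,S=1)$, which is~\eqref{eq:h-odds}.

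For uniqueness I would take two solutions $h_1,h_2$ of~\eqref{eq:h-odds}; their difference obeys $E\{(h_1-h_2)(A,W,X)\mid A,Z,X,Y=0,S=1\}=0$ almost surely. Fixing $A=a$ and $X=x$ and applying Assumption~\ref{assump:completeness-b}(b) to $w\mapsto h_1(a,w,x)-h_2(a,w,x)$ yields $h_1(a,w,x)=h_2(a,w,x)$ for almost every $w$; since this holds for almost every $(a,x)$, $h_1=h_2$ almost surely. Hence~\eqref{eq:h-odds} has at most one solution; by the first part together with Assumption~\ref{assump:outcome-bridge} it has exactly one, and every outcome confounding bridge function must coincide with it.

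The step I expect to be the main obstacle is the bookkeeping in the first assertion: one must take care that $S=1$ is carried through every conditioning set, that conditioning on $Y=0$ is handled via the Bayes'-rule density rather than assumed to commute with the independences in Assumption~\ref{assump:nc}, and that the consequences (i)--(ii) are read off from $Z\indep(Y,W,S)\mid A,U,X$ in the correct conditioning sets. Modulo that bookkeeping the argument is routine and parallels the proof of Theorem~\ref{thm:trt-bridge} with the roles of $Z$ and $W$ interchanged and the conditional odds of $Y$ playing the role of the inverse conditional probability of treatment.
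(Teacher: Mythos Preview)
Your proof is correct but takes a somewhat different route from the paper's for the first assertion. You work directly with the odds form, conditioning on $Y=0$ throughout and using Bayes' rule on the density $p(U\mid A,Z,X,Y=0,S=1)$ to produce the cancellation $O(U)\,P(Y=0\mid U,A,X,S=1)=P(Y=1\mid U,A,X,S=1)$; this mirrors the paper's proof of Theorem~\ref{thm:trt-bridge} with the roles of $Z$ and $W$ swapped, exactly as you anticipated. The paper instead first rewrites~\eqref{eq:h-odds-u} in the equivalent ``unconditional-on-$Y$'' form $E\{(1-Y)h(a,W,X)-Y\mid A=a,U,X,S=1\}=0$, and then passes from $U$ to $Z$ by a single iterated-expectation step using $Z\indep(Y,W)\mid A,U,X,S$, with no Bayes'-rule bookkeeping at all; converting back to the odds form at the end recovers~\eqref{eq:h-odds}. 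The paper's route is shorter and, as a byproduct, produces the intermediate moment $E\{(1-Y)h(a,W,X)\mid A=a,Z,X,S=1\}=E\{Y\mid A=a,Z,X,S=1\}$, which is precisely the form used downstream in Theorem~\ref{thm:q-h-ident} and in the estimating equation~\eqref{eq:h-estimation}. Your uniqueness argument is identical to the paper's.
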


We introduce the new identification result in Theorem~\ref{thm:POR} below:

\begin{theorem}[Identification of $\beta_0$]\label{thm:POR}

Under Model~\ref{mod:logit-model}, if Assumptions~\ref{assump:selection}, \ref{assump:nc} and \ref{assump:outcome-bridge} hold, then  the conditional log odds ratio parameter $\beta_0$ solves the moment equation 
\begin{equation}\label{eq:ee-POR}
    E\left[c(X)(1-Y)\left\{h(1, W, X)e^{-\beta_0} - h(0, W, X)\right\}\mid S=1\right]=0;
\end{equation}
where $c(X)$ is an arbitrary square-integrable unidimensional function that satisfies
$$E\left\{c(X)(1-Y)h(a, W, X)\mid S=1\right\}\neq 0,\quad a=0,1.$$

\end{theorem}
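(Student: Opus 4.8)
The plan is to push the population-level odds-ratio restriction in Model~\ref{mod:logit-model} through the selection mechanism, reduce it to a conditional moment restriction indexed by $(U,X)$ within the selected, outcome-free stratum, use the bridge function to re-express everything in terms of the observable proxy $W$, and finally integrate out the unobserved $U$. First I would rewrite the selection-adjusted conditional odds of $Y$ by Bayes' rule,
\[
\frac{P(Y=1\mid A, U, X, S=1)}{P(Y=0\mid A, U, X, S=1)}
= \frac{P(S=1\mid Y=1, A, U, X)}{P(S=1\mid Y=0, A, U, X)}\,\frac{P(Y=1\mid A, U, X)}{P(Y=0\mid A, U, X)},
\]
and note that by Assumption~\ref{assump:selection} the first factor equals $\xi(U,X)$ and by Model~\ref{mod:logit-model} the second equals $e^{\beta_0 A+\eta(U,X)}$. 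Writing $r(a,U,X)$ for the left-hand side with $A=a$, this gives the key relation $r(1,U,X)=e^{\beta_0}\,r(0,U,X)$ almost surely; i.e.\ the conditional odds ratio is preserved in the selected sample, as anticipated in the discussion after Assumption~\ref{assump:selection}.

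Next I would bring in the outcome confounding bridge function. Evaluating Equation~\eqref{eq:h-odds-u} at $A=1$ and at $A=0$ gives $E\{h(a,W,X)\mid A=a,U,X,Y=0,S=1\}=r(a,U,X)$ for $a=0,1$. The two conditioning events $\{A=1\}$ and $\{A=0\}$ differ, so before differencing I would align them using Assumption~\ref{assump:nc}(a): since $W\indep A\mid U,X,Y,S$, the conditional law of $W$ given $(A=a,U,X,Y=0,S=1)$ does not depend on $a$, so $E\{h(a,W,X)\mid U,X,Y=0,S=1\}=r(a,U,X)$ for $a=0,1$. Combining with $r(1,U,X)=e^{\beta_0}r(0,U,X)$ yields
\[
E\bigl\{h(1,W,X)e^{-\beta_0}-h(0,W,X)\,\big|\,U,X,Y=0,S=1\bigr\}
= e^{-\beta_0}r(1,U,X)-r(0,U,X)=0 \quad\text{a.s.}
\]
Multiplying by the bounded function $c(X)$ (which may be carried inside the conditional expectation since it is a function of $X$ alone), taking the conditional expectation over $(U,X)$ given $(Y=0,S=1)$ by iterated expectations, and then multiplying by $P(Y=0\mid S=1)$ and writing $1-Y=I(Y=0)$ produces exactly Equation~\eqref{eq:ee-POR}; hence $\beta_0$ solves it. I would close by observing that under the stated non-degeneracy condition the left-hand side of \eqref{eq:ee-POR}, as a function of $\beta$, is $e^{-\beta}E\{c(X)(1-Y)h(1,W,X)\mid S=1\}-E\{c(X)(1-Y)h(0,W,X)\mid S=1\}$, which is strictly monotone in $\beta$, so $\beta_0$ is its unique root and equals $\log\bigl(E\{c(X)(1-Y)h(1,W,X)\mid S=1\}/E\{c(X)(1-Y)h(0,W,X)\mid S=1\}\bigr)$ (the two expectations automatically share a sign, since they differ by the positive factor $e^{-\beta_0}$).

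The calculations here are elementary; the step demanding the most care is the invocation of Assumption~\ref{assump:nc}(a) to replace the conditioning events $\{A=1\}$ and $\{A=0\}$ by a common event, since without it the bridge-function identities at the two treatment levels cannot be combined into a single difference. Two secondary points to handle carefully are that the argument uses only the \emph{existence} of some $h$ satisfying Assumption~\ref{assump:outcome-bridge}, not its uniqueness (so Assumption~\ref{assump:completeness-b}(b) and the $Z$-proxy in Assumption~\ref{assump:nc}(b) are not needed for this particular conclusion, only for estimating $h$ from data via Theorem~\ref{thm:outcome-bridge}), and that all ``almost surely'' qualifiers refer to the law of $(U,X)$ in the selected outcome-free stratum, so I would note that the positivity conditions implicit in the setup guarantee the relevant conditional expectations are well defined.
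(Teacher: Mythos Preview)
Your argument is correct and uses the same essential ingredients as the paper: Assumption~\ref{assump:selection} to show the conditional odds ratio is preserved after selection, Assumption~\ref{assump:nc}(a) to drop $A$ from the conditioning set on $W$, and the bridge-function identity~\eqref{eq:h-odds-u} to replace the unobservable odds $r(a,U,X)$ by $E\{h(a,W,X)\mid U,X,Y=0,S=1\}$. The only organizational difference is that the paper routes through an auxiliary quantity $\theta_{ac}=E\{c(X)\,P(A=a\mid U,X,Y=1,S=1)/P(A=a\mid U,X,Y=0,S=1)\,P(Y=1\mid U,X,S=1)\mid S=1\}$ (its Lemma~\ref{lemma:estimand}), first proving $\beta_0=\log(\theta_{1c}/\theta_{0c})$ and then showing $\theta_{ac}=E\{c(X)(1-Y)h(a,W,X)\mid S=1\}$, whereas you work directly with the outcome odds $r(a,U,X)$ and never introduce the treatment-probability ratio; this makes your presentation a bit more streamlined for this theorem, while the paper's lemma is reusable across its PIPW, POR, and PDR proofs.
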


The integral equation~\eqref{eq:ee-POR} also admits a closed-form solution

\begin{equation}\label{eq:beta-POR}\beta_0 = \log\left(\dfrac{E\left\{c(X)(1-Y)h(1, W, X)\mid S=1\right\}}{E\left\{c(X)(1-Y)h(0, W, X)\mid S=1\right\}}\right).\end{equation}

\subsection{A doubly robust closed-form expression for $\beta_0$}

Until now, identification of  $\beta_0$ has required the existence and identification of either a treatment confounding bridge function or an outcome confounding bridge function. 
In this section, we present a closed-form expression for $\beta_0$, which has a desirable doubly-robustness property in the sense that it only requires one of the two confounding bridge functions to exist and be identified.  
\begin{theorem}[A doubly-robust closed-form expression for $\beta_0$]\label{thm:PDR}
 Under Model~\ref{mod:logit-model} and Assumptions~\ref{assump:selection} and \ref{assump:nc}, we have

\begin{align}
\beta_0 &= \log\bigg(E\bigg\{c(X)\bigg[I(A=1)q(A, Z, X)\left\{Y - (1-Y)h(A, W, X)\right\}+\nonumber\\
&\qquad (1-Y)h(1, W, X)\bigg]\mid S=1\bigg\} /E\bigg\{c(X)\bigg[I(A=0)q(A, Z, X)\{Y - \nonumber\\
&\qquad (1-Y)h(A, W, X)\}+(1-Y)h(0, W, X)\bigg]\mid S=1\bigg\}\bigg),\label{eq:beta-PDR}
\end{align}
if either $q$ is a solution to Equation~\eqref{eq:trt-bridge-u}, or $h$ is a solution to Equation~\eqref{eq:h-odds-u}. Here $c(X)$ is an arbitrary square-integrable real-valued function satisfying
\begin{align*}
    &E\bigg\{c(X)\bigg[I(A=a)q(A, Z, X)\{Y -(1-Y)h(a, W, X)\}+\\
&\qquad (1-Y)h(a, W, X)\bigg]\mid S=1\bigg\}\neq 0
\end{align*}
for $a = 0,1$.
\end{theorem}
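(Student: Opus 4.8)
The plan is to establish two ``identifying'' expressions for $\beta_0$ that are algebraically identical to the right-hand side of Equation~\eqref{eq:beta-PDR} but hold under the two different nuisance-model scenarios, and then observe that since the formula is the same, it must return $\beta_0$ whenever at least one scenario holds. First I would rewrite the numerator and denominator of~\eqref{eq:beta-PDR} by splitting each into a ``$q$-part'' and an ``$h$-part'': for $a=0,1$, define $\Psi_a=E\{c(X)[I(A=a)q(A,Z,X)\{Y-(1-Y)h(a,W,X)\}+(1-Y)h(a,W,X)]\mid S=1\}$, so that~\eqref{eq:beta-PDR} asserts $\beta_0=\log(\Psi_1/\Psi_0)$.

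Next I would analyze the case where $q$ solves Equation~\eqref{eq:trt-bridge-u}. The key step is to show $\Psi_a=E\{c(X)I(A=a)Yq(A,Z,X)\mid S=1\}$ for each $a$, i.e.\ the $h$-dependent terms cancel. To see this, condition on $(U,X,S=1)$ inside the difference $E\{c(X)(1-Y)h(a,W,X)[\,1-I(A=a)q(a,Z,X)\,]\mid S=1\}$ (after noting $I(A=a)q(A,Z,X)=I(A=a)q(a,Z,X)$). Using Assumption~\ref{assump:nc}(a) to remove the dependence of $W$ on $A$ given $(U,X,Y=0,S=1)$, and then the defining property~\eqref{eq:trt-bridge-u} of $q$, together with the fact that $P(A=a\mid U,X,Y=0,S=1)\,E[q(a,Z,X)\mid A=a,U,X,Y=0,S=1]=1$, one gets that the conditional expectation of $(1-Y)h(a,W,X)\{1-I(A=a)q(a,Z,X)\}$ given $(U,X,S=1)$ vanishes. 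Hence $\Psi_a$ reduces to exactly the numerator/denominator in Equation~\eqref{eq:beta-PIPW}, and Theorem~\ref{thm:PIPW} (or its closed-form~\eqref{eq:beta-PIPW}) gives $\log(\Psi_1/\Psi_0)=\beta_0$.

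Then I would treat the symmetric case where $h$ solves Equation~\eqref{eq:h-odds-u}. Here the goal is to show the $q$-dependent terms collapse: $E\{c(X)I(A=a)q(A,Z,X)\{Y-(1-Y)h(a,W,X)\}\mid S=1\}=0$, so that $\Psi_a=E\{c(X)(1-Y)h(a,W,X)\mid S=1\}$, which by~\eqref{eq:beta-POR} (Theorem~\ref{thm:POR}) again yields $\log(\Psi_1/\Psi_0)=\beta_0$. For this, condition on $(A=a,Z,X,S=1)$: by Assumption~\ref{assump:nc}(b), $Z$ is independent of $(Y,W)$ given $(A,U,X)$, which lets one pass from conditioning on $(A,U,X,S=1)$ to conditioning on $(A,Z,X,S=1)$; combined with Model~\ref{mod:logit-model} and Assumption~\ref{assump:selection} (which transfers the conditional odds from the target population to the $S=1$ population and makes it $\beta_0 a+$ a function of $(U,X)$), and the defining property~\eqref{eq:h-odds-u} of $h$, one finds $E[Y-(1-Y)h(a,W,X)\mid A=a,U,X,S=1]=0$ almost surely, whence the full expectation (weighted by the bounded $q$ and $c$) is zero. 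Since $q$ need not be a valid bridge function in this scenario, some care is needed that $q(A,Z,X)$ is merely an integrable ``plug-in'' and the cancellation happens before $q$ is integrated against.

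The main obstacle I anticipate is the bookkeeping in passing between conditioning sets — specifically, justifying that one may condition on $(U,X,S=1)$ (respectively $(A,Z,X,S=1)$) and invoke the bridge-function identities~\eqref{eq:trt-bridge-u} and~\eqref{eq:h-odds-u}, which are themselves stated conditionally on $Y=0$, while the expressions in~\eqref{eq:beta-PDR} involve $(1-Y)$ and $Y$ factors rather than explicit conditioning on $\{Y=0\}$. The trick is that multiplying by $(1-Y)$ is equivalent to restricting to $Y=0$ up to the factor $P(Y=0\mid\cdot)$, and the odds-ratio structure of Model~\ref{mod:logit-model} lets one re-express $P(Y=1\mid\cdot)$ in terms of $P(Y=0\mid\cdot)$ times $e^{\beta_0 a}$ times an $(U,X)$-function that cancels in the ratio $\Psi_1/\Psi_0$; carrying this cancellation through cleanly, without circularity between the two cases, is where the real work lies. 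Everything else is the routine iterated-expectations manipulation already rehearsed in the proofs of Theorems~\ref{thm:PIPW} and~\ref{thm:POR}.
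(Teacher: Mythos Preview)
Your proposal is correct and follows essentially the same two-case decomposition as the paper: show that the $h$-terms cancel when $q$ is a valid bridge (reducing $\Psi_a$ to the PIPW form~\eqref{eq:beta-PIPW}) and that the $q$-terms cancel when $h$ is valid (reducing to the POR form~\eqref{eq:beta-POR}), then invoke Theorems~\ref{thm:PIPW} and~\ref{thm:POR}. The only cosmetic difference is that in case~(a) you condition on $(U,X,Y=0,S=1)$ and use~\eqref{eq:trt-bridge-u} directly, whereas the paper conditions on $(W,X,Y=0,S=1)$ and uses the derived moment equation~\eqref{eq:trt-bridge}; both routes give the same cancellation, and your anticipated bookkeeping obstacle is resolved exactly as you outline with no circularity between the two cases.
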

In Equation~\eqref{eq:beta-PDR}, setting $h(A,Z,X)=0$ recovers the closed-form solution~\eqref{eq:beta-PIPW}, while setting $q(A, Z, X)=0$ gives the closed-form solution~\eqref{eq:beta-POR}, illustrating the double robustness property of the closed-form expression~\eqref{eq:beta-PDR}.

In Section~\ref{supp:assump} of the Supplementary Materials, we compare our assumptions and results with those in~\citet{li2022double}, and summarise the additional conditions under which $\beta_0$ identifies the marginal causal risk ratio.

\subsection{Estimation and large sample inference}\label{sec:estimation}

Denote data for the $i$th subject in the selected sample as $O_i=(A_i, Y_i, Z_i, W_i, X_i)$, $i=1,\dots, n$. As suggested by Equations~\eqref{eq:beta-PIPW}, \eqref{eq:beta-POR} or \eqref{eq:beta-PDR}, consistent estimators for $q(A, Z, X)$ and $h(A, W, X)$ can be used to construct estimators for $\beta_{0}$ following the standard plug-in principle~\citep{bickel2003nonparametric}. It remains to estimate the two confounding bridge functions. Below, we present crucial moment conditions for the identification of the confounding bridge functions.

\begin{theorem}[Moment conditions for $q(A, Z, X)$ and $h(A, W, X)$]\label{thm:q-h-ident}
Under Assumptions~\ref{assump:nc}, \ref{assump:trt-bridge} and \ref{assump:outcome-bridge}, the confounding bridge functions $q(\cdot)$ and $h(\cdot)$ satisfy 
\begin{align}
    &E\left[(1-Y)\{\kappa_1(A, W, X)q(A, Z, X) - \kappa_1(1, W, X) - \kappa_1(0, W, X))\}\mid S=1\right]=0\label{eq:q-identification}\\
    &E\left[\kappa_2(A, Z, X)\{(1-Y)h(A, W, X) - Y\}\mid S=1\right]=0\label{eq:h-identification}
\end{align}
where $\kappa_1(\cdot)$ and $\kappa_2(\cdot)$ are arbitrary square-integrable functions.
\end{theorem}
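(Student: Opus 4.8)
The plan is to establish the two moment conditions separately, each as a short consequence of the corresponding bridge-function identification result already proved, using only the tower property and Bayes' rule within the selected subpopulation $\{S=1\}$; Assumption~\ref{assump:nc} enters only indirectly, through Theorems~\ref{thm:trt-bridge} and~\ref{thm:outcome-bridge}, so the completeness conditions are not needed.

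For the outcome bridge moment condition~\eqref{eq:h-identification}, I would first condition on $(A,Z,X,S=1)$. Since $\kappa_2(A,Z,X)$ is measurable with respect to this conditioning, it factors out, and $E\{(1-Y)h(A,W,X)\mid A,Z,X,S=1\}=P(Y=0\mid A,Z,X,S=1)\,E\{h(A,W,X)\mid A,Z,X,Y=0,S=1\}$. By Theorem~\ref{thm:outcome-bridge} the second factor equals $P(Y=1\mid A,Z,X,S=1)/P(Y=0\mid A,Z,X,S=1)$, so the product collapses to $P(Y=1\mid A,Z,X,S=1)=E\{Y\mid A,Z,X,S=1\}$. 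Taking the outer expectation over $(A,Z,X)$ given $S=1$ yields $E\{\kappa_2(A,Z,X)(1-Y)h(A,W,X)\mid S=1\}=E\{\kappa_2(A,Z,X)Y\mid S=1\}$, which rearranges to~\eqref{eq:h-identification}.

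For the treatment bridge moment condition~\eqref{eq:q-identification}, I would condition on $(A,W,X,S=1)$ and note that on $\{A=a\}$ one has $q(A,Z,X)=q(a,Z,X)$, so $E\{(1-Y)q(A,Z,X)\mid A=a,W,X,S=1\}=P(Y=0\mid A=a,W,X,S=1)\,E\{q(a,Z,X)\mid A=a,W,X,Y=0,S=1\}$, and by Theorem~\ref{thm:trt-bridge} the last factor equals $1/P(A=a\mid W,X,Y=0,S=1)$. The decisive algebraic step is a Bayes'-rule rewrite, $P(A=a\mid W,X,Y=0,S=1)=P(Y=0\mid A=a,W,X,S=1)\,P(A=a\mid W,X,S=1)/P(Y=0\mid W,X,S=1)$, which turns the above into $E\{(1-Y)q(A,Z,X)\mid A=a,W,X,S=1\}=P(Y=0\mid W,X,S=1)/P(A=a\mid W,X,S=1)$. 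Multiplying by $\kappa_1(a,W,X)$ and taking the expectation over $A$ given $(W,X,S=1)$, the propensities $P(A=a\mid W,X,S=1)$ cancel and the sum over $a\in\{0,1\}$ leaves $P(Y=0\mid W,X,S=1)\{\kappa_1(1,W,X)+\kappa_1(0,W,X)\}$; a final outer expectation over $(W,X)$ given $S=1$, using $E\{P(Y=0\mid W,X,S=1)g(W,X)\mid S=1\}=E\{(1-Y)g(W,X)\mid S=1\}$, produces~\eqref{eq:q-identification}.

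The repeated applications of the tower property are routine; the step I expect to demand the most care is the $q$-identity, where one must correctly carry the inverse-propensity weighting --- it is precisely the Bayes' rewrite converting $P(Y=0\mid A=a,W,X,S=1)/P(A=a\mid W,X,Y=0,S=1)$ into $P(Y=0\mid W,X,S=1)/P(A=a\mid W,X,S=1)$ that makes the later cancellation over $a$ work --- and one must verify that all conditioning events $\{A=a,W,X,Y=0,S=1\}$ have positive probability almost surely, which is implicit in the existence statements of Assumptions~\ref{assump:trt-bridge} and~\ref{assump:outcome-bridge}.
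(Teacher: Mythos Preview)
Your proposal is correct and follows essentially the same route as the paper: both arguments reduce~\eqref{eq:h-identification} to the identity $E\{(1-Y)h(A,W,X)\mid A,Z,X,S=1\}=E\{Y\mid A,Z,X,S=1\}$ from Theorem~\ref{thm:outcome-bridge}, and both derive~\eqref{eq:q-identification} by iterated conditioning together with Equation~\eqref{eq:trt-bridge}. The only cosmetic difference is that the paper keeps the factor $(1-Y)$ and conditions directly on $(A,W,X,Y=0,S=1)$ and then on $(W,X,Y=0,S=1)$, which avoids your Bayes'-rule rewrite; your extra step converting $P(Y=0\mid A=a,W,X,S=1)/P(A=a\mid W,X,Y=0,S=1)$ into $P(Y=0\mid W,X,S=1)/P(A=a\mid W,X,S=1)$ is precisely what the paper's direct conditioning on $Y=0$ accomplishes in one line.
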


Equations~\eqref{eq:beta-PIPW}, \eqref{eq:beta-POR} and \eqref{eq:beta-PDR} and Theorem~\ref{thm:q-h-ident} together suggest a parametric approach to estimate $\beta_0$: one may postulate suitable parametric working models $q(A, Z, X;\tau)$ and $h(A, W, X;\psi)$ where $\tau$ and $\psi$ are unknown parameters of finite dimensions. One can then estimate the nuisance parameters $\tau$ and $\psi$ by solving the estimating equations
\begin{align}
    &\dfrac{1}{n}\sum_{i=1}^n (1-Y_i)\left\{\kappa_1(A_i, W_i, X_i)q(A_i, Z_i, X_i;\tau)-\kappa_1(1, W, X)-\kappa_1(0, W, X)\right\}=0,\label{eq:q-estimation}\\
    &\dfrac{1}{n}\sum_{i=1}^n \kappa_2(A_i, Z_i, X_i)\left\{(1-Y_i)h(A_i, W_i, X_i;\psi)-Y_i\right\}=0.\label{eq:h-estimation}
\end{align}
with user-specified functions $\kappa_1$ and $\kappa_2$ of dimensions equal to that of $\tau$ and $\psi$ respectively.  Closed-form parametric models for the confounding bridge functions can be derived in some settings as described below.

\begin{example}\label{eg:bin}If $U, Z, W$ are all binary variable and $X$ is a categorical variable with levels $\{x_1, \dots, x_{L}\}$, then a suitable model is the saturated model of the form
\begin{align}
    &q(a,z,x;\tau)=\tau_{0x} + \tau_{1x}a + \tau_{2x}z + \tau_{3x}az\label{eq:q-saturated}\\ \text{and}\qquad  &h(a,w,x;\psi)=\psi_{0x} + \psi_{1x}a + \psi_{2x}w + \psi_{3x}aw\label{eq:h-saturated}
\end{align}
for $x=\{x_1, \dots, x_{L}\}$, where\begin{align*}
\tau &= (\tau_{0x_1}, \tau_{1x_1},\tau_{2x_1}, \tau_{3x_1}, \dots, \tau_{0x_L}, \tau_{1x_L},\tau_{2x_L}, \tau_{3x_L})^T,\\
\text{and}\qquad \psi &= (\psi_{0x_1}, \psi_{1x_1},\psi_{2x_1}, \psi_{3x_1}, \dots, \psi_{0x_L}, \psi_{1x_L},\psi_{2x_L}, \psi_{3x_L})^T.
\end{align*}

\end{example}

\begin{example}\label{eg:cont} If $Z$ and $W$ are both multivariate Gaussian random variables conditioning on $(A, U, X, Y)$, under the setting given in Section~\ref{supp:bridge-example-cont} of the Supplementary Materials, suitable models for the confounding bridge functions are 
\begin{align}
&     q(a, z, x;\tau) = 1+\exp\{(1-2A)(\tau_0+ \tau_{A}a + \tau_Z^Tz+\tau_{X}^Tx)\}\label{eq:q-parametric-cont}
    \\
    \text{and}\qquad & h(a, w, x;\psi) = \exp(\psi_0 + \psi_{A}a + \psi_W^Tw + \psi_X^Tx)\label{eq:h-parametric-cont}.
\end{align}
where $\tau = (\tau_0, \tau_A, \tau_Z^T,\tau_X^T)^T$ and $\psi= (\psi_0, \psi_A, \psi_Z^T,\psi_X^T)^T$.
\end{example}

After obtaining the estimators $\hat\tau$ and $\hat\psi$ by solving Equations~\eqref{eq:q-estimation} and \eqref{eq:h-estimation}, $\beta_0$ can then be estimated with the following plug-in estimators:
\begin{align}
    \hat\beta_{c,\text{PIPW}}&=\log\left(\dfrac{\frac{1}{n}\sum_{i=1}^n I(A_i=1)c(X_i)q(1, Z_i, X_i;\hat\tau)Y_i}{\frac{1}{n}\sum_{i=1}^n I(A_i=0)c(X_i)q(0, Z_i, X_i;\hat\tau)Y_i}\right);\label{eq:PIPW}\\
    \hat\beta_{c,\text{POR}}&=\log\left(\dfrac{\frac{1}{n}\sum_{i=1}^n c(X_i)(1-Y_i)h(1, W_i, X_i;\hat\psi)}{\frac{1}{n}\sum_{i=1}^n c(X_i)(1-Y_i)h(0, W_i, X_i;\hat\psi)}\right);\label{eq:POR}\\
    \hat\beta_{c,\text{PDR}} &= \log\bigg(\dfrac{1}{n}\sum_{i=1}^n\bigg\{c(X_i)\bigg[I(A_i=1)q(A_i, Z_i, X_i)\left\{Y_i - (1-Y_i)h(A_i, W_i, X_i)\right\}+\nonumber\\
&\qquad (1-Y_i)h(1, W_i, X_i)\bigg]\bigg\} \bigg/\dfrac{1}{n}\sum_{i=1}^n\bigg\{c(X_i)\bigg[I(A_i=0)q(A_i, Z_i, X_i)\{Y_i - \nonumber\\
&\qquad (1-Y_i)h(A_i, W_i, X_i)\}+(1-Y_i)h(0, W_i, X_i)\bigg]\bigg\}\bigg).\label{eq:PDR}
\end{align}
Alternatively, one can jointly estimate $\beta_0$ and nuisance parameters $\tau$ and $\psi$ by solving estimating equations with the following estimating functions:

\begin{align*}
&G_{\text{PIPW}}(O_i;\beta, \tau) =\\ 
&\quad \left(\begin{array}{l} 
(1-Y_i)\left\{\kappa_1(A_i, W_i, X_i)q(A_i, Z_i, X_i;\tau)-\kappa_1(1, W, X)-\kappa_1(0, W, X)\right\}\\
c(X_i){(-1)}^{1-A_i}q(A_i, Z_i, X_i; \tau)Y_ie^{-\beta A_i}\end{array}\right),\\
&G_{\text{POR}}(O_i;\beta, \psi) =\\ 
&\quad \left(\begin{array}{l} 
\kappa_2(A_i, Z_i, X_i)\left\{(1-Y_i)h(A_i, W_i, X_i;\psi)-Y_i\right\}\\
c(X_i)(1-Y_i)\left\{h(1, W_i, X_i;\psi)e^{-\beta} - h(0, W_i, X_i;\psi)\right\}\end{array}\right),\\
&\text{and}\\
&G_{\text{PDR}}(O_i;\beta, \tau,\psi)=\\
&\quad \left(\begin{array}{l}(1-Y_i)\left\{\kappa_1(A_i, W_i, X_i)q(A_i, Z_i, X_i;\tau)-\kappa_1(1, W, X)-\kappa_1(0, W, X)\right\}\\
\kappa_2(A_i, Z_i, X_i)\left\{(1-Y_i)h(A_i, W_i, X_i;\psi)-Y_i\right\}\\
c(X_i)\bigg[(-1)^{1-A_i}q(A_i, Z_i, X_i;\tau)e^{-\beta A}\left\{Y_i - (1-Y_i)h(A_i, W_i, X_i;\psi)\right\}+\\
        \qquad \qquad (1-Y_i)\left\{h(1, W_i, X_i;\psi)e^{-\beta} - h(0, W_i, X_i;\psi)\right\}\bigg]\end{array}\right).
\end{align*}

The resulting estimators are consistent and asymptotically normal, following standard estimating equation theory~\citep{van2000asymptotic}. Similar to~\citet{cui2020semiparametric}, we refer to $\hat\beta_{c,\text{PIPW}}$, $\hat\beta_{c,\text{POR}}$ and $\hat\beta_{c,\text{PDR}}$ as the proximal inverse probability weighted (PIPW) estimator, the proximal outcome regression (POR) estimator, and the proximal doubly robust (PDR) estimator, respectively. As discussed in Sections~\ref{supp:or-and-rr} and \ref{supp:tnd} in the Supplementary Materials, the PIPW estimator can estimate the marginal causal risk ratio assuming either (a) a homogeneous risk model, a treatment-independent selection mechanism, and a rare outcome condition in the target population, or (b) a test-negative design where the controls are selected to have an irrelevant outcome to the treatment. Because they estimate the same functional, the new POR and PDR estimators can be viewed as estimators of the marginal causal risk ratio under the same conditions.

As implied by Theorem~\ref{thm:PDR}, the estimator $\hat\beta_{c,\text{PDR}}$ also enjoys the desirable double robustness property-- it is consistent for $\beta_{0}$ if either $q(A, Z, X)$ or $h(A, W, X)$ can be consistently estimated. We stated this result formally in the theorem below.

\begin{theorem}\label{thm:doubly-robust}
Under Model~\ref{mod:logit-model} and Assumptions~\ref{assump:selection}, \ref{assump:nc}, \ref{assump:trt-bridge}, and \ref{assump:outcome-bridge},  we have the following results:
\begin{enumerate}[(a)]
    \item $\hat\beta_{c,\text{PIPW}}$ is consistent for $\beta_0$ and asymptotically normal if the model $q(A, Z, X;\tau)$ is correctly specified and Assumption~\ref{assump:completeness-a}(b) holds;
    \item $\hat\beta_{c,\text{POR}}$ is consistent for $\beta_0$ and asymptotically normal if the model $h(A, W, X;\psi)$ is correctly specified and Assumption~\ref{assump:completeness-b}(b) holds;
    \item $\hat\beta_{c,\text{PDR}}$ is consistent for $\beta_0$ and asymtotically normal if either the conditions in (a) or those in (b) hold. 
\end{enumerate}
\end{theorem}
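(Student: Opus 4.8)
The plan is to treat $\hat\beta_{c,\text{PIPW}}$, $\hat\beta_{c,\text{POR}}$ and $\hat\beta_{c,\text{PDR}}$ as (stacked) $Z$-estimators, defined by jointly solving the estimating equations built from $G_{\text{PIPW}}$, $G_{\text{POR}}$ and $G_{\text{PDR}}$ respectively, and to invoke standard $Z$-estimation theory~\citep{van2000asymptotic}: consistency follows once the population estimating equation has the relevant parameter vector as its unique zero, and $\sqrt n$-asymptotic normality with the usual sandwich variance follows once the expected Jacobian at that limit is nonsingular, together with routine dominated-moment regularity conditions which we assume throughout. The substantive inputs are (i) the identification theorems proved earlier, which pin down the unique population root, and (ii) the non-degeneracy restrictions on $c(X)$ (and matching-dimension rank conditions on $\kappa_1,\kappa_2$), which deliver nonsingularity of the Jacobian. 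A structural observation used in all three cases is that the nuisance block of each estimating function does not involve $\beta$, so the expected Jacobian is block lower-triangular; nonsingularity then reduces to nonsingularity of the nuisance block(s) plus non-vanishing of the scalar $\partial/\partial\beta$ of the $\beta$-component, which equals $-e^{-\beta_0}E\{c(X)I(A=1)Yq(A,Z,X)\mid S=1\}$ for PIPW and the analogous quantities for POR and PDR --- precisely the quantities assumed nonzero in the statement.

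For part (a): correct specification of $q(A,Z,X;\tau)$ means there is $\tau_0$ with $q(A,Z,X;\tau_0)$ a treatment confounding bridge function, i.e.\ a solution of~\eqref{eq:trt-bridge-u}. Theorem~\ref{thm:trt-bridge} then shows $q(\cdot;\tau_0)$ solves the observed-data moment equation~\eqref{eq:trt-bridge}, equivalently (after expanding over binary $A$) the moment condition~\eqref{eq:q-identification} that forms the $\tau$-component of $G_{\text{PIPW}}$; under Assumption~\ref{assump:completeness-a}(b) this solution is unique, so $\tau_0$ is the unique root of the $\tau$-block. Given $\tau_0$, Theorem~\ref{thm:PIPW} together with the closed form~\eqref{eq:beta-PIPW} shows $\beta_0$ is the unique zero of the $\beta$-component. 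The block-triangular Jacobian argument then yields consistency and asymptotic normality. Part (b) is entirely parallel, replacing Theorem~\ref{thm:trt-bridge} and Assumption~\ref{assump:completeness-a}(b) by Theorem~\ref{thm:outcome-bridge} and Assumption~\ref{assump:completeness-b}(b), equation~\eqref{eq:q-identification} by~\eqref{eq:h-identification}, and Theorem~\ref{thm:PIPW}/\eqref{eq:beta-PIPW} by Theorem~\ref{thm:POR}/\eqref{eq:beta-POR}.

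For part (c), the key point is that the third (the $\beta$-) component of $G_{\text{PDR}}$ has population mean $e^{-\beta}N_1(\tau,\psi)-N_0(\tau,\psi)$, where $N_1(\tau,\psi)$ and $N_0(\tau,\psi)$ denote the numerator and denominator of the argument of the logarithm in~\eqref{eq:beta-PDR}; hence this component vanishes at $\beta_0$ exactly when $\log\{N_1(\tau,\psi)/N_0(\tau,\psi)\}=\beta_0$. Theorem~\ref{thm:PDR} guarantees this identity holds whenever $q(\cdot;\tau)$ solves~\eqref{eq:trt-bridge-u} (for \emph{arbitrary} $h(\cdot;\psi)$), or $h(\cdot;\psi)$ solves~\eqref{eq:h-odds-u} (for \emph{arbitrary} $q(\cdot;\tau)$). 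Under the conditions of (a), the argument of part (a) gives $\hat\tau\to\tau_0$ with $q(\cdot;\tau_0)$ a genuine bridge function, while $\hat\psi$ converges to a pseudo-true value $\psi^\ast$, the (assumed unique, regular) root of the possibly-misspecified population analogue of~\eqref{eq:h-estimation}; by the $q$-branch of Theorem~\ref{thm:PDR} the $\beta$-component is mean zero at $(\beta_0,\tau_0,\psi^\ast)$, so $(\beta_0,\tau_0,\psi^\ast)$ is the limiting root of the full stacked equation, and the block-triangular Jacobian argument again gives consistency and asymptotic normality. Under the conditions of (b) one argues symmetrically, with $\hat\psi\to\psi_0$ a genuine bridge function, $\hat\tau\to\tau^\ast$ pseudo-true, and the $h$-branch of Theorem~\ref{thm:PDR}.

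I expect the main obstacle to be regularity bookkeeping in part (c): one must ensure that the misspecified nuisance estimator has a well-defined pseudo-true limit --- that the misspecified population estimating equation has a unique root with nonsingular Jacobian there --- since this limit need not correspond to any bridge function, so the completeness assumptions cannot be invoked for it, yet it is what makes the sandwich variance well-defined. The genuinely delicate content --- that the $\beta$-estimating equation remains unbiased at $\beta_0$ no matter which of the two nuisances is misspecified --- is already isolated in Theorem~\ref{thm:PDR} through its augmentation structure, so once this pseudo-true-value regularity is granted, the remainder is a routine application of the standard $Z$-estimator consistency and asymptotic normality theorems in~\citet{van2000asymptotic}.
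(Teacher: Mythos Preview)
Your proposal is correct and follows essentially the same route as the paper: both rely on the earlier identification results (Theorems~\ref{thm:PIPW}, \ref{thm:POR}, \ref{thm:PDR}) together with completeness-based uniqueness of the bridge functions to pin down the population root, and then appeal to standard estimating-equation theory~\citep{van2000asymptotic} for consistency and asymptotic normality. The paper's proof is terser---it argues consistency directly via the law of large numbers and the continuous mapping theorem applied to the closed-form plug-in expressions~\eqref{eq:PIPW}--\eqref{eq:PDR}, and for part~(c) simply writes $q^*,h^*$ for the probability limits of the nuisance estimators without dwelling on the pseudo-true-value regularity you correctly flag---whereas you make the stacked $Z$-estimator structure and the block-triangular Jacobian explicit; but the substantive content is the same.
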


Theorem~\ref{thm:doubly-robust} indicates that the proposed PDR estimator has improved robustness over the PIPW and POR estimator against model misspecification of the two confounding bridge functions.  If candidate proxies $W$ and $Z$ have different cardinalities, for example, when they are categorical variables of unequal dimensions, then only one of the completeness Assumptions~\ref{assump:completeness-a} and \ref{assump:completeness-b} may hold. Say $W$ is of higher dimension, then Assumption~\ref{assump:completeness-b}(b) cannot hold, the outcome bridge function $h(A, W, X)$ may not be not uniquely identified, and $\hat\beta_{c,\text{POR}}$ may be biased for $\beta_0$. However, it may be possible to coarsen levels of $W$ to match those of $Z$~\citep{shi2020multiply}. We highlight that the consistency of $\hat\beta_{c,\text{PDR}}$ only requires one of the two completeness assumptions to hold.

Our identification and estimation strategies involve user-specified functions $c(\cdot)$, $\kappa_1(\cdot)$ and $\kappa_2(\cdot)$. In principle, these nuisance functions can be chosen to construct a potentially more efficient estimator for $\beta_0$, $q$, and $h$ respectively, but the optimal choice of these functions  may require modeling complex features of the observed data distribution. As discussed in \citet{cui2020semiparametric} and \citet{stephens2014locally}, such features can be considerably difficult to model correctly, and unsuccessful attempts to do so may lead to loss of efficiency, thus are not always worthwhile. In our simulation and real data analysis, we found that our generic choices for $c(\cdot)$, $\kappa_1(\cdot)$ and $\kappa_2(\cdot)$ deliver reasonable efficiency.

Until now, our estimation has required specification of parametric working models for $q(A, Z, X;\tau)$ and/or $h(A, W, X;\psi)$. When $X$ is continuous or high-dimensional, however, neither of these working models may contain the true confounding bridge functions. \citet{ghassami2021minimax} and \citet{kallus2021causal} concurrently proposed a cross-fitting minimax kernel learning method for proximal learning of average treatment effect. Their method allows nonparametric estimation of the nuisance functions and  $\sqrt n$-consistent estimation for the average treatment effect, even when both estimated confounding bridge functions are consistent at slower than $\sqrt n$ rates. In Section~\ref{supp:RKHS} of the Supplementary Materials, we demonstrate that their method is also applicable to our setting, and describe a semiparametric kernel estimator for the odds ratio parameter with flexibly estimated confounding bridge functions, $\sqrt n$-consistency, and asymptotic normality.

\section{Simulation}\label{sec:sim}

In this section, we perform simulation studies to evaluate the performance of our proposed estimators of $\beta_{0}$ under five different scenarios. All scenarios follow Model~\ref{mod:logit-model} with $\beta_0=\log(0.2)$.  In Scenario I, the variables $(U, Z, W)$ are all univariate binary variables. We suppose there are no measured confounders $X$. Therefore, saturated models for the confounding bridge functions are appropriate, i.e.
\begin{align*}
    q(A, Z;\tau) = \tau_0 + \tau_aA + \tau_z Z + \tau_{az}AZ \text{ and }
    h(A, W;\psi) = \psi_0 + \psi_a A + \psi_z Z + \psi_{az}AZ.
\end{align*}
The nuisance parameters $\tau$ and $\psi$ are estimated by solving Equations~\eqref{eq:q-estimation} and \eqref{eq:h-estimation}, respectively, with $\kappa_1(A, W)=(1, A, W, AW)^T$ and $\kappa_2(A, Z) = (1, A, Z, AZ)^T$. In Scenario II, the variables $(X, U, Z,W)$ are continuous. The conditional distributions of $Z$ and $W$ given other variables follow Example~\ref{eg:cont}. Therefore parametric working models in Equations~\eqref{eq:q-parametric-cont} and \eqref{eq:h-parametric-cont} are appropriate. We use these parametric working models and estimate the nuisance parameters $\tau$ and $\psi$ by solving estimating equations~\eqref{eq:q-estimation} and \eqref{eq:h-estimation}, setting $\kappa_1(A, W, X)$ to be a vector including $A$, $W$, $X$, all of their higher-order interactions, and an intercept term, and $\kappa_2(A, Z, X)$ similarly. We set $c(X)=1$ for all the estimators under comparison.

In Scenarios III-V, the data generating process is the same as in Scenario II but we misspecify different components of the parametric working models. In Scenario III, we misspecify the treatment confounding bridge function $q(A, Z, X)$ by ignoring $X$, i.e., we set
$$q(a, z, x;\tau) = 1+\exp\{(1-2A)(\tau_0 + \tau_A a + \tau_Zz)\}.$$
In this scenario, we expect the POR and PDR estimators to be consistent and the PIPW estimator to be inconsistent. In Scenario IV, we similarly misspecify the outcome confounding bridge function $h(A,W,X)$ by ignoring $X$, i.e. we set
$$h(a, w, x;\psi)=\exp(\psi_0 + \psi_A a + \psi_w w).$$
In this scenario, we expect the PIPW and PDR estimators to be consistent and the POR estimator to be inconsistent. Finally, in Scenario V, we misspecify both $q(A, Z, X)$ and $h(A, W, X)$ as above. We expect all three estimators to be inconsistent.

We relegate the details of the data generating process to Section~\ref{supp:sim} of the Supplementary Materials. We report the bias, standard deviation, and coverage of 95\% confidence intervals of the POR, PIPW, and PDR estimators for each scenario over 500 Monte Carlo samples.

Table~\ref{tab:sim_results} shows the results of various simulations. In Scenario I and II, as the parametric working models for both confounding bridge functions are correctly specified, all three estimators have essentially identical performances as expected. In Scenario III, the POR and PDR estimators are consistent and have calibrated 95\% confidence intervals, while PIPW is inconsistent and has anti-conservative 95\% confidence intervals. On the other hand, in Scenario IV, the PIPW and PDR estimators are consistent and have calibrated 95\% confidence intervals but not the POR estimator. In Scenario V, all three estimators are inconsistent and have anti-conservative 95\% confidence intervals, but the PDR estimator appears to have slightly smaller biases and better confidence interval coverage.

\begin{table}[!htbp]
    \centering
        \caption{Bias, standard deviation, and coverage of 95\% confidence intervals of the proximal outcome regression (POR), proximal inverse probability weighted (PIPW), and proximal doubly robust (PDR) estimator for five different scenarios summarized over 500 Monte Carlo samples. In Scenario I, the variables $(U,Z,W)$ are univariate binary; in Scenario II-V, the variables $(X,U,Z,W)$ are univariate continuous variables.}
        \resizebox{0.98\textwidth}{!}{
    \begin{tabular}{ll|ccc|ccc|ccc}
    \hline
    \multirow{2}{*}{Scenario} & \multirow{2}{*}{N} &  \multicolumn{3}{c|}{POR} & \multicolumn{3}{c|}{PIPW} & \multicolumn{3}{c}{PDR}\\

   &  & Bias & SD & Coverage & Bias & SD &  Coverage & Bias & SD &  Coverage\\
     \hline 
       I & 3,500 &  0.05 & 0.27 & 97.4\% & 0.05 & 0.05 & 97.4\% & 0.05 & 0.27 & 97.4\% \\
       I & 5,000 &  0.00 & 0.23 & 96.4\% & 0.02 & 0.22 & 96.4\% & 0.00 & 0.22 & 96.4\%\\
       &&&&&&&&&&\\
              II & $1\times 10^5$ &  -0.03 & 0.24 & 95.0\% & -0.03 & 0.23 & 94.2\% & -0.02 & 0.23 & 94.2\% \\
       II & $2\times 10^5$ &  -0.02 & 0.16 & 95.0\% & -0.02 & 0.15 & 94.8\% & -0.02 & 0.15 & 94.2\% \\
       &&&&&&&&&&\\
       III & $1\times 10^5$ &  -0.03 & 0.24 & 94.8\% & -0.11 & 0.22 & 91.2\% & -0.02 & 0.23 & 94.2\% \\
       III & $2\times 10^5$ &  -0.02 & 0.16 & 95.0\% & -0.10 & 0.15 & 90.0\% & -0.02 & 0.15 & 95.0\% \\
       &&&&&&&&&&\\
       IV & $1\times 10^5$ &  -0.12 & 0.24 & 92.2\% & -0.03 & 0.23 & 94.0\% & -0.03 & 0.23 & 94.0\% \\
       IV & $2\times 10^5$ &  -0.11 & 0.16 & 90.8\% & -0.02 & 0.15 & 94.6\% & -0.02 & 0.15 & 94.2\% \\
       &&&&&&&&&&\\
       V & $1\times 10^5$ &  -0.12 & 0.24 & 92.2\% & -0.11 & 0.22 & 90.8\% & -0.09 & 0.22 & 92.2\% \\
       V & $2\times 10^5$ &  -0.11 & 0.16 & 90.8\% & -0.10 & 0.15 & 89.8\% & -0.08 & 0.15 & 91.6\% \\
       \hline
    \end{tabular}}

    \label{tab:sim_results}
\end{table}

\section{Extensions}\label{sec:discussion}

 We have focused on the settings where the treatment and outcome are both binary, but the proposed framework can extend to study the association between a polytomous treatment and a polytomous outcome with simple modification. These extensions are relegated to Section~\ref{supp:polytomous} of the Supplementary Materials.
 
 Although we have primarily focused on test-negative design as an example of outcome-dependent sampling, our method applies to other outcome-dependent sampling designs where unmeasured confounding is of concern and suitable proxy variables may be identified, such as a case-cohort study~\citep{breslow1980statistical} or a retrospective case-control/cohort study using data from electronic health records~\citep{streeter2017adjusting}, where subjects' treatment status may be associated with underlying frailty or risk of developing the outcome of interest. 


\bibliographystyle{biometrika}
\bibliography{references.bib}

\begin{thebibliography}{47}
\expandafter\ifx\csname natexlab\endcsname\relax\def\natexlab#1{#1}\fi

\bibitem[{Ai \& Chen(2003)}]{ai2003efficient}
\textsc{Ai, C.} \& \textsc{Chen, X.} (2003).
\newblock Efficient estimation of models with conditional moment restrictions
  containing unknown functions.
\newblock \textit{Econometrica} \textbf{71}, 1795--1843.

\bibitem[{Bareinboim \& Pearl(2012)}]{bareinboim2012controlling}
\textsc{Bareinboim, E.} \& \textsc{Pearl, J.} (2012).
\newblock Controlling selection bias in causal inference.
\newblock In \textit{Artificial Intelligence and Statistics}. PMLR.

\bibitem[{Bareinboim et~al.(2014)Bareinboim, Tian \&
  Pearl}]{bareinboim2014recovering}
\textsc{Bareinboim, E.}, \textsc{Tian, J.} \& \textsc{Pearl, J.} (2014).
\newblock Recovering from selection bias in causal and statistical inference.
\newblock In \textit{Proceedings of the AAAI Conference on Artificial
  Intelligence}, vol.~28.

\bibitem[{Bickel \& Ritov(2003)}]{bickel2003nonparametric}
\textsc{Bickel, P.~J.} \& \textsc{Ritov, Y.} (2003).
\newblock Nonparametric estimators which can be" plugged-in".
\newblock \textit{The Annals of Statistics} \textbf{31}, 1033--1053.

\bibitem[{Breslow \& Day(1980)}]{breslow1980statistical}
\textsc{Breslow, N.} \& \textsc{Day, N.} (1980).
\newblock Statistical methods in cancer research. volume i-the analysis of
  case-control studies.[accessed october 10, 2009].
\newblock \textit{IARC Sci Publ} \textbf{32}, 5--338.

\bibitem[{Chen(2007)}]{chen2007large}
\textsc{Chen, X.} (2007).
\newblock Large sample sieve estimation of semi-nonparametric models.
\newblock \textit{Handbook of econometrics} \textbf{6}, 5549--5632.

\bibitem[{Chua et~al.(2020)Chua, Feng, Lewnard, Sullivan, Blyth, Lipsitch \&
  Cowling}]{chua2020use}
\textsc{Chua, H.}, \textsc{Feng, S.}, \textsc{Lewnard, J.~A.},
  \textsc{Sullivan, S.~G.}, \textsc{Blyth, C.~C.}, \textsc{Lipsitch, M.} \&
  \textsc{Cowling, B.~J.} (2020).
\newblock The use of test-negative controls to monitor vaccine effectiveness: a
  systematic review of methodology.
\newblock \textit{Epidemiology (Cambridge, Mass.)} \textbf{31}, 43.

\bibitem[{Cole \& Frangakis(2009)}]{cole2009consistency}
\textsc{Cole, S.~R.} \& \textsc{Frangakis, C.~E.} (2009).
\newblock The consistency statement in causal inference: a definition or an
  assumption?
\newblock \textit{Epidemiology} \textbf{20}, 3--5.

\bibitem[{Cole et~al.(2010)Cole, Platt, Schisterman, Chu, Westreich, Richardson
  \& Poole}]{cole2010illustrating}
\textsc{Cole, S.~R.}, \textsc{Platt, R.~W.}, \textsc{Schisterman, E.~F.},
  \textsc{Chu, H.}, \textsc{Westreich, D.}, \textsc{Richardson, D.} \&
  \textsc{Poole, C.} (2010).
\newblock Illustrating bias due to conditioning on a collider.
\newblock \textit{International journal of epidemiology} \textbf{39}, 417--420.

\bibitem[{Cuddeback et~al.(2004)Cuddeback, Wilson, Orme \&
  Combs-Orme}]{cuddeback2004detecting}
\textsc{Cuddeback, G.}, \textsc{Wilson, E.}, \textsc{Orme, J.~G.} \&
  \textsc{Combs-Orme, T.} (2004).
\newblock Detecting and statistically correcting sample selection bias.
\newblock \textit{Journal of Social Service Research} \textbf{30}, 19--33.

\bibitem[{Cui et~al.(2020)Cui, Pu, Shi, Miao \&
  Tchetgen~Tchetgen}]{cui2020semiparametric}
\textsc{Cui, Y.}, \textsc{Pu, H.}, \textsc{Shi, X.}, \textsc{Miao, W.} \&
  \textsc{Tchetgen~Tchetgen, E.} (2020).
\newblock Semiparametric proximal causal inference.
\newblock \textit{arXiv preprint arXiv:2011.08411} .

\bibitem[{Deaner(2018)}]{deaner2018proxy}
\textsc{Deaner, B.} (2018).
\newblock Proxy controls and panel data.
\newblock \textit{arXiv preprint arXiv:1810.00283} .

\bibitem[{Didelez et~al.(2010)Didelez, Kreiner \&
  Keiding}]{didelez2010graphical}
\textsc{Didelez, V.}, \textsc{Kreiner, S.} \& \textsc{Keiding, N.} (2010).
\newblock Graphical models for inference under outcome-dependent sampling.
\newblock \textit{Statistical Science} \textbf{25}, 368--387.

\bibitem[{Dikkala et~al.(2020)Dikkala, Lewis, Mackey \&
  Syrgkanis}]{dikkala2020minimax}
\textsc{Dikkala, N.}, \textsc{Lewis, G.}, \textsc{Mackey, L.} \&
  \textsc{Syrgkanis, V.} (2020).
\newblock Minimax estimation of conditional moment models.
\newblock \textit{Advances in Neural Information Processing Systems}
  \textbf{33}, 12248--12262.

\bibitem[{D’Haultfoeuille(2011)}]{d2011completeness}
\textsc{D’Haultfoeuille, X.} (2011).
\newblock On the completeness condition in nonparametric instrumental problems.
\newblock \textit{Econometric Theory} \textbf{27}, 460--471.

\bibitem[{Elwert \& Winship(2014)}]{elwert2014endogenous}
\textsc{Elwert, F.} \& \textsc{Winship, C.} (2014).
\newblock Endogenous selection bias: The problem of conditioning on a collider
  variable.
\newblock \textit{Annual review of sociology} \textbf{40}, 31--53.

\bibitem[{Engel(1988)}]{engel1988polytomous}
\textsc{Engel, J.} (1988).
\newblock Polytomous logistic regression.
\newblock \textit{Statistica Neerlandica} \textbf{42}, 233--252.

\bibitem[{Flanders et~al.(2017)Flanders, Strickland \& Klein}]{flanders2017new}
\textsc{Flanders, W.~D.}, \textsc{Strickland, M.~J.} \& \textsc{Klein, M.}
  (2017).
\newblock A new method for partial correction of residual confounding in
  time-series and other observational studies.
\newblock \textit{American journal of epidemiology} \textbf{185}, 941--949.

\bibitem[{Gabriel et~al.(2020)Gabriel, Sachs \&
  Sj{\"o}lander}]{gabriel2020causal}
\textsc{Gabriel, E.~E.}, \textsc{Sachs, M.~C.} \& \textsc{Sj{\"o}lander, A.}
  (2020).
\newblock Causal bounds for outcome-dependent sampling in observational
  studies.
\newblock \textit{Journal of the American Statistical Association} , 1--12.

\bibitem[{Ghassami et~al.(2021)Ghassami, Ying, Shpitser, Tchetgen~Tchetgen
  et~al.}]{ghassami2021minimax}
\textsc{Ghassami, A.}, \textsc{Ying, A.}, \textsc{Shpitser, I.},
  \textsc{Tchetgen~Tchetgen, E.} et~al. (2021).
\newblock Minimax kernel machine learning for a class of doubly robust
  functionals.
\newblock Tech. rep.

\bibitem[{Greenland(2003)}]{greenland2003quantifying}
\textsc{Greenland, S.} (2003).
\newblock Quantifying biases in causal models: classical confounding vs
  collider-stratification bias.
\newblock \textit{Epidemiology} \textbf{14}, 300--306.

\bibitem[{Heckman(1979)}]{heckman1979sample}
\textsc{Heckman, J.~J.} (1979).
\newblock Sample selection bias as a specification error.
\newblock \textit{Econometrica: Journal of the econometric society} , 153--161.

\bibitem[{Heckman et~al.(1996)Heckman, Ichimura, Smith \&
  Todd}]{heckman1996sources}
\textsc{Heckman, J.~J.}, \textsc{Ichimura, H.}, \textsc{Smith, J.} \&
  \textsc{Todd, P.} (1996).
\newblock Sources of selection bias in evaluating social programs: An
  interpretation of conventional measures and evidence on the effectiveness of
  matching as a program evaluation method.
\newblock \textit{Proceedings of the National Academy of Sciences} \textbf{93},
  13416--13420.

\bibitem[{Heckman et~al.(1998)Heckman, Ichimura, Smith \&
  Todd}]{heckman1998characterizing}
\textsc{Heckman, J.~J.}, \textsc{Ichimura, H.}, \textsc{Smith, J.~A.} \&
  \textsc{Todd, P.~E.} (1998).
\newblock Characterizing selection bias using experimental data.

\bibitem[{Hern{\'a}n et~al.(2004)Hern{\'a}n, Hern{\'a}ndez-D{\'\i}az \&
  Robins}]{hernan2004structural}
\textsc{Hern{\'a}n, M.~A.}, \textsc{Hern{\'a}ndez-D{\'\i}az, S.} \&
  \textsc{Robins, J.~M.} (2004).
\newblock A structural approach to selection bias.
\newblock \textit{Epidemiology} , 615--625.

\bibitem[{Hernán~MA(2020)}]{hernan2020causal}
\textsc{Hernán~MA, R.~J.} (2020).
\newblock \textit{Causal Inference: What If}.
\newblock Boca Raton: Chapman \& Hall/CRC.

\bibitem[{Jackson \& Nelson(2013)}]{jackson2013test}
\textsc{Jackson, M.~L.} \& \textsc{Nelson, J.~C.} (2013).
\newblock The test-negative design for estimating influenza vaccine
  effectiveness.
\newblock \textit{Vaccine} \textbf{31}, 2165--2168.

\bibitem[{Kallus et~al.(2021)Kallus, Mao \& Uehara}]{kallus2021causal}
\textsc{Kallus, N.}, \textsc{Mao, X.} \& \textsc{Uehara, M.} (2021).
\newblock Causal inference under unmeasured confounding with negative controls:
  A minimax learning approach.
\newblock \textit{arXiv preprint arXiv:2103.14029} .

\bibitem[{Li et~al.(2022)Li, Shi, Miao \& Tchetgen}]{li2022double}
\textsc{Li, K.~Q.}, \textsc{Shi, X.}, \textsc{Miao, W.} \& \textsc{Tchetgen,
  E.~T.} (2022).
\newblock Double negative control inference in test-negative design studies of
  vaccine effectiveness.
\newblock \textit{arXiv preprint arXiv:2203.12509} .

\bibitem[{Lipsitch et~al.(2010)Lipsitch, Tchetgen~Tchetgen \&
  Cohen}]{lipsitch2010negative}
\textsc{Lipsitch, M.}, \textsc{Tchetgen~Tchetgen, E.} \& \textsc{Cohen, T.}
  (2010).
\newblock Negative controls: a tool for detecting confounding and bias in
  observational studies.
\newblock \textit{Epidemiology (Cambridge, Mass.)} \textbf{21}, 383.

\bibitem[{Manski \& McFadden(1981)}]{manski1981alternative}
\textsc{Manski, C.~F.} \& \textsc{McFadden, D.} (1981).
\newblock Alternative estimators and sample designs for discrete choice
  analysis.
\newblock \textit{Structural analysis of discrete data with econometric
  applications} \textbf{2}, 2--50.

\bibitem[{Mansournia \& Altman(2016)}]{mansournia2016inverse}
\textsc{Mansournia, M.~A.} \& \textsc{Altman, D.~G.} (2016).
\newblock Inverse probability weighting.
\newblock \textit{Bmj} \textbf{352}.

\bibitem[{Miao et~al.(2018{\natexlab{a}})Miao, Geng \&
  Tchetgen~Tchetgen}]{miao2018identifying}
\textsc{Miao, W.}, \textsc{Geng, Z.} \& \textsc{Tchetgen~Tchetgen, E.~J.}
  (2018{\natexlab{a}}).
\newblock Identifying causal effects with proxy variables of an unmeasured
  confounder.
\newblock \textit{Biometrika} \textbf{105}, 987--993.

\bibitem[{Miao et~al.(2018{\natexlab{b}})Miao, Shi \&
  Tchetgen~Tchetgen}]{miao2018confounding}
\textsc{Miao, W.}, \textsc{Shi, X.} \& \textsc{Tchetgen~Tchetgen, E.}
  (2018{\natexlab{b}}).
\newblock A confounding bridge approach for double negative control inference
  on causal effects.
\newblock \textit{arXiv e-prints} , arXiv--1808.

\bibitem[{Newey \& Powell(2003)}]{newey2003instrumental}
\textsc{Newey, W.~K.} \& \textsc{Powell, J.~L.} (2003).
\newblock Instrumental variable estimation of nonparametric models.
\newblock \textit{Econometrica} \textbf{71}, 1565--1578.

\bibitem[{Pearce(2018)}]{pearce2018bias}
\textsc{Pearce, N.} (2018).
\newblock Bias in matched case--control studies: Dags are not enough.
\newblock \textit{European journal of epidemiology} \textbf{33}, 1--4.

\bibitem[{Prentice(1986)}]{prentice1986case}
\textsc{Prentice, R.~L.} (1986).
\newblock A case-cohort design for epidemiologic cohort studies and disease
  prevention trials.
\newblock \textit{Biometrika} \textbf{73}, 1--11.

\bibitem[{Rotnitzky et~al.(2021)Rotnitzky, Smucler \&
  Robins}]{rotnitzky2021characterization}
\textsc{Rotnitzky, A.}, \textsc{Smucler, E.} \& \textsc{Robins, J.~M.} (2021).
\newblock Characterization of parameters with a mixed bias property.
\newblock \textit{Biometrika} \textbf{108}, 231--238.

\bibitem[{Schnitzer(2022)}]{schnitzer2022estimands}
\textsc{Schnitzer, M.~E.} (2022).
\newblock Estimands and estimation of covid-19 vaccine effectiveness under the
  test-negative design: Connections to causal inference.
\newblock \textit{Epidemiology (Cambridge, Mass.)} \textbf{33}, 325.

\bibitem[{Shi et~al.(2020{\natexlab{a}})Shi, Miao, Nelson \&
  Tchetgen~Tchetgen}]{shi2020multiply}
\textsc{Shi, X.}, \textsc{Miao, W.}, \textsc{Nelson, J.~C.} \&
  \textsc{Tchetgen~Tchetgen, E.~J.} (2020{\natexlab{a}}).
\newblock Multiply robust causal inference with double-negative control
  adjustment for categorical unmeasured confounding.
\newblock \textit{Journal of the Royal Statistical Society: Series B
  (Statistical Methodology)} \textbf{82}, 521--540.

\bibitem[{Shi et~al.(2020{\natexlab{b}})Shi, Miao \&
  Tchetgen~Tchetgen}]{shi2020selective}
\textsc{Shi, X.}, \textsc{Miao, W.} \& \textsc{Tchetgen~Tchetgen, E.}
  (2020{\natexlab{b}}).
\newblock A selective review of negative control methods in epidemiology.
\newblock \textit{Current Epidemiology Reports} , 1--13.

\bibitem[{Stephens et~al.(2014)Stephens, Tchetgen \&
  De~Gruttola}]{stephens2014locally}
\textsc{Stephens, A.}, \textsc{Tchetgen, E.~T.} \& \textsc{De~Gruttola, V.}
  (2014).
\newblock Locally efficient estimation of marginal treatment effects when
  outcomes are correlated: is the prize worth the chase?
\newblock \textit{The international journal of biostatistics} \textbf{10},
  59--75.

\bibitem[{Streeter et~al.(2017)Streeter, Lin, Crathorne, Haasova, Hyde, Melzer
  \& Henley}]{streeter2017adjusting}
\textsc{Streeter, A.~J.}, \textsc{Lin, N.~X.}, \textsc{Crathorne, L.},
  \textsc{Haasova, M.}, \textsc{Hyde, C.}, \textsc{Melzer, D.} \&
  \textsc{Henley, W.~E.} (2017).
\newblock Adjusting for unmeasured confounding in nonrandomized longitudinal
  studies: a methodological review.
\newblock \textit{Journal of clinical epidemiology} \textbf{87}, 23--34.

\bibitem[{Sullivan et~al.(2016)Sullivan, Tchetgen~Tchetgen \&
  Cowling}]{sullivan2016theoretical}
\textsc{Sullivan, S.~G.}, \textsc{Tchetgen~Tchetgen, E.~J.} \& \textsc{Cowling,
  B.~J.} (2016).
\newblock Theoretical basis of the test-negative study design for assessment of
  influenza vaccine effectiveness.
\newblock \textit{American journal of epidemiology} \textbf{184}, 345--353.

\bibitem[{Tchetgen~Tchetgen(2014)}]{tchetgen2014control}
\textsc{Tchetgen~Tchetgen, E.} (2014).
\newblock The control outcome calibration approach for causal inference with
  unobserved confounding.
\newblock \textit{American journal of epidemiology} \textbf{179}, 633--640.

\bibitem[{Tchetgen~Tchetgen et~al.(2020)Tchetgen~Tchetgen, Ying, Cui, Shi \&
  Miao}]{tchetgen2020introduction}
\textsc{Tchetgen~Tchetgen, E.~J.}, \textsc{Ying, A.}, \textsc{Cui, Y.},
  \textsc{Shi, X.} \& \textsc{Miao, W.} (2020).
\newblock An introduction to proximal causal learning.
\newblock \textit{arXiv preprint arXiv:2009.10982} .

\bibitem[{{Van der Vaart}(2000)}]{van2000asymptotic}
\textsc{{Van der Vaart}, A.} (2000).
\newblock \textit{Asymptotic statistics}, vol.~3.
\newblock Cambridge university press.

\end{thebibliography}
\clearpage

\beginsupplement
\setcounter{page}{1}
\begin{center}
    {\textbf{\large Supplementary Material to ``Proximal learning of odds ratio under confounded outcome-dependent sampling designs''}}
\end{center}

\section{Proofs of Theorems.}
\label{supp:proof}

\subsection{Proof of Theorem~\ref{thm:trt-bridge}}

To prove Equation~\eqref{eq:trt-bridge}, we have 
\begin{align}
    & E[ q(A, Z, X)\mid A, W, X, Y=0, S=1]\nonumber\\
    =& E\left\{E[q(A, Z, X)\mid A, U, W, X, Y=0, S=1]\mid A, W, X, Y=0, S=1\right\}\nonumber\\
    \stackrel{A.\ref{assump:nc}}{=}& E\left\{E[q(A, Z, X)\mid A, U, X, Y=0, S=1]\mid A, W, X, Y=0, S=1\right\}\nonumber\\
    \stackrel{A.\ref{assump:nc}}{=}& E\left\{E[ q(A, Z, X)\mid A, U, X]\mid A, W, X, Y=0, S=1\right\}\nonumber\\
    \stackrel{A.\ref{assump:trt-bridge}}{=}&E\left\{\dfrac{1}{P(A\mid  U, X, Y=0, S=1)}\mid A, W, X, Y=0, S=1\right\}\nonumber\\
    =& \int \dfrac{1}{P(A\mid  U=u, X, Y=0, S=1)}f(u\mid A, W, X, Y=0, S=1)du \nonumber\\
    =& \int \dfrac{1}{P(A\mid  U=u, X, Y=0, S=1)}\times\\&\qquad \dfrac{f(u\mid  W, X, Y=0, S=1)P(A\mid U=u, W, X, Y=0, S=1)}{P(A\mid W, X, Y=0, S=1)}du\nonumber\\
    \stackrel{A.\ref{assump:nc}}{=}& \int \dfrac{1}{P(A\mid  U=u, X, Y=0, S=1)}\times\\&\qquad \dfrac{f(u\mid  W, X, Y=0, S=1)P(A\mid U=u, X, Y=0, S=1)}{P(A\mid W, X, Y=0, S=1)}du\nonumber\\
    =& \dfrac{1}{P(A\mid W, X, Y=0, S=1)}\int f(u\mid  W, X, Y=0, S=1)du\nonumber\\
    =& \dfrac{1}{P(A\mid W, X, Y=0, S=1)}.\nonumber
\end{align}

To prove the uniqueness of a solution to Equation~\eqref{eq:trt-bridge} under Assumption~\ref{assump:completeness-a}(b), we suppose there exists another function $q^*$ that solves Equation~\eqref{eq:trt-bridge}, that is,
$$E\{q^*(A, Z, X)\mid A, W, X, Y=0, S=1\} = 1/P(A\mid W, X, Y=0, S=1).$$
Then
$$E\{q^*(A, Z, X)-q(A, Z, X)\mid A, W, X, Y=0, S=1\} = 0.$$

By Assumption~\ref{assump:completeness-a}(b), $q^*(A, Z, X)=q(A, Z, X)$ almost surely.

Suppose there exists a function $q_0$ that solves Equation~\eqref{eq:trt-bridge-u}, then $q_0$ is also the solution to~\eqref{eq:trt-bridge} by the proof above. By the uniqueness of a solution to Equation~\eqref{eq:trt-bridge}, $q_0(A, Z, X)=q(A,Z,X)$ almost surely, which proves that $q$ is the unique solution to Equation \eqref{eq:trt-bridge-u}.

\clearpage
\subsection{Proof of Theorem~\ref{thm:PIPW}}

We first state a useful result for identification:

\begin{lemma}\label{lemma:estimand}Under Model~\ref{mod:logit-model} and a selection mechanism that satisfies Assumption~\ref{assump:selection}, the log odds ratio satisfies 
\begin{equation}\label{eq:or-equation}
    \beta_0 = \log\left(\theta_{1c}/\theta_{0c}\right),
\end{equation}
where, for $a=0,1$, 
\begin{equation}\label{eq:estimand}
    \theta_{ac} = E\left\{c(X)\dfrac{P(A=a\mid U, X,Y=1, S=1)}{P(A=a\mid U, X, Y=0, S=1)}P(Y=1\mid U, X, S=1)\mid  S=1\right\}.
\end{equation}
\end{lemma}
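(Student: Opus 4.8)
The plan is to establish the representation \eqref{eq:or-equation}--\eqref{eq:estimand} by first reducing the problem to a computation inside the $(U,X)$-strata of the selected sample, where Model~\ref{mod:logit-model} and Assumption~\ref{assump:selection} can be combined, and then averaging over $U$ with weight $c(X)$. First I would show that, under Assumption~\ref{assump:selection}, the conditional odds ratio of $Y$ against $A$ given $(U,X,S=1)$ equals that given $(U,X)$, hence equals $e^{\beta_0}$ by Model~\ref{mod:logit-model}. Concretely, by Bayes' rule the odds ratio
$$\frac{P(Y=1\mid A=1,U,X,S=1)/P(Y=0\mid A=1,U,X,S=1)}{P(Y=1\mid A=0,U,X,S=1)/P(Y=0\mid A=0,U,X,S=1)}$$
can be rewritten in terms of $P(S=1\mid Y,A,U,X)$ and the target-population odds ratio; the factor $\xi(U,X)$ from Assumption~\ref{assump:selection} cancels in the ratio-of-ratios, leaving exactly $e^{\beta_0}$.

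Next I would introduce the quantities $\theta_{ac}$ and manipulate them directly. The key algebraic identity is that for each fixed $(u,x)$,
$$\frac{P(A=1\mid U=u,X=x,Y=1,S=1)}{P(A=1\mid U=u,X=x,Y=0,S=1)}\cdot\frac{P(A=0\mid U=u,X=x,Y=0,S=1)}{P(A=0\mid U=u,X=x,Y=1,S=1)}$$
is again the conditional odds ratio of $Y$ against $A$ given $(U=u,X=x,S=1)$, i.e.\ $e^{\beta_0}$, by the first step. I would then write $\theta_{1c}/\theta_{0c}$ as a ratio of $c(X)$-weighted averages over $(U,X)$ of the terms appearing in \eqref{eq:estimand}, and show that the integrand of $\theta_{1c}$ equals $e^{\beta_0}$ times the integrand of $\theta_{0c}$ pointwise in $(u,x)$ --- this uses the pointwise identity above together with the fact that the common factor $P(Y=1\mid U,X,S=1)$ and the ratio $P(A=0\mid \cdot,Y=1,S=1)/P(A=0\mid\cdot,Y=0,S=1)$ appear symmetrically. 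Since the pointwise relation is $e^{\beta_0}$ uniformly in $(u,x)$, it factors out of the outer expectation, yielding $\theta_{1c}=e^{\beta_0}\theta_{0c}$ and hence \eqref{eq:or-equation}. The role of $c(X)$ is only to guarantee both $\theta_{ac}$ are nonzero so the ratio is well-defined; the identity itself holds for any such $c$.

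The main obstacle I anticipate is the bookkeeping in the second step: one must be careful that the factor multiplying $P(Y=1\mid U,X,S=1)$ in \eqref{eq:estimand} is $P(A=a\mid U,X,Y=1,S=1)/P(A=a\mid U,X,Y=0,S=1)$ for the \emph{same} $a$ in numerator and denominator, so the ratio $\theta_{1c}/\theta_{0c}$ is not literally the odds ratio but must be reorganized into it. The cleanest route is to rewrite $P(A=a\mid U,X,Y=1,S=1)\,P(Y=1\mid U,X,S=1) = P(A=a,Y=1\mid U,X,S=1)$ and similarly split $P(A=a\mid U,X,Y=0,S=1)$, so that
$$\frac{\theta_{1c}}{\theta_{0c}}=\frac{E\{c(X)\,P(A=1,Y=1\mid U,X,S=1)/P(A=1\mid U,X,Y=0,S=1)\mid S=1\}}{E\{c(X)\,P(A=0,Y=1\mid U,X,S=1)/P(A=0\mid U,X,Y=0,S=1)\mid S=1\}};$$
then the pointwise ratio of integrands is $\{P(A=1,Y=1\mid U,X,S=1)/P(A=0,Y=1\mid U,X,S=1)\}\cdot\{P(A=0\mid U,X,Y=0,S=1)/P(A=1\mid U,X,Y=0,S=1)\}$, which is exactly the $(U,X,S=1)$-conditional odds ratio of $Y$ against $A$, equal to $e^{\beta_0}$ by step one. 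Everything else is routine probability manipulation, and the proof of the subsequent Theorem~\ref{thm:PIPW} will then reduce \eqref{eq:estimand} to the observed-data moment \eqref{eq:ee-PIPW} using the treatment bridge function $q$ via iterated expectations over $U$ and Assumption~\ref{assump:nc}.
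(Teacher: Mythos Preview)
Your proposal is correct and follows essentially the same route as the paper: first show that Assumption~\ref{assump:selection} makes the $(U,X,S=1)$-conditional odds ratio equal to the $(U,X)$-conditional odds ratio $e^{\beta_0}$, then use the pointwise identity $\dfrac{P(A=1\mid U,X,Y=1,S=1)}{P(A=1\mid U,X,Y=0,S=1)}=e^{\beta_0}\cdot\dfrac{P(A=0\mid U,X,Y=1,S=1)}{P(A=0\mid U,X,Y=0,S=1)}$ to conclude $\theta_{1c}=e^{\beta_0}\theta_{0c}$. Your ``obstacle'' paragraph with the joint-probability rewriting is an equivalent alternative bookkeeping, but it is not needed once you have the pointwise identity above---the paper simply applies it inside the expectation in one line.
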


\begin{proof}
Under Model~\eqref{mod:logit-model} and Assumption~\ref{assump:selection}, we have
\begin{align*}
    &\dfrac{P(Y=1\mid U, X, A=1, S=1)P(Y=0\mid U, X, A=0, S=1)}{P(Y=1\mid U, X, A=0, S=1)P(Y=0\mid U, X, A=1, S=1)} \\=& \exp(\beta_0)\times \dfrac{P(S=1\mid Y=1, A=1, U, X)}{P(S=1\mid Y=0, A=1, U, X)}\times \dfrac{P(S=1\mid Y=0, A=0, U, X)}{P(S=1\mid Y=1, A=0, U, X)}\\\stackrel{A.\ref{assump:selection}}{=}& \exp(\beta_0)\times \xi(U, X)\times \xi(U, X)^{-1} = \exp(\beta_0).
\end{align*}
Therefore 
\begin{align*}
    \theta_{1c} &=  E\left[c(X)\dfrac{P(A=1\mid U, X,Y=1, S=1)}{P(A=1\mid U, X, Y=0, S=1)}P(Y=1\mid U, X, S=1)\,\mid \, S=1\right] \\
    &=  E\left[c(X)\dfrac{P(A=0\mid U, X,Y=1, S=1)}{P(A=0\mid U, X, Y=0, S=1)}P(Y=1\mid U, X, S=1)\, \mid \, S=1\right]\exp(\beta_0) \\
    &= \theta_{0c}\exp(\beta_0),
\end{align*}
which proves Lemma~\ref{lemma:estimand}.

\end{proof}

By Lemma~\ref{lemma:estimand}, it suffices to prove that $\theta_{ac}=E[c(X)I(A=a)Yq(A, Z, X)\mid S=1]$. Under Assumptions~\eqref{assump:nc} and \eqref{assump:completeness-a}, $q(A,Z,X)$ solves Equation~\eqref{eq:trt-bridge-u}. We have

\begin{align}
    & E[c(X)I(A=a)Yq(a, Z, X)\mid S=1]\nonumber\\
    =&  E[c(X)I(A=a)YE\left(q(a, Z, X)\mid A=a, U, X, Y=1, S=1\right)\mid S=1]\nonumber\\
    \stackrel{A.\ref{assump:nc}}{=}& E[c(X)I(A=a)YE\left(q(a, Z, X)\mid A=a, U, X, Y=0, S=1\right)Y\mid S=1]\nonumber\\
     \stackrel{A.\ref{assump:trt-bridge}}{=}& E[c(X)I(A=a)Y\dfrac{1}{P(A=a\mid U, X, Y=0, S=1)}\mid S=1]\nonumber\\
     =& E[c(X)E\{I(A=a)\mid U, X, Y=1, S=1\}\dfrac{1}{P(A=a\mid U, X, Y=0, S=1)}Y\mid S=1]\nonumber\\
     =& E[c(X)\dfrac{P(A=a\mid U, X, Y=1, S=1)}{P(A=a\mid U, X, Y=0, S=1)}Y\mid S=1]\nonumber\\
     =& E[c(X)\dfrac{P(A=a\mid U, X, Y=1, S=1)}{P(A=a\mid U, X, Y=0, S=1)}P(Y=1\mid  U, X, S=1)\mid S=1]\nonumber\\
     =& \theta_{ac}.\label{eq:pf-tac-PIPW}
\end{align}

\clearpage
\subsection{Proof of Theorem~\ref{thm:outcome-bridge}}

First, We show that Equation~\eqref{eq:h-odds-u} is equivalent to the moment equation

\begin{equation}\label{eq:outcome-bridge-u}
    E\{(1-Y)h(a, W, X)\mid A=a, U, X, S=1\} = E(Y\mid A=a, U, X, S=1).
\end{equation}

This is because
\begin{align*}
    & E\{(1-Y)h(a, W, X)\mid A=a, U, X, S=1\}\\
    =& E\{h(a, W, X)\mid A=a, U, X, Y=0, S=1\}P(Y=0\mid A=a, U, X, S=1)\\
    \stackrel{A.\ref{assump:outcome-bridge}}{=}&\dfrac{P(Y=1\mid A=a, U, X, S=1)}{P(Y=0\mid A=a, U, X, S=1)}P(Y=0\mid A=a, U, X, S=1)\\
    =& P(Y=1\mid A=a, U, X, S=1)\\
    =& E(Y\mid A=a, U, X, S=1).
\end{align*}

Next, we show that Assumption~\ref{assump:nc} and Equation~\eqref{eq:outcome-bridge-u} implies the moment equation
\begin{equation}\label{eq:outcome-bridge}
    E\{(1-Y)h(a, W, X)\mid A=a, Z, X, S=1\} = E(Y\mid A=a, Z, X, S=1).
\end{equation}

This is because
\begin{align*}
    &E\{(1-Y)h(A, W, X) - Y\mid A, Z, X, S=1\}\\
    = &E[E\{(1-Y)h(A, W, X) - Y\mid A, U, Z, X, S=1\}\mid A, Z, X, S=1]\\
    \stackrel{A.\ref{assump:nc}}{=}& E[E\{(1-Y)h(A, W, X) - Y\mid A, U, X, S=1\}\mid A, Z, X, S=1]\\
    \stackrel{Eq\eqref{eq:outcome-bridge-u}}{=}& 0.
\end{align*}
Finally, the left-hand side of Equation~\eqref{eq:outcome-bridge} equals
$$E\{h(a, W, X)\mid A=a, Z, X, Y=0, S=1\}P(Y=0\mid A=a, Z, X, S=1).$$

Rearranging the terms, we obtain Equation~\eqref{eq:h-odds}.

To prove the uniqueness of a solution to Equation~\eqref{eq:h-odds} under Assumption~\ref{assump:completeness-b}(b), we suppose there exists another function $h^*$ that solves Equation~\eqref{eq:h-odds}, that is,
$$E\{h^*(a,W,X)\mid A=a, Z, X,Y=0, S=1\} = \frac{P(Y=1\mid A=a,Z,X,S=1)}{P(Y=0\mid A=a,Z,X, S=1)}$$
Then
$$E\{h^*(a, W, X)-h(a, W, X)\mid A=a, Z, X, Y=0, S=1\} = 0.$$

By Assumption~\ref{assump:completeness-b}(b), $h^*(A, Z, X)=h(A, Z, X)$ almost surely.

Suppose there exists a function $h_0$ that solves Equation~\eqref{eq:h-odds-u}, then $h_0$ is also the solution to~\eqref{eq:h-odds} by the proof above. By the uniqueness of a solution to Equation~\eqref{eq:h-odds}, $h_0(A, Z, X)=h(A,Z,X)$ almost surely, which proves that $h$ is the unique solution to Equation \eqref{eq:h-odds-u}.

\clearpage
\subsection{Proof of Theorem~\ref{thm:POR}}
  Under Model~\ref{mod:logit-model} and Assumptions~\ref{assump:selection} and \ref{assump:nc}, the result in Lemma~\ref{lemma:estimand} holds. It suffices to prove that
\begin{equation}\label{eq:pf-POR}
    \theta_{ac}=E\{c(X)(1-Y)h(a, W, X)\mid S=1\}
\end{equation}
for $a=0,1$.

For $a=0, 1$, we have that 
\begin{align*}
    & E[c(X)(1-Y)h(a, W, X)\mid S=1]\nonumber\\
    =& E[c(X)(1-Y)E\{h(a, W, X)\mid U, X, Y=0, S=1\}\mid S=1]\nonumber\\
    \stackrel{A.\ref{assump:nc}}{=}& E[c(X)(1-Y)E\{h(a, W, X)\mid A=a, U, X, Y=0, S=1\}\mid S=1]\nonumber\\
    \stackrel{A.\ref{assump:outcome-bridge}}{=}& E[c(X)(1-Y)\dfrac{P(Y=1\mid A=a, U, X, S=1)}{P(Y=0\mid A=a, U, X, S=1)}\mid S=1]\nonumber\\
    =& E[c(X)E(1-Y\mid U, X, S=1)\dfrac{P(Y=1\mid A=a, U, X, S=1)}{P(Y=0\mid A=a, U, X, S=1)}\mid S=1]\nonumber\\
    =& E[c(X)P(Y=0\mid U, X, S=1)\dfrac{P(Y=1\mid A=a, U, X, S=1)}{P(Y=0\mid A=a, U, X, S=1)}\mid S=1]\nonumber\\
    =& E[c(X) \dfrac{P(A=a\mid U, X, Y=1, S=1)}{P(A=a\mid U, X, Y=0, S=1)}P(Y=1\mid U, X, S=1)\mid S=1]\nonumber\\
    =& \theta_{ac}.
\end{align*}

\clearpage

\clearpage
\subsection{Proof of Theorem~\ref{thm:PDR}}

To prove
 \begin{align}
\beta_0 &= \log\bigg(E\bigg\{c(X)\bigg[I(A=1)q(A, Z, X)\left\{Y - (1-Y)h(A, W, X)\right\}+\nonumber\\
&\qquad (1-Y)h(1, W, X)\bigg]\mid S=1\bigg\} /E\bigg\{c(X)\bigg[I(A=1)q(A, Z, X)\{Y - \nonumber\\
&\qquad (1-Y)h(A, W, X)\}+(1-Y)h(0, W, X)\bigg]\mid S=1\bigg\}\bigg),\nonumber
\end{align}
by Lemma~\ref{lemma:estimand} it suffices to prove

\begin{align}
    \theta_{ac}&=E\big\{c(X)\big[(1-Y)h(a, W, X)-\nonumber\\&\qquad I(A=a)q(a, Z, X)\big\{(1-Y)h(A, W, X)-Y\big\}\big]\mid S=1\big\}.\label{eq:pf-tac}
\end{align}
for $a=0,1$.
\begin{enumerate}[(a)]
    \item If Assumptions~ \ref{assump:nc} and Equation~\eqref{eq:trt-bridge-u} hold, then $q$ solves Equation~\eqref{eq:trt-bridge} and the conclusion in Theorem~\ref{thm:PIPW} holds. 
    
    The right-hand side of Equation~\eqref{eq:pf-tac} equals
    \begin{align*}
  & E\bigg\{c(X)\big[(1-Y)h(a, W, X)-I(A=a)(1-Y)q(a, Z, X)h(A, W, X)+\\&\qquad I(A=a)q(a, Z, X)Y\big]\mid S=1\bigg\}\\
    \stackrel{Eq\eqref{eq:pf-tac-PIPW}}{=}& E\bigg\{c(X)(1-Y)h(a, W, X)\big[1-I(A=a)q(a, Z, X)\big]\mid S=1\bigg\}+\theta_{ac}\\
    =&  E\bigg\{c(X)(1-Y)h(a, W, X)E\big[1-\\&\qquad I(A=a)q(a, Z, X)\mid A, W, X, Y=0, S=1\big]\mid S=1\bigg\}+ \theta_{ac}\\
    \stackrel{Eq\eqref{eq:trt-bridge}}{=}&E\bigg\{c(X)(1-Y)h(a, W, X)\big[1-\dfrac{I(A=a)}{P(A=a\mid W, X, Y=0, S=1)}\big]\mid S=1\bigg\}+\theta_{ac}\\
    =&E\bigg\{c(X)(1-Y)\big[h(a, W, X)-\dfrac{I(A=a)h(a, W, X)}{P(A=a\mid W, X, Y=0, S=1)}\big]\mid S=1\bigg\}+\theta_{ac}\\
    =& E\bigg\{c(X)(1-Y)E\big[h(a, W, X)-\\&\qquad\dfrac{I(A=a)h(a, W, X)}{P(A=a\mid W, X, Y=0, S=1)}\mid W, X, Y=0, S=1\big]\mid S=1\bigg\}+\theta_{ac}\\
    =& E\bigg\{c(X)(1-Y)\big[h(a, W, X)-h(a, W, X)\big]\mid S=1\bigg\}+\theta_{ac}\\
    =& \theta_{ac}.
\end{align*}
    
\item If Assumptions~\ref{assump:nc} and Equation~\eqref{eq:h-odds-u} hold, then $h$ solves Equation~\eqref{eq:outcome-bridge} and the conclusion in Theorem~\ref{thm:POR} holds. The right-hand side of Equation~\eqref{eq:pf-tac} equals

\begin{align*}
    &E\bigg\{c(X)\big[(1-Y)h(a, W, X)-I(A=a)(1-Y)q(a, Z, X)h(a, W, X)+\\&\qquad I(A=a)q(a, Z, X)Y\big]\mid S=1\bigg\}\\
    =& E\bigg\{c(X)(1-Y)h(a, W, X)\mid S=1\bigg\}- \\&\qquad E\bigg\{c(X)I(A=a)q(a, Z, X)[(1-Y)h(A, W, X)-Y]\mid S=1\bigg\}\\
    \stackrel{Eq\eqref{eq:pf-POR}}{=}& \theta_{ac} - E\bigg\{c(X)I(A=a)q(a, Z, X)\times\\&\qquad E[(1-Y)h(A, W, X)-Y\mid A=a, Z, X, S=1]\mid \\&\qquad S=1\bigg\}\\
    \stackrel{Eq\eqref{eq:outcome-bridge}}{=}& \theta_{ac}-0\\=&\theta_{ac}.
\end{align*}
    
\end{enumerate}

\clearpage
\subsection{Proof of Theorem~\ref{thm:q-h-ident}}

To prove Equation~\eqref{eq:q-identification}, we have
\begin{align*}
    & E\{(1-Y)[\kappa_1(A, W, X)q(A, Z, X)-\kappa_1(1, W, X) - \kappa_1(0, W, X)]\mid S=1\}\\
    =&  E\{(1-Y)[\kappa_1(A, W, X)E[q(A, Z, X)\mid A, W, X, Y=0, S=1]-\kappa_1(1, W, X) - \\&\qquad \kappa_1(0, W, X)] \mid S=1\}\\
    \stackrel{Eq\eqref{eq:trt-bridge}}{=}& E\{(1-Y)[\dfrac{\kappa_1(A, W, X)}{P(A\mid W, X, Y=0, S=1)}-\kappa_1(1, W, X) - \kappa_1(0, W, X)]\mid  S=1\}\\
    =& E\{(1-Y)[E\left(\dfrac{\kappa_1(A, W, X)}{P(A\mid W, X, Y=0, S=1)}\mid W, X, Y=0, S=1\right)-\kappa_1(1, W, X) - \\&\qquad\kappa_1(0, W, X)]\mid  S=1\}\\
    =& E\{(1-Y)[\kappa_1(1, W, X) + \kappa_1(0, W, X)-\kappa_1(1, W, X) - \kappa_1(0, W, X)]\mid S=1\}\\
    =& 0.
\end{align*}

Equation~\eqref{eq:h-identification} directly follows Equation~\eqref{eq:outcome-bridge}.

\clearpage
\subsection{Proof of Theorem~\ref{thm:doubly-robust}}
Under Model~\ref{mod:logit-model} and Assumptions~\ref{assump:selection} and \ref{assump:nc}, Lemma~\ref{lemma:estimand} holds.
\begin{enumerate}[(a)]
\item If $q(A, Z, X;\tau)$ is correctly specified and Assumption~\ref{assump:completeness-a}(b) holds, then Equation~\eqref{eq:q-identification} has a unique solution that is $\tau_0$ such that $q(A, Z, X;\tau_0)=q(A, Z, X)$, and $q(A, Z, X;\hat\tau)$ is consistent for $q(A, Z, X)$. By Law of Large Numbers and Continuous Mapping Theorem, we have
\begin{align*}
    \hat\beta_{c, \text{PIPW}}&\overset{p}{\rightarrow}\log\left(\dfrac{E\left\{c(X)I(A=1)Yq(A, Z, X)\mid S=1\right\}}{E\left\{c(X)I(A=0)Yq(A, Z, X)\mid S=1\right\}}\right)\\
    &\stackrel{Eq\eqref{eq:pf-tac-PIPW}}{=}\log(\theta_{1c}/\theta_{0c})\\
    &= \beta_0.
\end{align*}

\item If $h(A, W, X;\psi)$ is correctly specified and Assumption~\ref{assump:completeness-b}(b) holds, then Equation~\eqref{eq:h-identification} has a unique solution that is $\psi_0$ such that $h(A, W, X;\psi_0)=h(A, W,X)$, and $h(A, W, X;\hat\psi)$ is consistent for $h(A, W, X)$. By   Law of Large Numbers and Continuous Mapping Theorem, we have
\begin{align*}
    \hat\beta_{c, \text{POR}}&\overset{p}{\rightarrow}log\left(\dfrac{E\left\{c(X)(1-Y)h(1, W, X)\mid S=1\right\}}{E\left\{c(X)(1-Y)h(0, W, X)\mid S=1\right\}}\right)\\
    &\stackrel{Eq\eqref{eq:pf-POR}}{=}\log(\theta_{1c}/\theta_{0c})\\
    &= \beta_0.
\end{align*}

\item Write $q^*(A, Z, X)$ and $h^*(A, W, X)$ as the probability limit of $q(A, Z, X;\hat\tau)$ and $h(A, W, X;\hat\psi)$, respectively. The stated result follows Law of Large Numbers, Continuous Mapping Theorem, and the proof of Theorem~\ref{thm:PDR}.

\end{enumerate}

\clearpage
\section{Approximate equivalence of RR and OR for a rare outcome}\label{supp:or-and-rr}
Under Model~\ref{mod:logit-model}, the conditional odds ratio is
$$\text{OR}(U,X)=\exp(\beta_0)=\dfrac{P(Y=1\mid A=1, U, X)P(Y=0\mid A=0, U, X)}{P(Y=1\mid A=0, U, X)P(Y=1\mid A=1, U, X)}.$$

Consider the conditional risk ratio
$$\text{RR}(U,X):=\dfrac{P(Y=1\mid A=1, U, X)}{P(Y=1\mid A=0, U, X)} =\text{OR}(U,X)\times \dfrac{P(Y=0\mid A=1, U, X)}{P(Y=0\mid A=0, U, X)}.$$

Under a rare infection assumption that 
$$0\leq P(Y=1\mid A=a, U,X)\leq \delta$$
almost surely for $a=0,1$ and a small $\delta >0$, OR and RR satisfies
\begin{equation}\label{eq:or-rr}1-\delta\leq\dfrac{\text{RR}(U,X)}{\text{OR}(U,X)}=\dfrac{P(Y=0\mid A=1, U, X)}{P(Y=0\mid A=0, U, X)}\leq \dfrac{1}{1-\delta}.\end{equation}

If $\delta\approx 0$, then $\text{OR}\approx\text{RR}$.

Furthermore, denote $Y(a)$ as the potential outcome had the individual received the treatment $A=a$. We make the following standard identifiability assumptions~\citep{hernan2020causal}:
\begin{enumerate}[(a)]
    \item (Consistency) $Y(a)=Y$ if $A=a$;
    \item (Confounded ignorability) $Y(a)\indep A\mid U, X$ for $a=0,1$;
    \item (Positivity) $0<P(A=1\mid U, X)<1$ almost surely.
\end{enumerate}

Under these assumptions, the marginal causal risk ratio is
\begin{align*}
    \text{cRR} &:=\dfrac{P\{Y(1)=1\}}{P\{Y(0)=1\}} \\
    &= \dfrac{\int\int P\{Y(1)=1\mid U=u, X=x\}f(u, x)dudx}{\int\int P\{Y(0)=1\mid U=u, X=x\}f(u, x)dudx}\\
    &\stackrel{(b)}{=} \dfrac{\int\int P\{Y(1)=1\mid A=1,U=u, X=x\}f(u, x)dudx}{\int\int P\{Y(0)=1\mid A=0, U=u, X=x\}f(u, x)dudx}\\
    &\stackrel{(a)}{=} \dfrac{\int\int P\{Y=1\mid A=1,U=u, X=x\}f(u, x)dudx}{\int\int P\{Y=1\mid A=0, U=u, X=x\}f(u, x)dudx}\\
    &= \dfrac{\int\int \text{RR}(U, X)P\{Y=1\mid A=0,U=u, X=x\}f(u, x)dudx}{\int\int P\{Y=1\mid A=0, U=u, X=x\}f(u, x)dudx}\\
    &\stackrel{Eq\eqref{eq:or-rr}}{\leq} \dfrac{1}{1-\delta}\exp(\beta_0)
\end{align*}
and
$$  \text{cRR}\geq (1-\delta)\exp(\beta_0).$$

If $\delta\approx 0$, then $\text{cRR}\approx \exp(\beta_0)$.

\clearpage
\section{Identification of marginal causal risk ratio in a test-negative design with diseased controls}\label{supp:tnd}

Similar to \citet{schnitzer2022estimands}, we denote $D$ as a categorical variable indicating a subject's outcome, where $D=2$ indicates $Y=1$, i.e. the subject is positive for the infection of interest; $D=1$ indicates that the subject has $Y=0$ but is positive for another outcome, referred to as the ``control infection", that is known a priori to not be affected by $A$; and $D=0$ indicates that neither outcome is positive. To formalize the assumption of no treatment effect on the control disease, we assume 
\begin{equation}\label{eq:nco}
    P(D=1\mid A=1, U, X)=P(D=1\mid A=0, U, X).
\end{equation}
As sample selection is only restricted to subjects with infection, we have 
\begin{equation}\label{eq:sel-s1}
P(D=0\mid S=1, U, X, A)=0.
\end{equation}
Rather than Assumption~\ref{assump:selection}, we make the stronger assumption that 
\begin{equation}\label{eq:sel-s2}
A\indep S\mid D, U, X\quad\text{and}\quad A\indep S\mid Y, U, X.
\end{equation} That is, a subject's treatment status has no impact on the sample selection. This assumption holds in a test-negative study if, given a subject's disease status and other traits included in $(U,X)$, their decision to seek care does not depend on their treatment status.

Under Model~\ref{mod:logit-model} and above assumptions, we have
\begin{align*}
    \exp(\beta_0) &= \dfrac{P(A=1\mid Y=1, U, X)P(A=0\mid Y=0, U, X)}{P(A=1\mid Y=0, U, X)P(A=0\mid Y=1, U, X)}\\
     &\stackrel{Eq\eqref{eq:sel-s2}}{=} \dfrac{P(A=1\mid Y=1, U, X, S=1)P(A=0\mid Y=0, U, X, S=1)}{P(A=1\mid Y=0, U, X, S=1)P(A=0\mid Y=1, U, X, S=1)}\\
    &= \dfrac{P(A=1\mid D=2, U, X, S=1)P(A=0\mid D\neq 2, U, X, S=1)}{P(A=1\mid D\neq 2, U, X, S=1)P(A=0\mid D=2, U, X, S=1)}\\
    &\stackrel{Eq\eqref{eq:sel-s1}}{=} \dfrac{P(A=1\mid D=2, U, X, S=1)P(A=0\mid D=1, U, X, S=1)}{P(A=1\mid D=1, U, X, S=1)P(A=0\mid D=2, U, X, S=1)}\\
    &\stackrel{Eq\eqref{eq:sel-s2}}{=}  \dfrac{P(A=1\mid D=2, U, X)P(A=0\mid D=1, U, X)}{P(A=1\mid D=1, U, X)P(A=0\mid D=2, U, X)}\\
    &= \dfrac{P(D=2\mid A=1, U, X)P(D=1\mid A=0, U, X)}{P(D=1\mid A=1, U, X)P(D=2\mid A=0, U, X)}\\
    &\stackrel{Eq\eqref{eq:nco}}{=} \dfrac{P(D=2\mid A=1, U, X, )}{P(D=2\mid A=0, U, X)}\\
    &= \dfrac{P(Y=1\mid A=1, U, X )}{P(Y=1\mid A=0, U, X)}\\
    &\equiv \text{RR}(U,X)
\end{align*}

Under the standard identifiability assumptions in Section~\ref{supp:or-and-rr}, the marginal causal risk ratio is
\begin{align*}
    \text{cRR} &= \dfrac{\int\int \text{RR}(U, X)P\{Y=1\mid A=0,U=u, X=x\}f(u, x)dudx}{\int\int P\{Y=1\mid A=0, U=u, X=x\}f(u, x)dudx}\\
    &= \exp(\beta_0).
\end{align*}
In the above discussion, we do not invoke the rare outcome assumption.

\clearpage

\section{Regularity conditions for the existence of a solution to Equations~\eqref{eq:trt-bridge-u} and \eqref{eq:outcome-bridge-u}}\label{supp:bridge-existence}

The results in this section directly adapted from Appendix B of \citet{cui2020semiparametric} and Section B of the Supplementary Materials of \citet{li2022double}.

Let $L^2\{F(t)\}$ denote the Hilbert space of all square-integrable functions of $t$ with respect to distribution function $F(t)$, equiped with inner product $\langle g_1, g_2\rangle=\int g_1(t)g_2(t)dF(t)$. Let $T_{a,u,x}:L^2\{F(z|a,x, Y=0, S=1)\}\rightarrow L^2\{F(u|a, x, Y=0, S=1)\}$ denote the conditional expectation operator $T_{a,u,x}h=E[h(a, Z, x)|A=a, U=u, X=x, Y=0, S=1]$ and let $(\lambda_{a,u,x,n},\varphi_{a,u,x,n},\phi_{a,u,x,n})$ denote a singular value decomposition of $T_{a,u,x}$. Similarly, let $T'_{a,u,x}:L^2\{F(w|a,x, Y=0, S=1)\}\rightarrow L^2\{F(u|a, x, Y=0, S=1)\}$ denote the conditional expectation operator $T'_{a,u,x}g=E[g(a, W, x)|A=a, U=u, X=x, Y=0, S=1]$ and let $(\lambda'_{a,u,x,n},\varphi'_{a,u,x,n},\phi'_{a,u,x,n})$ denote a singular value decomposition of $T'_{a,u,x}$. Consider the following regularity conditions:
\begin{enumerate}[(1)]
    \item $\int\int f(z|a, u, x, Y=0, S=1)f(u|a,z,x, Y=0, S=1)dzdu<\infty$;
    \item $\int P^{-2}(A=a|U=u,X=x, Y=0, S=1)f(u|a,x, Y=0, S=1)du<\infty$;
    \item $\sum_{n=1}^\infty \lambda_{a,u,x,n}^{-2}|\langle P^{-1}(A=a|U=u,X=x, Y=0, S=1),\phi_{a,u,x,n}\rangle|^2<\infty$;
    \item $\sum_{n=1}^\infty (\lambda_{a,u,x,n}')^{-2}|\langle P^{-1}(A=a|U=u,X=x, Y=0, S=1),\phi_{a,u,x,n}'\rangle|^2<\infty$.
\end{enumerate}

The solution to Equation~\eqref{eq:trt-bridge-u} exists if Assumption~\ref{assump:completeness-a}(a) and regularity conditions (1), (2) and (3) hold. The solution to Equation~\eqref{eq:h-odds-u} exists if Assumption~\ref{assump:completeness-b}(a) and regularity conditions (1), (2) and (4) hold.
\clearpage
\section{Summary of our assumptions and results and those in \citet{li2022double}}\label{supp:assump}

\begin{table}[!h]
    \centering
    \caption{Summary of models, assumptions and identification results, with comparison to those in \citet{li2022double} for identifying conditional risk ratio}
    \resizebox{0.98\textwidth}{!}{
    \begin{tabular}{l|l|l|l}
    \hhline{====}
    \multicolumn{2}{c|}{\multirow{2}{*}{}}&\citet{li2022double} Sections 2.2-2.4 & PIPW identification  \\
     \multicolumn{2}{c|}{}& &(\citet{li2022double} Section 2.5)\\
    \hline
    \multicolumn{2}{l|}{Model} & $\log\left(\dfrac{P(Y=1\mid A=1, U, X)}{P(Y=1\mid A=0, U, X)}\right)=\beta_0$ & Model~\ref{mod:logit-model}\\
    \hline 
    \multicolumn{2}{l|}{Interpretation of $\beta_0$}& conditional risk ratio & conditional odds ratio \\
    \hline
      \multicolumn{2}{l|}{Selection mechanism} & $A\indep S\mid Y, U, X$ & Assumption~\ref{assump:selection}\\
    \hline
   \multicolumn{2}{l|}{\multirow{4}{*}{Proximal independence}} &$Z\indep (Y,S)|A, U, X$ & Assumption~\ref{assump:nc}\\
   \multicolumn{2}{l|}{} &  $W\indep A\mid U, X$ & \\
    \multicolumn{2}{l|}{} & $W \indep Z\mid A, U, X, Y$ & \\
    \multicolumn{2}{l|}{} & $S\indep Z\mid A, U, X, W, Y$ & \\
    \hline 
    \multicolumn{2}{l|}{$q$ definition} & $\begin{array}{c}E\{q(a, Z, X\mid A=a, U, X)\}\\\quad =1/P(A=a\mid U, X)\end{array}$ & Equation~\eqref{eq:trt-bridge-u}\\
    \hline 
    \multicolumn{2}{l|}{$h$ definition} & / & / \\
    \hline
     \multicolumn{2}{l|}{Completeness} & Assumption~\ref{assump:completeness-a}(b) & Assumption~\ref{assump:completeness-a}(b)\\
     \hline
   \multirow{3}{*}{Identification} & $q$ & Equation~\eqref{eq:trt-bridge} & Equation~\eqref{eq:trt-bridge}\\
   \cline{2-4}
   & $h$ & / & / \\
   \cline{2-4}
   & $\beta_0$ & Equation~\eqref{eq:beta-PIPW}& Equation~\eqref{eq:beta-PIPW}\\
   \hline
   \multicolumn{2}{l|}{Note} & (1) Identification of $q$ and $\beta_0$ requires a & Under standard identifiability   \\
   \multicolumn{2}{l|}{} & rare outcome condition; & assumptions in Section~\ref{supp:or-and-rr}, if either  \\
   \multicolumn{2}{l|}{} & (2) Under standard identifiability   & the outcome is rare or under the \\
   \multicolumn{2}{l|}{} & assumptions in Section~\ref{supp:or-and-rr}, $\beta_0$ & setting of Section~\ref{supp:tnd}, $\beta_0$ identifies the \\
   \multicolumn{2}{l|}{} & identifies the marginal causal risk ratio. & marginal causal risk ratio.\\
   \multicolumn{2}{l|}{} & (3) The proximal independence conditions  & \\
   \multicolumn{2}{l|}{} & can be replaced by Assumption~\ref{assump:nc}&\\
   \hhline{====}
    \multicolumn{2}{c|}{}& POR identification & PDR closed-form expression\\
    \hline
    \multicolumn{2}{l|}{Model} & Model~\ref{mod:logit-model} & Model~\ref{mod:logit-model}\\
    \hline 
    \multicolumn{2}{l|}{Interpretation of $\beta_0$}& conditional odds ratio & conditional odds ratio \\
    \hline
      \multicolumn{2}{l|}{Selection mechanism} & Assumption~\ref{assump:selection} & Assumption~\ref{assump:selection}\\
    \hline
   \multicolumn{2}{l|}{Proximal independence} &Assumption~\ref{assump:nc} & Assumption~\ref{assump:nc}\\
    \hline 
    \multicolumn{2}{l|}{$q$ definition} & / & Equation~\eqref{eq:trt-bridge-u}\\
    \hline 
    \multicolumn{2}{l|}{$h$ definition} & Equation~\eqref{eq:h-odds-u} & Equation~\eqref{eq:h-odds-u} \\
    \hline
     \multicolumn{2}{l|}{Completeness} & Assumption~\ref{assump:completeness-b}(b) & Assumptions~\ref{assump:completeness-a}(b), \ref{assump:completeness-b}(b)\\
     \hline
   \multirow{3}{*}{Identification} & $q$ & / & Equation~\eqref{eq:trt-bridge} + Assumption~\ref{assump:completeness-a}(b)\\
   \cline{2-4}
   & $h$ & Equation~\eqref{eq:h-odds} & Equation~\eqref{eq:h-odds} + Assumption~\ref{assump:completeness-b}(b) \\
   \cline{2-4}
   & $\beta_0$ & Equation~\eqref{eq:beta-POR}& Equation~\eqref{eq:beta-PDR}\\
   \hline
   \multicolumn{2}{l|}{Note} &  Under standard identifiability  a & (1) Under standard identifiability   \\
   \multicolumn{2}{l|}{} & assumptions in Section~\ref{supp:or-and-rr}, if either & assumptions in Section~\ref{supp:or-and-rr}, if either  \\
   \multicolumn{2}{l|}{} & the outcome is rare or under the & the outcome is rare or under the \\
   \multicolumn{2}{l|}{} & setting of Section~\ref{supp:tnd}, $\beta_0$ identifies the $\beta_0$ & setting of Section~\ref{supp:tnd}, $\beta_0$ identifies the \\
   \multicolumn{2}{l|}{} &  marginal causal risk ratio. & marginal causal risk ratio.\\
   \multicolumn{2}{l|}{}& & (2) Equation~\eqref{eq:beta-POR} holds if either $q$ is a \\
   \multicolumn{2}{l|}{}& & solution to Equation~\eqref{eq:trt-bridge-u} or $h$ is a \\
   \multicolumn{2}{l|}{}& & solution to Equation~\eqref{eq:h-odds-u}.\\
   \hline
    \end{tabular}
    }
    \label{tab:data_app}
\end{table}
\clearpage

\section{Detailed setting and derivation of $q(A, Z, X)$ and $h(A, W, X)$ in Example~\ref{eg:cont}}\label{supp:bridge-example-cont}

If $Z, W$ are both multivariate Gaussian random variables satisfying
\begin{align*}
    Z\mid A, U, X\sim N(\mu_{0Z}+\mu_{AZ}A+\mu_{UZ}^TU + \mu_{XZ}^TX,\Sigma_{Z}),\\
    W\mid A, U, X, Y\sim N(\mu_{0W} + \mu_{UW}^TU + \mu_{XW}^TX + \mu_{YW}Y,\Sigma_W).
\end{align*}
Assume that the treatment $A$ and outcome $Y$ both follow a logistic regression model given $(U, X)$:
\begin{align*}
    P(A=1\mid U, X) &= [1+\exp\{-(\mu_{0A}+\mu_{UA}^TU + \mu_{XA}^TX)\}]^{-1},\\
    P(Y=1\mid A, U, X) &= \exp(\mu_{0Y}+\beta_0A + \mu_{UY}^TU + \mu_{XY}^TX).
\end{align*}
Further assume the selection indicator $S$ follows a log-linear model of the form:
$$P(S=1\mid A, U, X, Y)=\exp(\mu_{AS}A + \mu_{YS}Y +\xi(U, X)).$$

It is straightforward to verify that the data distribution satisfies Model~\ref{mod:logit-model} and Assumptions~\ref{assump:selection}-\ref{assump:nc}. Then a suitable model for $h$ is
\begin{equation}
    h(a, w, x;\psi) = \exp(\psi_0 + \psi_{A}a + \psi_W^Tw + \psi_X^Tx).
\end{equation}
While a  closed-form parametric model for $q$ does not exist in this example, if the outcome $Y$ is rare in the target population in the sense that 
$$P(Y=1\mid A=a, U=u, X=x)\approx 0\qquad\text{almost surely},$$
then the parametric model below may serve as an approximation
\begin{equation}
    q(a, z, x;\tau) = 1+\exp\{(1-2A)(\tau_0+ \tau_{A}a + \tau_Z^Tz+\tau_{X}^Tx)\}.
\end{equation}

We first derive the outcome confounding bridge function $h$ that satisfies Equation~\eqref{eq:h-odds-u}:
\begin{align*}
    &E\{h(a,W,X)\mid A=a, U, X,Y=0, S=1\} =\\&\qquad P(Y=1\mid A=a, U,X,S=1)/P(Y=0\mid A=a, U,X, S=1).
\end{align*}

We have 
\begin{align*}
   & P(Y=1\mid A=a, Z, X)P(S=1\mid A=a, Y=1, Z, X)=\\
    &\qquad \dfrac{\exp(\mu_{0Y}+\beta_0a + \mu_{UY}U + \mu_{XY}^TX)}{1+\exp(\mu_{0Y}+\beta_0a + \mu_{UY}U + \mu_{XY}^TX)} \exp(\mu_{0S}+\mu_{AS}a+\mu_{YS} +\xi(U, X))\\
    & P(Y=0\mid A=a, U, X)P(S=1\mid A=a, Y=0, U=1, U, X)=\\
    &\qquad \dfrac{1}{1+\exp(\mu_{0Y}+\beta_0a + \mu_{UY}U + \mu_{XY}^TX)} \exp(\mu_{0S}+\mu_{AS}a+\xi(U, X)).
\end{align*}
Therefore
\begin{align*}
    &P(Y=1\mid A=a, U, X, S=1)\\
    =& P(Y=1\mid A=a, U, X)P(S=1\mid A=a, Y=1, U, X)/\{P(Y=1\mid A=a, U, X)\times \\&\qquad  P(S=1\mid A=a, Y=1, U, X) +P(Y=0\mid A=a, U, X)P(S=1\mid A=a, Y=0, U, X)\}\\
    =& [1+\exp\{-(\mu_{0Y}+\mu_{YS}+\beta_0a + \mu_{UY}U + \mu_{XY}^TX)\}]^{-1}.
\end{align*}

and
\begin{align*}
\dfrac{P(Y=1\mid A=a, U, X, S=1)}{P(Y=0\mid A=a, U, X, S=1)}=\exp(\mu_{0Y}+\mu_{YS}+\beta_0a + \mu_{UY}U + \mu_{XY}^TX).
\end{align*}

We further have $$W\mid A=a, U, X, Y=0, S=1 \sim N(\mu_{0W} + \mu_{UW}U + \mu_{XW}^TX).$$

Now assume that $h(a, w, x) = \exp(\psi_0 + \psi_Aa + \psi_Ww + \psi_X^Tx)$, then
\begin{align*}
    & E\{h(A, W, X)\mid A, U, X, Y=0, X=1\}\\
    =& \exp(\psi_0 + \psi_AA + \psi_X^TX)E\{\exp(\psi_WW)\mid A, U, X, Y=0, S=1\}\\
    =& \exp(\psi_0 + \psi_AA + \psi_X^TX) \exp(\psi_W^T(\mu_{0W} + \mu_{UW}^TU + \mu_{XW}^TX) + \frac{1}{2}\psi_W^T\Sigma_W\psi_W)\\
    =& \exp\{(\psi_0 + \psi_W^T\mu_{0W} + \frac{1}{2}\psi_W^T\Sigma_W\psi_W)+\psi_AA + \psi_W^T\mu_{UW}^TU + (\psi_X^T+\psi_W^T\mu_{XW}^T)X\}
\end{align*}

Hence $\psi$ satisfies
\begin{align*}
    &\psi_A = \beta_0,
    \qquad \psi_W = \mu_{UW}^{+}\mu_{UY}, \qquad \psi_X = \mu_{XY} - \mu_{XW}\psi_W = \mu_{XY} - \mu_{UY}\mu_{UW}^{+}\mu_{XW},\\
    &\psi_0 = \mu_{0Y} + \mu_{YS} - \psi_W^T\mu_{0W} - \dfrac{1}{2}\psi_W^T\Sigma_W\psi_W \\
    &\qquad =\mu_{0Y} + \mu_{YS} - \mu_{UW}^+\mu_{0W}\mu_{UY} - \dfrac{1}{2}\mu_{UY}^T(\mu_{UW}^+)^T\Sigma_W\mu_{UW}^+\mu_{UY}.
\end{align*}

Here $A^{+}$ denotes the generalised inverse of the matrix $A$. 

To derive $q(A, Z, X)$, we have
\begin{align*}
    & P(A=1\mid U, X)P(Y=0\mid A=1, U, X) P(S=1\mid A=1, Y=0, U, X)\\
    =& \dfrac{\exp(\mu_{0A}+\mu_{UA}^TU + \mu_{XA}^TX)}{1 + \exp(\mu_{0A}+\mu_{UA}^TU + \mu_{XA}^TX)}\dfrac{1}{1+\exp(\mu_{0Y}+\beta_0 + \mu_{UY}^TU + \mu_{XY}^TX)}\times\\&\qquad \exp\{\mu_{AS}+\xi(U, X)\},\\
    \text{and}\qquad & P(A=0\mid U, X)P(Y=0\mid A=0, U, X) P(S=1\mid A=0, Y=0, U, X)\\
    =& \dfrac{1}{1 + \exp(\mu_{0A}+\mu_{UA}^TU + \mu_{XA}^TX)}\dfrac{1}{1+\exp(\mu_{0Y}+ \mu_{UY}^TU + \mu_{XY}^TX)}\times\\&\qquad \exp\{\xi(U, X)\}.\\
\end{align*}
Therefore
\begin{align*}
    &P(A=1\mid U, X, Y=0, S=1)\\ 
    =& P(A=1\mid U, X)P(Y=0\mid A=1, U, X)P(S=1\mid Y=0, S=1, U, X)/\\ &\qquad\{P(A=1\mid U, X)P(Y=0\mid A=1, U, X)P(S=1\mid Y=0, S=1, U, X)+\\&\qquad  P(A=0\mid U, X)P(Y=0\mid A=0, U, X) P(S=1\mid A=0, Y=0, U, X)\}\\
    =& \dfrac{\exp(\mu_{0A} + \mu_{AS} + \mu_{UA}^TU + \mu_{XA}^TX)}{\exp(\mu_{0A} + \mu_{AS} + \mu_{UA}^TU + \mu_{XA}^TX)+\dfrac{1 + \exp(\mu_{0Y}+\beta_0 +\mu_{UY}^TU + \mu_{XY}^TX)}{1 + \exp(\mu_{0Y} +\mu_{UY}^TU + \mu_{XY}^TX)}}
\end{align*}

In the case where $P(Y=1\mid A, U, X)$ are small almost surely, both $\exp(\mu_{0Y}+\beta_0 +\mu_{UY}^TU + \mu_{XY}^TX)$ and $\exp(\mu_{0Y} +\mu_{UY}^TU + \mu_{XY}^TX)$ are necessarily small, then
$$\dfrac{1 + \exp(\mu_{0Y}+\beta_0 +\mu_{UY}^TU + \mu_{XY}^TX)}{1 + \exp(\mu_{0Y} +\mu_{UY}^TU + \mu_{XY}^TX)}\approx 1,$$
and therefore
$$P(A=1|U, X, Y=0, S=1)\approx [1 + \exp\{-(\mu_{0A} + \mu_{AS} + \mu_{UA}^TU + \mu_{XA}^TX)\}]^{-1}.$$

We need find a function $q(a, z, x)$ that satisfies Equation~\eqref{eq:trt-bridge-u}:

\begin{align*}
    E[q(1, Z, X)|A=1, U, X] = 1 + \exp\{-(\mu_{0A} + \mu_{AS} + \mu_{UA}^TU + \mu_{XA}^TX)\}
\end{align*}
and 
\begin{align*}
    E[q(0, Z, X)|A=0, U, X] = 1 + \exp(\mu_{0A} + \mu_{AS} + \mu_{UA}^TU + \mu_{XA}^TX).
\end{align*}

We consider $q(A, Z, X)=1 + \exp\{(1-2A)(\tau_0 + \tau_A A + \tau_Z^T Z + \tau_X^TX)\}$, then

\begin{align*}
    &E[q(1, Z, X)|A=1, U, X] \\=& 1 + \exp(-\tau_0 - \tau_A  - \tau_X^TX) E[\exp(-\tau_Z^TZ)|A=1, U, X]\\
    =& 1 + \exp(-\tau_0 - \tau_A  - \tau_X^TX)\exp\{-\tau_Z^T(\mu_{0Z} + \mu_{AZ} + \mu_{UZ}^TU + \mu_{XZ}^TX) + \frac{1}{2}\tau_Z^T\Sigma_Z\tau_Z\}\\
    =& 1+\exp[-\{(\tau_0 +\tau_A +  \tau_Z^T\mu_{0Z} + \tau_Z^T\mu_{AZ}-\frac{1}{2}\tau_Z^T\Sigma_Z\tau_Z) + \tau_Z^T\mu_{UZ}^TU + (\tau_X^T + \tau_Z^T\mu_{XZ}^T)X\}] 
\end{align*}
and
\begin{align*}
    & E[q(0, Z, X)|A=0, U, X]\\
    =& 1 + \exp(\tau_0 + \tau_X^TX)E[\exp(\tau_Z^TZ)\mid A=0, U, X]\\
    =& 1 + \exp(\tau_0 + \tau_X^TX)\exp\{\tau_Z^T(\mu_{0Z} + \mu_{UZ}^TU + \mu_{XZ}^TX) + \frac{1}{2}\tau_Z^T\Sigma_Z\tau_Z\}\\
    =& 1 + \exp\{(\tau_0 + \tau_Z^T\mu_{0Z} + \frac{1}{2}\tau_Z^T\Sigma_Z\tau_Z) + \tau_Z^T\mu_{UZ}^TU + (\tau_X^T + \tau_Z^T\mu_{XZ}^T)X\}
\end{align*}
Hence $\tau$ satisfies
\begin{align*}
    &\tau_Z = \mu_{UA}^+\mu_{UZ}, \qquad \tau_X = \mu_{XA} - \mu_{XZ}\tau_Z = \mu_{XA} - \mu_{XZ}\mu_{UA}^+\mu_{UZ},\\
    &\tau_A = \tau_Z^T\Sigma_Z\tau_Z,\\
    &\tau_0 = \mu_{0A} + \mu_{AS}-\tau_Z^T\mu_{0Z} - \dfrac{1}{2}\tau_Z^T\Sigma_Z\tau_Z.
\end{align*}

\clearpage
\section{Cross-fitting kernel learning estimator for nonparametric estimation of $\beta_0$.}\label{supp:RKHS}

Under Model~\ref{mod:logit-model} and Assumptions~\ref{assump:selection} and \ref{assump:nc}, by Lemma~\ref{lemma:estimand}, estimation of $\beta_0$ reduces to estimation of $\theta_{ac}$. We first derive an influence function of $\theta_{ac}$.

\begin{theorem}\label{thm:if-zeta}
An influence function for $\zeta_{ac}$ is  \begin{align}IF_{\zeta_{ac}}(O;q,h)&= \dfrac{S}{P(S=1)}(1-Y)c(X)h(a, W, X)-\nonumber\\&\qquad \dfrac{S}{P(S=1)}I(A=a)q(a, Z, X)c(X)\left[(1-Y)h(A, W, X)-Y\right]-\nonumber\\&\qquad\zeta_{ac},\nonumber\end{align}
\end{theorem}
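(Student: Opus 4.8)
The plan is to recognize the displayed $IF_{\zeta_{ac}}(O;q,h)$ as the inverse-selection-weighted, centered version of the doubly robust moment function already analyzed in the proof of Theorem~\ref{thm:PDR}, and then to verify the two defining properties of an influence function for the parameter map $P\mapsto\zeta_{ac}(P)$, where $\zeta_{ac}$ is the functional $\theta_{ac}$ of Lemma~\ref{lemma:estimand}: that it has mean zero, and that $\frac{d}{dt}\zeta_{ac}(P_t)\big|_{t=0}=E\{IF_{\zeta_{ac}}\,s\}$ along every regular parametric submodel $\{P_t\}$ through the truth with score $s$. Writing $\phi(O;q,h)=c(X)\big[(1-Y)h(a,W,X)-I(A=a)q(a,Z,X)\{(1-Y)h(A,W,X)-Y\}\big]$, the claim is that $IF_{\zeta_{ac}}(O;q,h)=\{S/P(S=1)\}\,\phi(O;q,h)-\zeta_{ac}$; throughout I treat $P(S=1)$ as a known sampling fraction, as is standard in outcome-dependent designs.

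For the mean-zero property, note that $E[\{S/P(S=1)\}g(O)]=E\{g(O)\mid S=1\}$ for any integrable $g$, while Equation~\eqref{eq:pf-tac} in the proof of Theorem~\ref{thm:PDR} gives $E\{\phi(O;q,h)\mid S=1\}=\zeta_{ac}$ whenever $q$ solves~\eqref{eq:trt-bridge-u} or $h$ solves~\eqref{eq:h-odds-u}, in particular at the true bridge functions; hence $E\{IF_{\zeta_{ac}}\}=\zeta_{ac}-\zeta_{ac}=0$. Square-integrability follows from the boundedness of $c$, $S$ and the outcome and treatment indicators together with the moment conditions already imposed on $q$ and $h$.

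For the pathwise-derivative identity, fix a regular parametric submodel $\{P_t\}$ with score $s$ and let $q_t,h_t$ denote the true treatment and outcome bridge functions under $P_t$. Because $\phi$ depends on the distribution only through $q$ and $h$, the chain rule gives $\frac{d}{dt}\zeta_{ac}(P_t)\big|_{t=0}=\Delta_P+\Delta_q+\Delta_h$, where $\Delta_P$ differentiates the outer expectation with the nuisances frozen at the truth and $\Delta_q,\Delta_h$ are the Gateaux derivatives in the directions of the induced perturbations $\dot q,\dot h$. The crux is that $\Delta_q=\Delta_h=0$, which is exactly Neyman orthogonality of the moment function: part (b) of the proof of Theorem~\ref{thm:PDR} shows that the $q$-dependent part of $\phi$ has conditional mean zero given $(A,Z,X,S=1)$ by the outcome-bridge moment equation~\eqref{eq:outcome-bridge}, so that $E\{\phi(O;q',h_0)\mid S=1\}=\zeta_{ac}$ for an \emph{arbitrary} $q'$, and symmetrically part (a) of that proof uses the treatment-bridge moment equation~\eqref{eq:trt-bridge} to show $E\{\phi(O;q_0,h')\mid S=1\}=\zeta_{ac}$ for an arbitrary $h'$. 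Only $\Delta_P$ therefore survives, and differentiating under the integral gives $\Delta_P=E[\{S/P(S=1)\}\phi(O;q_0,h_0)\,s(O)]$, which, since $E\{s\}=0$, equals $E\{IF_{\zeta_{ac}}(O;q_0,h_0)\,s(O)\}$, as required.

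The step I expect to be the main obstacle is making the Neyman-orthogonality argument fully rigorous: one must verify that the submodel-induced paths $t\mapsto q_t$ and $t\mapsto h_t$ stay within the class of valid bridge functions, so that the conditional-mean-zero identities~\eqref{eq:trt-bridge} and~\eqref{eq:outcome-bridge} remain available along the path, and that differentiation may be interchanged with the integrals under the completeness and regularity conditions in force; once this is granted, the vanishing of $\Delta_q$ and $\Delta_h$ is a direct re-reading of the double-robustness computation in the proof of Theorem~\ref{thm:PDR}. A lesser bookkeeping issue is the role of $P(S=1)$ in the denominator, which is harmless when it is known by design but contributes an extra, cancelling term if it must be estimated from the selected-sample fraction.
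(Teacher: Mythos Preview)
Your proposal is correct and takes a genuinely different route from the paper. The paper starts from the single-bridge PIPW representation $\zeta_{ac}=E\{c(X)I(A=a)Yq(A,Z,X)\mid S=1\}$ and differentiates the pathwise derivative directly; the awkward $\dot q$ term is handled by differentiating the defining moment equation~\eqref{eq:trt-bridge} for $q$ and then choosing the free test function to be $n(A,W,X)=I(A=a)c(X)h(A,W,X)$, which is what brings $h$ into the picture and, after several score manipulations, collapses to the displayed influence function. You instead start from the doubly robust representation~\eqref{eq:pf-tac} and use Neyman orthogonality (read off from the two halves of the proof of Theorem~\ref{thm:PDR}) to kill $\Delta_q$ and $\Delta_h$, so only the outer score term survives. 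Your route is shorter and more conceptual, essentially reusing the double-robustness computation; the paper's route is a self-contained constructive calculation that does not presuppose the PDR identity and shows explicitly how $h$ enters through the choice of test function.

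One small remark: the obstacle you flag, that $q_t,h_t$ must remain valid bridge functions along the submodel, is not actually needed for your argument. The vanishing of $\Delta_q$ and $\Delta_h$ holds because the conditional-mean-zero identities~\eqref{eq:trt-bridge} and~\eqref{eq:outcome-bridge} are satisfied by the \emph{true} $q_0,h_0$, and these identities annihilate the Gateaux derivative in \emph{any} direction $\dot q$ or $\dot h$; you never need $q_0+t\dot q$ to solve anything. Your treatment of $P(S=1)$ as known matches the paper's, which likewise writes $IF_{\zeta_{ac}}=IF_{\mu_{ac}}/P(S=1)$ without an extra correction term.
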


\begin{proof}
We let $\mu_{ac}= E\left[Sc(X)I(A=a)Yq(A, Z, X)\mid S=1\right].$ 
Because $\mu_{ac}=\zeta_{ac}P(S=1)$, their influence functions $IF_{\mu_{ac}}$ and $IF_{\zeta_{ac}}$ satisfy
$$IF_{\theta_{ac}}(O;q,h)=IF_{\mu_{ac}}(O;q,h)/P(S=1).$$
We proceed to derive $IF_{\mu_{ac}}$.

Consider regular one-dimensional parametric submodel indexed by $t$, where $t=0$ indicates the truth, we have
\begin{align*}
    \dfrac{\partial\mu_{1ct}}{\partial t} &= E[c(X)\dot q(A, Z, X)AYS] + E[c(X)q(A, Z, X)AYSR(O)],
\end{align*}
where $R$ denotes the score function.

To calculate $E[c(X)\dot q(A, Z, X)AYS]$, we note that
\begin{align*}
    0 = E_t\left[S(1-Y)\left(q_t(A, Z, X) - \dfrac{1}{f_t(A\mid W, X, Y=0, S=1)}\right)n(A, W, X)\right].
\end{align*}
Taking derivatives w.r.t $t$ on both sides and rearranging the terms, we have
\begin{align}
    & E[S(1-Y)\dot q(A, Z, X)n(A, W, X)] \label{eq:q-dot}\\
    &\qquad= E\left\{S(1-Y)\left[\dfrac{1}{f(A\mid W, X, Y=0, S=1)}-q(A, Z, X)\right]n(A, X, W)R(O)\right\}\\
    &\qquad-E\left[S(1-Y)\dfrac{R(A\mid W, X, Y=0, S=1)}{f(A\mid W, X, Y=0, S=1)}n(A, W, X)\right]\label{eq:term3}
\end{align}

In Equation~\eqref{eq:q-dot}, taking $n(A, W, X)=Ah(A, W, X)c(X)$, we have
\begin{align*}
    &E[S(1-Y)c(X)\dot q(A, Z, X)Ah(A, W, X)]\\
    = & E\left\{ASc(X)\dot q(A, Z, X)E\left[(1-Y)h(A, W, X)\mid A, Z, X, S\right]\right\}\\
    \stackrel{\eqref{eq:h-identification}}{=}& E\left\{ASc(X)\dot q(A, Z, X)E\left[Y\mid A, Z, X, S=1\right]\right\}\\
    =& E[c(X)\dot q(A, Z, X)AYS]
\end{align*}

In Equation~\eqref{eq:term3}, by taking $n(A, W, X)=Ah(A, W, X)c(X)$, we have
\begin{align*}
&E\left[S(1-Y)c(X)\dfrac{R(A\mid W, X, Y=0, S=1)}{f(A\mid W, X, Y=0, S=1)}Ah(A, W, X)\right]\\
=& E\left[S(1-Y)c(X)\dfrac{R(O)}{f(A\mid W, X, Y=0, S=1)}Ah(A, W, X)\right]\\
&\qquad -E\left\{S(1-Y)c(X)R(W, X, Y, S)\dfrac{Ah(A, W, X)}{f(A\mid W, X, Y=0, S=1)}\right\}\\
=& E\left[S(1-Y)c(X)\dfrac{R(O)}{f(A\mid W, X, Y=0, S=1)}Ah(A, W, X)\right]\\
&\qquad -E\bigg\{S(1-Y)c(X)R(W, X, Y, S)E\bigg[\dfrac{Ah(A, W, X)}{f(A\mid W, X, Y=0, S=1)}\mid W, X, \\&\qquad Y=0, S=1\bigg]\bigg\}\\
=& E\left[S(1-Y)c(X)\dfrac{R(O)}{f(A\mid W, X, Y=0, S=1)}Ah(A, W, X)\right] -\\&\qquad E\left\{S(1-Y)c(X)R(W, X, Y, S)h(1, W, X)\right\}\\
=& E\left\{S(1-Y)c(X)\left[\dfrac{Ah(A, W, X)}{f(A\mid W, X, Y=0, S=1)}-h(1, W, X)\right]R(O)\right\}
\end{align*}

We therefore have
\begin{align*}
    &\quad E[c(X)\dot q(A, Z, X)AYS] \\&= E\left\{S(1-Y)c(X)\left[\dfrac{1}{f(A\mid W, X, Y=0, S=1)}-q(A, Z, X)\right]Ah(A, W, X)R(O)\right\}\\&\qquad-E\left\{S(1-Y)c(X)\left[\dfrac{Ah(A, W, X)}{f(A\mid W, X, Y=0, S=1)}-h(1, W, X)\right]R(O)\right\}\\
    &= E\left\{S(1-Y)c(X)\left[h(1, W, X)-Aq(A, Z, X)h(A, W, X)\right]R(O)\right\}
\end{align*}

Hence
\begin{align*}
    \dfrac{\partial\mu_{1ct}}{\partial t} &= E\bigg\{Sc(X)\bigg((1-Y)\bigg[h(1, W, X)-Aq(A, Z, X)h(A, W, X)\bigg]+\\&\qquad YAq(A, Z, X)\bigg)R(O)\bigg\}\\
    &= E\bigg\{Sc(X)\bigg((1-Y)\left[h(1, W, X)-Aq(A, Z, X)h(A, W, X)\right]+YAq(A, Z, X) -\\&\qquad  \mu_{1c}\bigg)R(O)\bigg\}
\end{align*}
and 
\begin{align*}IF_{\mu_{1c}}(\cdot)&= S(1-Y)c(X)h(1, W, X)-SAq(A, Z, X)c(X)\left[(1-Y)h(A, W, X)-Y\right]-\mu_{1c}.\end{align*}

Similarly,
\begin{align*}
IF_{\mu_{0c}}(\cdot)&= Sc(X)\bigg\{(1-Y)\bigg[h(0, W, X)-(1-A)q(A, Z, X)h(A, W, X)\bigg]+\\&\qquad Y(1-A)q(A, Z, X)\bigg\}\\
&= S(1-Y)c(X)h(0, W, X) - S(1-A)q(A, Z, X)c(X)\bigg[(1-Y)h(A, W, X)-\\&\qquad Y\bigg]- \mu_{0c}.\end{align*}
\end{proof}

The influence function $IF_{\zeta_{ac}}(O;q,h)$ can be written as

\begin{align}
IF_{\zeta_{ac}}(O;q_a,h_a)&=\dfrac{S}{P(S=1)}\big\{g_{a1}(A, Y, X)q_a(Z, X)h_a(W, X)+\nonumber\\&\qquad g_{a2}(Y, X)q_a(Z, X)+g_{a3}(Y, X)h_a(W, X)\big\}-\zeta_{ac},\label{eq:if2}
\end{align}
where \begin{align*}
    &g_{a1}(A, Y, X)=-c(X)I(A=a)(1-Y),\qquad g_{a2}(A, Y, X)=c(X)I(A=a)Y,\\ &g_{a3}(A, Y, X)=c(X)(1-Y), \qquad h_a(W, X) = h(a, W, X),\\
    \text{and}\qquad &   q_a(Z, X) = q(a, Z, X).
\end{align*}

This influence function belongs to the class of doubly-robust influence functions considered in \citet{ghassami2021minimax}, and thus their method directly applies. By Theorem 1 of \citet{rotnitzky2021characterization}, such an influence function also satisfies the so-call "mixed-bias" structure, and therefore $\sqrt n$-consistent doubly robust estimation is possible even when the nuisance functions are estimated nonparametrically. In this section, we describe a cross-fitting minimax kernel learning estimator for $\zeta_{ac}$ and stated the theoretic property of the resulting estimator of $\beta_{0}$.

We randomly partition the data evenly into $L$ subsamples. For each $l\in\{1,\dots,L\}$, let $I_l$ and $n_l$ be the indices and size of data in the $l$th sample, respectively. Let $m_l=n-n_l$ be the size of the sample excluding $I_l$. Write $q_a(Z, X)=q(a, Z, X)$ and $h_a(W, X)=h(A, W, X)$.  As proposed in \citet{ghassami2021minimax}, we estimate the nuisance functions $q_a$ and $h_a$ using data from $I_l^c$ by solving the following optimization problems:

\begin{align}
    \tilde q_{al}&=\arg\underset{q_a\in\mathcal Q_a}{\min}\,\underset{\dot h\in\mathcal H_a^*}{\max} \,\dfrac{1}{m_l}\sum_{i\in I_{l}^c} \bigg[ \dot h(W_i, X_i)\left\{q_a(Z_i, X_i)g_{a1}(A_i, Y_i, X_i)+g_3( Y_i, X_i)\right\} \nonumber\\ &\qquad -\dot h^2(W_i, X_i)\bigg] -\lambda_{\mathcal H_a^*}^{q_a}\lVert \dot h\rVert_{\mathcal H_a^*}^2 + \lambda_{\mathcal Q_a}^{q_a}\lVert q_a\rVert_{\mathcal Q_a}^2;\label{eq:opt-q}\\
    \tilde h_{al}&=\arg\underset{h_a\in\mathcal H_a}{\min}\,\underset{\dot q\in\mathcal Q_a^*}{\max} \,\dfrac{1}{m_l}\sum_{i\in I_{l}^c} \bigg[ \dot q(Z_i, X_i)\left\{h_a(W_i, X_i)g_{a1}(A_i, Y_i, X_i)+g_2(A_i, Y_i, X_i)\right\} \nonumber\\ &\qquad -\dot q^2(Z_i, X_i)\bigg] -\lambda_{\mathcal Q_a^*}^{h_a}\lVert \dot q\rVert_{\mathcal Q_a^*}^2 + \lambda_{\mathcal H_a}^{h_a}\lVert h_a\rVert_{\mathcal H_a}^2.\label{eq:opt-h}
\end{align}
where $\mathcal Q_a$, $\mathcal H_a$, $\mathcal Q_a^*$ and $\mathcal H_a^*$ are reproducing kernel Hilbert spaces (RKHS) generated by the kernel functions $K_{\mathcal Q_a}$, $K_{\mathcal H_a}$, $K_{\mathcal Q_a^*}$ an $K_{\mathcal H_a^*}$, equipped with norms $\lVert\cdot\rVert_{\mathcal Q_a}$, $\lVert\cdot\rVert_{\mathcal H_a}$, $\lVert\cdot\rVert_{\mathcal Q_a^*}$ and $\lVert\cdot\rVert_{\mathcal H_a^*}$, respectively, and $\lambda_{\mathcal Q_a}^{q_a}$, $\lambda_{\mathcal H_a}^{h_a}$, $\lambda_{\mathcal Q_a^*}^{h_a}$ and $\lambda_{\mathcal H_a^*}^{q_a}$ are non-negative regularizing parameters. The above optimization problems have closed form solutions, which we leave to to the end of this section. 

The objective function for $\tilde q_{al}$ can be viewed as a regularised version of empirical perturbation evaluated using data from $I_l^c$, by moving the influence function from the pair $(q_a,h_a)$ to $(q_a, h_a+\dot h)$, where the perturbation is defined as
\begin{align*}
    prt_{\theta_{ac}}(q_a,h_a;0,\dot h) &= IF_{\theta_{ac}}(A, Y, Z, W, X;q_a,h_a+\dot h) - IF_{\theta_{ac}}(A, Y, Z, W, X;q_a,h_a)\\
    &=\dfrac{S}{P(S=1)}\dot h(W, X)\left\{q_a(Z, X)g_{a1}(A, Y, X) - g_{a3}(Y, X)\right\}.
\end{align*}

The objective function contains two forms of regularization: the term $\dot h^2$ improves the robustness against model misspecification, stability and the convergence rates, while the Tikhonov-type penalties $\lambda_{\mathcal H_a}^{q_a}\lVert \dot h\rVert_{\mathcal H_a}^2$ and  $\lambda_{\mathcal Q_a}^{q_a}\lVert q_a\rVert_{\mathcal Q_a}^2$ penalise the complexity of the classes of nuisance functions considered. The objective function for $\tilde h_{al}$ can be interpreted similarly.

We estimate $\zeta_{ac}$ with the $l$th subsample with the PDR estimator
\begin{align}\label{eq:pdr-l}
    \tilde\zeta_{ac,l} &= \dfrac{1}{n_l}\sum_{i\in I_l} c(X_i)\big[(1-Y_i)\tilde h_{al}( W_i, X_i)- \nonumber\\&\qquad I(A_i=a)\tilde q_{al}(Z_i, X_i)\big\{(1-Y_i)\tilde h_{al}(W_i, X_i)-Y_i\big\}\big].
\end{align}
 The final estimator for $\zeta_{ac}$ is
$\tilde\zeta_{ac}=\sum_{l=1}^L \tilde\zeta_{ac,l}/L$,
which gives the cross-fitting minimax kernel learning estimator for $\beta_{0}$ as
$\tilde\beta_{0} = \log(\tilde\zeta_{1c} / \tilde\zeta_{0c}).$ We summarise the estimation of $\zeta_{ac}$ in Algorithm~\ref{alg:cf-kernel}.

\begin{algo}\label{alg:cf-kernel}
A cross-fitting minimax kernel learning algorithm to estimate $\beta_0$

\begin{tabbing}
   \qquad \enspace Split the data into $L$ even subsamples\\
   \qquad \enspace For $a=0$ to $a=1$\\
   \qquad \qquad For $l=1$ to $l=L$ \\
   \qquad \qquad \qquad  Solve $\tilde q_{al}$ and $\tilde h_{al}$ in Equations~\eqref{eq:opt-q} and \eqref{eq:opt-h}\\
  \qquad \qquad \qquad Obtain $\tilde\zeta_{ac,l}$ according to Equation~\eqref{eq:pdr-l} \\
\qquad \qquad $\tilde\zeta_{ac}\leftarrow \sum_{l=1}^L \tilde\zeta_{ac,l}/L$ \\
\qquad \enspace Output $\tilde\beta_{0} = \log(\tilde\zeta_{1c} / \tilde\zeta_{0c})$.
\end{tabbing}

\end{algo}

The estimator $\tilde\beta_{0c}$ is consistent and asymptotically normal, thus allowing valid statistical inference, as stated in Theorem~\ref{thm:asym} below.

\begin{theorem}\label{thm:asym}
Under Model~\ref{mod:logit-model} and Assumptions~\ref{assump:selection}-\ref{assump:completeness-b} and suitable regularity conditions listed in Section~\ref{supp:regularity-conditions}, the cross-fitting estimator $\tilde \beta_{0}$ in Algorithm~\ref{alg:cf-kernel} is consistent and asymptotically normal of $\beta_0$.
\end{theorem}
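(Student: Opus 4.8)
The plan is to reduce the claim, via Lemma~\ref{lemma:estimand} and the delta method, to establishing that the fold-aggregated estimators $\tilde\zeta_{ac}=L^{-1}\sum_{l=1}^{L}\tilde\zeta_{ac,l}$ are jointly asymptotically linear for $a\in\{0,1\}$, with centred influence functions $\psi_a(O;q_a,h_a)-\zeta_{ac}$, where $\psi_a(O;q,h)$ denotes the per-observation summand in \eqref{eq:pdr-l}; this summand satisfies $E\{\psi_a(O;q_a,h_a)\mid S=1\}=\zeta_{ac}$ by the calculation behind Theorem~\ref{thm:PDR}, and $\psi_a-\zeta_{ac}$ agrees, up to the normalization $P(S=1)$, with the influence function of Theorem~\ref{thm:if-zeta}. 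Since the positivity-type restrictions on $c(\cdot)$ force $\zeta_{ac}\neq 0$, the map $(\zeta_1,\zeta_0)\mapsto\log(\zeta_1/\zeta_0)$ is continuously differentiable at $(\zeta_{1c},\zeta_{0c})$, so joint asymptotic normality of $(\tilde\zeta_{1c},\tilde\zeta_{0c})$ yields asymptotic normality of $\tilde\beta_0=\log(\tilde\zeta_{1c}/\tilde\zeta_{0c})$ about $\beta_0=\log(\zeta_{1c}/\zeta_{0c})$; consistency already follows from the $L^2$-consistency of the minimax kernel estimators together with the law of large numbers and the continuous mapping theorem applied to \eqref{eq:pdr-l}.

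For asymptotic linearity I would use the usual cross-fitting decomposition. By construction $\tilde\zeta_{ac,l}=\mathbb{P}_{n_l}\psi_a(\cdot;\tilde q_{al},\tilde h_{al})$, where $\mathbb{P}_{n_l}$ is the empirical average over fold $I_l$ and $(\tilde q_{al},\tilde h_{al})$ is trained only on $I_l^{c}$, so
\[
\tilde\zeta_{ac,l}-\zeta_{ac}=\mathbb{P}_{n_l}\{\psi_a(\cdot;q_a,h_a)-\zeta_{ac}\}+\big(\mathbb{P}_{n_l}-P\big)\{\psi_a(\cdot;\tilde q_{al},\tilde h_{al})-\psi_a(\cdot;q_a,h_a)\}+\big\{P\,\psi_a(\cdot;\tilde q_{al},\tilde h_{al})-\zeta_{ac}\big\},
\]
with $P$ denoting expectation under the selected-sample law. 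The first term is an i.i.d.\ mean-zero average, to which the central limit theorem applies. The second term is an empirical-process remainder: conditioning on $I_l^{c}$ freezes the nuisances, so its conditional variance is at most $n_l^{-1}\|\psi_a(\cdot;\tilde q_{al},\tilde h_{al})-\psi_a(\cdot;q_a,h_a)\|_{L^2(P)}^2$, which is $o_p(1)$ once $\tilde q_{al},\tilde h_{al}$ are $L^2$-consistent and the relevant densities and functions are bounded; hence this term is $o_p(n^{-1/2})$.

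The crux is the third (bias) term. Because $\psi_a$ is a doubly robust summand, it is Neyman-orthogonal and in fact exhibits the mixed-bias structure of \citet{rotnitzky2021characterization}: expanding $\psi_a(\cdot;q',h')$ about $(q_a,h_a)$ and applying the moment conditions \eqref{eq:q-identification} and \eqref{eq:h-identification} (as in the proof of Theorem~\ref{thm:PDR}) collapses the first-order terms and leaves
\[
P\,\psi_a(\cdot;q',h')-\zeta_{ac}=-\,E\big[c(X)(1-Y)I(A=a)\{q'(a,Z,X)-q_a(a,Z,X)\}\{h'(a,W,X)-h_a(a,W,X)\}\mid S=1\big],
\]
so by Cauchy--Schwarz $\lvert P\,\psi_a(\cdot;q',h')-\zeta_{ac}\rvert\le\|c\|_\infty\,\|q'-q_a\|_{L^2}\,\|h'-h_a\|_{L^2}$. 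It then remains to invoke the rate analysis of the minimax kernel estimators \eqref{eq:opt-q}--\eqref{eq:opt-h}: under the RKHS realizability, closedness, measure-of-ill-posedness and source/eigendecay conditions collected in Section~\ref{supp:regularity-conditions}, the results of \citet{ghassami2021minimax} and \citet{kallus2021causal} yield $\|\tilde q_{al}-q_a\|_{L^2}=o_p(n^{-1/4})$ and $\|\tilde h_{al}-h_a\|_{L^2}=o_p(n^{-1/4})$, so the product is $o_p(n^{-1/2})$.

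Putting the pieces together gives $\tilde\zeta_{ac,l}-\zeta_{ac}=\mathbb{P}_{n_l}\{\psi_a(\cdot;q_a,h_a)-\zeta_{ac}\}+o_p(n^{-1/2})$ for each fold; averaging over $l$ yields $\tilde\zeta_{ac}-\zeta_{ac}=\mathbb{P}_n\{\psi_a(\cdot;q_a,h_a)-\zeta_{ac}\}+o_p(n^{-1/2})$, and stacking $a=0,1$ gives joint asymptotic normality of $(\tilde\zeta_{1c},\tilde\zeta_{0c})$ with finite covariance (finiteness of the variances being part of the regularity conditions). A final delta-method step with gradient $(\zeta_{1c}^{-1},-\zeta_{0c}^{-1})$, together with standard $M$-estimation arguments \citep{van2000asymptotic}, delivers $\sqrt n(\tilde\beta_0-\beta_0)\overset{d}{\to}N(0,\sigma^2)$ with $\sigma^2=\mathrm{Var}\{\zeta_{1c}^{-1}(\psi_1-\zeta_{1c})-\zeta_{0c}^{-1}(\psi_0-\zeta_{0c})\}$. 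The main obstacle is the third step: verifying the mixed-bias identity and, above all, establishing that the flexible minimax estimators of the two confounding bridge functions converge in $L^2$ at rates whose product beats $n^{-1/2}$ -- this is where the RKHS source and eigenvalue-decay conditions of Section~\ref{supp:regularity-conditions} do the real work, where the argument leans most heavily on \citet{ghassami2021minimax,kallus2021causal}, and where one must also check that the stabilization terms $\dot h^2,\dot q^2$ and the Tikhonov penalties in \eqref{eq:opt-q}--\eqref{eq:opt-h} do not degrade these rates.
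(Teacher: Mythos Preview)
Your overall architecture matches the paper's: reduce via Lemma~\ref{lemma:estimand} and the delta method to asymptotic linearity of $\tilde\zeta_{ac}$, then invoke the cross-fitting decomposition and the mixed-bias structure of the influence function in Theorem~\ref{thm:if-zeta}. Your mixed-bias identity for $P\psi_a(\cdot;q',h')-\zeta_{ac}$ is correct, and the empirical-process remainder is handled the same way.

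The gap is in how you bound the bias term. You apply Cauchy--Schwarz directly in the strong $L^2$ norms and then assert that the regularity conditions of Section~\ref{supp:regularity-conditions} deliver $\|\tilde q_{al}-q_a\|_{L^2}=o_p(n^{-1/4})$ and $\|\tilde h_{al}-h_a\|_{L^2}=o_p(n^{-1/4})$. They do not. Condition R.5 only gives $o_p(1)$ for the strong norms; R.8 controls the \emph{projected} errors $\big\|E[g_{a1}\{\tilde q_{al}-q_a\}\mid W,X,S=1]\big\|_2=O(n^{-r_q})$ and $\big\|E[g_{a1}\{\tilde h_{al}-h_a\}\mid Z,X,S=1]\big\|_2=O(n^{-r_h})$; and R.9 is calibrated to an ill-posedness-adjusted product, not to the naive product of strong-norm rates. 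In an ill-posed inverse problem the strong-norm rates can be arbitrarily slow, which is precisely why the minimax framework does not require $n^{-1/4}$ in that norm.

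The paper's argument (which is just Theorem~2 and Lemma~1 of \citet{ghassami2021minimax}) instead conditions inside the bias term to obtain the asymmetric bound
\[
\big|P\psi_a(\cdot;\tilde q_{al},\tilde h_{al})-\zeta_{ac}\big|\ \lesssim\ \min\big\{\,\|\tilde h_{al}-h_a\|_2\cdot n^{-r_q},\ \|\tilde q_{al}-q_a\|_2\cdot n^{-r_h}\,\big\},
\]
and then controls the remaining strong norm through the measure of ill-posedness $\tau_{h_a}$ (resp.\ $\tau_{q_a}$), whose polynomial growth is bounded by Lemma~\ref{lemma:ghassami-lemma} under R.6--R.7. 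Plugging in the eigendecay rates of R.9 yields $\tau_{h_a}(\epsilon)=O(\epsilon^{s_h/(s_h+t_h)})$ and the analogous bound for $\tau_{q_a}$, and the exponent inequality in R.9 is exactly what makes this minimum $o(n^{-1/2})$. So the fix is small but essential: replace your straight Cauchy--Schwarz step by the conditioning-plus-ill-posedness argument, after which everything else you wrote goes through.
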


We prove Theorem~\ref{thm:asym} in Section~\ref{sec:proof-rkhs} of the Supplementary Materials.

Implementation of this approach is challenging, as the performance of the method proves to be sensitive to the value of tuning parameters. We leave an efficient implementation of this approach to future research.

The closed form solutions for $\tilde q_{al}$ and $\tilde h_{al}$ are 
\begin{align*}
    \tilde q_{al}(z, x) &= \sum_{i\in I_l^c} \gamma_{ali} K_{\mathcal Q_a}(Z_i, X_i;z, x),\\
    \text{and}\qquad \tilde h_{al}(w, x) &= \sum_{i\in I_l^c} \alpha_{ali} K_{\mathcal H_a}(W_i, X_i;w, x),
\end{align*}
where
\begin{align*}
    \alpha_{al} &= -(K_{\mathcal H_a, l}D_{al}\Gamma_{al}D_{al}K_{\mathcal H_a,l}+m_l^2\lambda_{\mathcal H_a}^{h_a} K_{\mathcal H_a, l})^+ K_{\mathcal H_a,l}D_{al}\Gamma_{al} g_{a2,l},\\
    \gamma_{al} &= -(K_{\mathcal Q_a, l}D_{al}\Omega_{al}D_{al}K_{\mathcal Q_a,l}+m_l^2\lambda_{\mathcal Q_a}^{q_a} K_{\mathcal Q_a, l})^+ K_{\mathcal Q_a,l}D_{al}\Gamma_{al} g_{a3,l},\\
    \Gamma_{al} &= \dfrac{1}{4}K_{\mathcal Q_a^*,l}(\dfrac{1}{m_l}K_{\mathcal Q_a^*,l} + \lambda_{\mathcal Q_a^*}^{h_a}I_{al})^{-1},\\
    \Omega_{al} &= \dfrac{1}{4}K_{\mathcal \mathcal H_a^*,l}(\dfrac{1}{m_l}K_{\mathcal H_a^*,l} + \lambda_{\mathcal H_a^*}^{q_a}I_{al})^{-1},
\end{align*}
$D_{al}$ is a $(n-n_l)\times (n-n_l)$ diagonal matrix with elements $\{g_{a1}(A_i, Y_i, X_i):i\in I_l^c\}$, $K_{\mathcal H,l}$ and $K_{\mathcal Q,l}$ are  respectively the $(n-n_l)\times (n-n_l)$ empirical kernel matrices with data in the sample $I_l^c$, $g_{a2,l}$ is the $(n-n_l)$-vector with elements $\{g_{a2}(Y_i, X_i):i\in I_l^c\}$, and $g_{a3,l}$ is the $(n-n_l)$-vector with elements $\{g_{a3}(A_i, Y_i, X_i):i\in I_l^c\}$.

\clearpage
\section{Regularity conditions in Theorem~\ref{thm:asym}}
\label{supp:regularity-conditions}

For a function class $\mathcal F$, let  $\mathcal F_B:=\{f\in\mathcal F:\lVert f\rVert_{\mathcal F}^2\leq B\}$.

Let $\{\mu^{h_a}_j\}_{j=1}^\infty$ and $\{\varphi_{j}^{h_a}\}_{j=1}^\infty$ be the eigenvalues and eigenfunctions of the RKHS $\mathcal H_a$ and $\{\mu^{q_a}_j\}_{j=1}^\infty$ and $\{\varphi_{j}^{q_a}\}_{j=1}^\infty$ be the eigenvalues and eigenfunctions of the RKHS $\mathcal Q_a$. For any $m\in \mathbb N_+$, let $V_m^{\mathcal H_a}$ and $V_m^{\mathcal Q_a}$ be the $m\times m$ matrices with entry $(i,j)$ defined respectively  as $$[V_m^{\mathcal H_a}]_{i,j}=E[E[\varphi_i^{\mathcal H_a}(W, X)beta_0\mid Z, X, S=1]E[\varphi_j^{\mathcal H_a}(W, X)\mid Z, X, S=1]]$$ and $$[V_m^{\mathcal Q_a}]_{i,j}=E[E[\varphi_i^{\mathcal Q_a}(Z, X)\mid W, X, S=1]E[\varphi_j^{\mathcal Q_a}(Z, X)\mid W, X, S=1]].$$ Let $\lambda_m^{\mathcal H_a}$ and $\lambda_m^{\mathcal Q_a}$ be the minimum eigenvalues of $V_m^{\mathcal H_a}$ and $V_m^{\mathcal Q_a}$, respectively.

\begin{enumerate}[R.1]
    \item The functions $g_{a1}(A, Y, X)=c(X)I(A=a)(1-Y)$, $g_{a2}(A, Y, X)=c(X)I(A=a)Y$ and $g_{a3}(Y, X)=c(X)(1-Y)$ are bounded.
    \item There exists a constant $\sigma_1$ such that $$\mid [E[c(X)I(A=a)(1-Y)\mid Z, W, X, S=1]\mid >\sigma_1>0$$
    almost surely for $a=0, 1$.
    \item The nuisance functions $q(A, Z, X)$ and $h(A, W, X)$ have finite second moments.
    \item The nuisance functions $q_a(Z, X)$ and their estimators $\tilde q_{al}(Z, X)$ and $\tilde h_{al}(W, X)$ in each fold of cross-fitting satisfy
    $$\min\left\{\sup_{z, x}\mid \tilde q_{al}(z, x) + \sup_{w, x} h_a(w, x)\mid , \sup_{z, x}\mid q_{a}(z, x) + \sup_{w, x} \tilde h_{al}(w, x)\mid \right\}<\infty.$$
    \item $\lVert \tilde  q_{al}-q_{a}\rVert_2 = o_p(1)$, $\lVert \tilde h_{al}-h_a\rVert_2 = o_p(1)$;
    \item There exists a constant $C$ such that $\tilde h_a\in\mathcal H_{aC}$ and $\tilde q_a\in\mathcal Q_{aC}$ for $a\in\{0,1\}$.
    \item $$\mid  E[E[\varphi_i^{\mathcal H_a}(W, X)\mid Z, X, S=1]E[\varphi_j^{\mathcal H_a}(W, X)\mid Z, X, S=1]] \mid \leq c^{\mathcal H_a}\lambda_m^{\mathcal H_a}$$ and $$\mid E[E[\varphi_i^{\mathcal Q_a}(Z, X)\mid W, X, S=1]E[\varphi_j^{\mathcal Q_a}(Z, X)\mid W, X, S=1]]\mid \leq c^{\mathcal Q_a}\lambda_m^{\mathcal Q_a}$$ for some constant $c^{\mathcal H_a}$ and $c^{\mathcal Q_a}$.
    \item $$\lVert E[I(A=a)c(X)(1-Y)\{\tilde q_{al}(Z, X)-q_a(Z, X)\rVert W, X, S=1\}]\rVert_2 = O(n^{-{r_q}})$$ and $$\lVert E[I(A=a)c(X)(1-Y)\{\tilde h_{al}(W, X)-h_a(W, X)\rVert Z, X, S=1\}]\rVert_2 = O(n^{-r_h}).$$
    \item $\mu_m^{\mathcal H_a}\sim m^{-s_{h}}$, $\mu_m^{\mathcal Q_a}\sim m^{-s_{q}}$, $\lambda_m^{\mathcal H_a}\sim m^{-t_{h}}$, and $\lambda_m^{\mathcal Q_a}\sim m^{-t_{q}}$, where the decay rates $s_h$, $s_q$, $t_h$ and $t_q$ satisfies
    $$\max\{r_h + \dfrac{s_hr_q}{s_h + t_h}, r_q + \dfrac{s_qr_h}{s_q + t_q}\}>\dfrac{1}{2}.$$
\end{enumerate}

\clearpage 

\section{Proof of Theorem~\ref{thm:asym}}\label{sec:proof-rkhs}
Under Model~\ref{mod:logit-model} and Assumptions~\ref{assump:selection}, \ref{assump:nc} and \ref{assump:completeness-a}, Theorem~\ref{thm:PIPW}(b) states that $\beta_0 = \log(\zeta_{1c}/\zeta_{0c})$. By delta-method, it suffices to prove that $\tilde \zeta_{ac}$ is asymptotically linear with the influence function $IF_{\zeta_{ac}}(O)$, the proof of which follows \citet{ghassami2021minimax} Theorem 2 and Lemma 1, which we reproduced below.

For $a=0,1$ and a given value $\delta >0$, we define
$\mathcal H_{aC}^{|\delta}:= \{h_a\in\mathcal H_{aC}:\lVert E[h_a(W, X)\mid Z, X, S=1]\rVert_2\leq \delta\}$ and $\mathcal Q_{aC}^{|\delta}:= \{q_a\in\mathcal Q_{aC}:\lVert E[q_a(Z, X)\mid W, X, S=1]\rVert_2\leq \delta\}$. We define the measures of ill-posedness $\tau_{h_a}(\delta):=\sup_{h_a\in\mathcal H_{aC}^{|\delta}}\lVert h_a\rVert_2$ and  $\tau_{q_a}(\delta):=\sup_{q_a\in\mathcal Q_{aC}^{|\delta}}\lVert q_a\rVert_2$.

\begin{lemma}\label{lemma:ghassami-lemma}(\citet{ghassami2021minimax} Lemma 1)
Under regularity conditions R.6 and R.7, for any $\epsilon>0$ we have
$$\tau_{h_a}(\epsilon)^2\leq \min_{m\in\mathbb N_+}\left\{\dfrac{4\epsilon^2}{\lambda_m^{\mathcal H_a}} +[4 (c^{\mathcal H_a})^2 + 1]C\mu_{m+1}^{\mathcal H^a}\right\}$$
and $$\tau_{q_a}(\epsilon)^2\leq \min_{m\in\mathbb N_+}\left\{\dfrac{4\epsilon^2}{\lambda_m^{\mathcal Q_a}} +[4 (c^{\mathcal Q_a})^2 + 1]C\mu_{m+1}^{\mathcal Q^a}\right\}.$$
\end{lemma}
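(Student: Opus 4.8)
The plan is to prove the bound for $\tau_{h_a}$; the bound for $\tau_{q_a}$ then follows by the same argument with $\mathcal H_a$ replaced by $\mathcal Q_a$ and the roles of $W$ and $Z$ interchanged. The inequality is a functional-analytic interpolation estimate, quantifying how a function that is small in the weak (projected) norm $\lVert E[h_a(W,X)\mid Z,X,S=1]\rVert_2$ but bounded in the RKHS norm must be small in $L^2$, with the loss governed by the ill-posedness of the relevant conditional-expectation operator; the argument is the one of \citet{ghassami2021minimax}. First I would fix $\epsilon>0$ and an arbitrary $h_a\in\mathcal H_{aC}^{|\epsilon}$ and expand it in the Mercer basis $\{\varphi_j^{\mathcal H_a}\}$ of $\mathcal H_a$, which is orthonormal in $L^2\{F(w,x\mid S=1)\}$: writing $h_a=\sum_{j\geq 1}a_j\varphi_j^{\mathcal H_a}$ one has $\lVert h_a\rVert_2^2=\sum_j a_j^2$, while $h_a\in\mathcal H_{aC}$ (condition R.6) gives the source bound $\sum_j a_j^2/\mu_j^{\mathcal H_a}\leq C$, and the membership $h_a\in\mathcal H_{aC}^{|\epsilon}$ reads $a^{\top}V_\infty^{\mathcal H_a}a\leq\epsilon^2$, where $V_\infty^{\mathcal H_a}$ is the Gram matrix of the projected basis functions $E[\varphi_j^{\mathcal H_a}(W,X)\mid Z,X,S=1]$ and $V_m^{\mathcal H_a}$ is its leading $m\times m$ block.

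The main step is a truncation at a free index $m$. The tail satisfies $\sum_{j>m}a_j^2\leq\mu_{m+1}^{\mathcal H_a}\sum_{j>m}a_j^2/\mu_j^{\mathcal H_a}\leq C\mu_{m+1}^{\mathcal H_a}$ since the $\mu_j^{\mathcal H_a}$ are nonincreasing. For the head $\lVert a^{(\leq m)}\rVert^2$ I would use the minimum-eigenvalue bound $\lambda_m^{\mathcal H_a}\lVert a^{(\leq m)}\rVert^2\leq(a^{(\leq m)})^{\top}V_m^{\mathcal H_a}a^{(\leq m)}$ and then bound $(a^{(\leq m)})^{\top}V_m^{\mathcal H_a}a^{(\leq m)}$ by $a^{\top}V_\infty^{\mathcal H_a}a\leq\epsilon^2$ plus the cross- and tail-block contributions of $V_\infty^{\mathcal H_a}$; by condition R.7, every entry of $V_\infty^{\mathcal H_a}$ outside its leading $m\times m$ block is $O(c^{\mathcal H_a}\lambda_m^{\mathcal H_a})$, so, together with the tail bound, those contributions are at most a constant multiple of $c^{\mathcal H_a}\lambda_m^{\mathcal H_a}\big(\lVert a^{(\leq m)}\rVert\sqrt{C\mu_{m+1}^{\mathcal H_a}}+C\mu_{m+1}^{\mathcal H_a}\big)$. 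This yields a scalar quadratic inequality in $u=\lVert a^{(\leq m)}\rVert$; solving it and adding back the tail gives $\lVert h_a\rVert_2^2\leq 4\epsilon^2/\lambda_m^{\mathcal H_a}+[4(c^{\mathcal H_a})^2+1]C\mu_{m+1}^{\mathcal H_a}$, with the precise constants being routine Young-inequality bookkeeping. Taking the supremum over $h_a\in\mathcal H_{aC}^{|\epsilon}$ and then the infimum over $m\in\mathbb N_+$ gives the stated bound on $\tau_{h_a}(\epsilon)^2$, and the bound on $\tau_{q_a}(\epsilon)^2$ is the verbatim analogue.

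The hard part is the head/tail decoupling in the weak norm. Although $\{\varphi_j^{\mathcal H_a}\}$ is $L^2$-orthonormal, the projected functions $E[\varphi_j^{\mathcal H_a}(W,X)\mid Z,X,S=1]$ are in general not orthogonal, so $\lVert E[h_a\mid Z,X,S=1]\rVert_2^2$ entangles all the coefficients $a_j$ through the off-diagonal blocks of $V_\infty^{\mathcal H_a}$; condition R.7 is precisely the quantitative control of those entries, measured against the smallest eigenvalue $\lambda_m^{\mathcal H_a}$ of the leading block, that lets one isolate the leading block and close the quadratic inequality. A minor preliminary I would also flag is that the opening Mercer expansion presupposes compactness of the inclusion $\mathcal H_a\hookrightarrow L^2\{F(w,x\mid S=1)\}$, which I take to be part of the standing RKHS regularity maintained in Section~\ref{supp:regularity-conditions}.
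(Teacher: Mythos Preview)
The paper does not supply its own proof of this lemma. It is stated with the attribution ``(\citet{ghassami2021minimax} Lemma 1)'' and then used as a black box in the proof of Theorem~\ref{thm:asym}; no argument is given in the paper itself. So there is no in-paper proof to compare against.

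Your reconstruction is the standard one for this type of ill-posedness bound and is the argument of \citet{ghassami2021minimax}: Mercer expansion in the eigenbasis $\{\varphi_j^{\mathcal H_a}\}$, tail control $\sum_{j>m}a_j^2\leq C\mu_{m+1}^{\mathcal H_a}$ from the source condition $\lVert h_a\rVert_{\mathcal H_a}^2\leq C$ (R.6), head control via the minimum-eigenvalue inequality $\lambda_m^{\mathcal H_a}\lVert a^{(\leq m)}\rVert^2\leq (a^{(\leq m)})^{\top}V_m^{\mathcal H_a}a^{(\leq m)}$, and decoupling of head and tail in the projected norm via the entrywise hypothesis R.7. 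The one place your sketch is informal is the cross-block step: a naive entrywise bound $\sum_{i\leq m}\sum_{j>m}|[V_\infty^{\mathcal H_a}]_{ij}||a_i||a_j|\leq c^{\mathcal H_a}\lambda_m^{\mathcal H_a}\big(\sum_{i\leq m}|a_i|\big)\big(\sum_{j>m}|a_j|\big)$ does not obviously close because the tail $\ell^1$-sum need not be finite; the argument in \citet{ghassami2021minimax} organizes this via Cauchy--Schwarz against the tail $\ell^2$-bound rather than summing entrywise, which is what produces the specific constants $4$ and $4(c^{\mathcal H_a})^2+1$. Your phrase ``routine Young-inequality bookkeeping'' is accurate in spirit, but if you want a self-contained write-up you should display that step explicitly; as stated in the paper, condition R.7 is also somewhat ambiguously quantified, so you would need to fix its precise reading (which indices $i,j$ and which $m$) before the computation can be made rigorous.
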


By Lemma~\ref{lemma:ghassami-lemma} and R.9, we have that
$$\tau_{h_a}(\epsilon)=O\left(\epsilon^{\dfrac{s_h}{s_h + t_h}}\right)\qquad\text{and}\qquad\tau_{q_a}(\epsilon)=O\left(\epsilon^{\dfrac{s_q}{s_q + t_q}}\right)$$ The condition R.9 further implies that
\begin{align}
    \min\{n^{-r_h}\tau_{h_a}(n^{-r_q}), n^{-r_q}\tau_{q_a}(n)\} &= O\left(\min\left\{n^{-\left(r_h + \dfrac{s_hr_q}{s_h + t_h}\right)}, n^{-\left(r_q + \dfrac{s_qr_h}{s_q + t_q}\right)}\right\}\right)\nonumber\\&=o(n^{-1/2}).\label{eq:resid-convergence}
\end{align}

The result follows by the theorem below:
\begin{theorem}
(\citet{ghassami2021minimax} Theorem 2) Under regularity conditions R.1-R.6 and R.8, if Equation~\eqref{eq:resid-convergence} holds in addition, then 
the estimator $\tilde\theta_{ac}$ is asymptotically linear with the influence function
$$IF_{\zeta_{ac}}(O).$$
\end{theorem}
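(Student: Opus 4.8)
The plan is to reproduce, in the present setting, the cross-fitting argument of \citet{ghassami2021minimax} Theorem~2, exploiting the Neyman-orthogonal (``mixed-bias'') structure of the influence function \eqref{eq:if2}. Fix $a$ and work within the selected subpopulation. Writing $\varphi(O;q,h)=c(X)\bigl[(1-Y)h(a,W,X)-I(A=a)q(a,Z,X)\{(1-Y)h(A,W,X)-Y\}\bigr]$, one has $\tilde\zeta_{ac,l}=n_l^{-1}\sum_{i\in I_l}\varphi(O_i;\tilde q_{al},\tilde h_{al})$ and, by \eqref{eq:pf-tac} with both bridge functions correctly specified, $E\{\varphi(O;q_a,h_a)\mid S=1\}=\zeta_{ac}$, so that $IF_{\zeta_{ac}}$ of \eqref{eq:if2} equals $\varphi(\cdot;q_a,h_a)-\zeta_{ac}$ up to the routine $S/P(S=1)$ reweighting. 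Let $P$ denote expectation over a fresh selected observation with the nuisance arguments held fixed. Conditioning on the out-of-fold data $I_l^c$ used to construct $(\tilde q_{al},\tilde h_{al})$, I would decompose
\begin{align*}
\tilde\zeta_{ac,l}-\zeta_{ac}
&=\frac1{n_l}\sum_{i\in I_l}\bigl\{\varphi(O_i;q_a,h_a)-\zeta_{ac}\bigr\}
+\Bigl(\frac1{n_l}\sum_{i\in I_l}-P\Bigr)\bigl\{\varphi(\cdot;\tilde q_{al},\tilde h_{al})-\varphi(\cdot;q_a,h_a)\bigr\}\\
&\quad{}+\bigl\{P\varphi(\cdot;\tilde q_{al},\tilde h_{al})-\zeta_{ac}\bigr\}
\;=:\;L_l+R_{l,1}+R_{l,2},
\end{align*}
and then show that, after averaging over $l$, $\sum_l L_l/L$ is the exact leading term $n^{-1}\sum_i\{\varphi(O_i;q_a,h_a)-\zeta_{ac}\}$, while $R_{l,1}=o_p(n^{-1/2})$ and $R_{l,2}=o_p(n^{-1/2})$.

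For $R_{l,1}$ I would condition on $I_l^c$: the summands are then i.i.d.\ and centred, so its conditional variance is at most $n_l^{-1}E[\{\varphi(O;\tilde q_{al},\tilde h_{al})-\varphi(O;q_a,h_a)\}^2\mid I_l^c,S=1]$; since $\varphi$ depends on $(q,h)$ through a bilinear term and terms linear in each argument separately, the increment is a linear combination of $\tilde q_{al}-q_a$, $\tilde h_{al}-h_a$ and $(\tilde q_{al}-q_a)(\tilde h_{al}-h_a)$ with bounded coefficients (R.1) times nuisance factors uniformly bounded on the fold (R.3, R.4, R.6), hence of $L^2$-norm $O_p(\lVert\tilde q_{al}-q_a\rVert_2+\lVert\tilde h_{al}-h_a\rVert_2)=o_p(1)$ by R.5, and conditional Chebyshev yields $R_{l,1}=o_p(n_l^{-1/2})$. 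For $R_{l,2}$ the crucial input is the mixed-bias structure: expanding $\tilde q_{al}\tilde h_{al}-q_ah_a=(\tilde q_{al}-q_a)h_a+q_a(\tilde h_{al}-h_a)+(\tilde q_{al}-q_a)(\tilde h_{al}-h_a)$ inside $P\varphi(\cdot;\tilde q_{al},\tilde h_{al})-\zeta_{ac}$ and invoking the moment identities \eqref{eq:q-identification}--\eqref{eq:h-identification} (equivalently \eqref{eq:trt-bridge} and \eqref{eq:outcome-bridge}) to annihilate the two first-order brackets leaves only the single bilinear remainder
\begin{equation*}
R_{l,2}=-E\bigl[c(X)I(A=a)(1-Y)\{\tilde q_{al}(Z,X)-q_a(Z,X)\}\{\tilde h_{al}(W,X)-h_a(W,X)\}\mid S=1\bigr].
\end{equation*}

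The step I expect to be the main obstacle is turning this product into something of order $o(n^{-1/2})$, since the two conditional-expectation operators involved are ill-posed, so a rate for a projected error does not by itself give a rate for the $L^2$ error. Here I would apply Cauchy--Schwarz to $R_{l,2}$, bounding it in either order by $\lVert\tilde q_{al}-q_a\rVert_2$ times $\lVert E[c(X)I(A=a)(1-Y)(\tilde h_{al}-h_a)\mid Z,X,S=1]\rVert_2$, or by the symmetric product; R.8 supplies the rates $O(n^{-r_h})$ and $O(n^{-r_q})$ for the projected factors (and R.2, the conditional weight being bounded away from zero, lets those rates be transferred to $\lVert E[(\tilde q_{al}-q_a)\mid W,X,S=1]\rVert_2$ and $\lVert E[(\tilde h_{al}-h_a)\mid Z,X,S=1]\rVert_2$). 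Because R.6 confines $\tilde q_{al}-q_a$ and $\tilde h_{al}-h_a$ to fixed RKHS balls, the definition of the ill-posedness moduli gives $\lVert\tilde q_{al}-q_a\rVert_2\le\tau_{q_a}(O(n^{-r_q}))$ and $\lVert\tilde h_{al}-h_a\rVert_2\le\tau_{h_a}(O(n^{-r_h}))$, and Lemma~\ref{lemma:ghassami-lemma} with the eigenvalue-decay condition R.9 converts these into polynomial rates $\tau_{q_a}(\epsilon)=O(\epsilon^{s_q/(s_q+t_q)})$ and $\tau_{h_a}(\epsilon)=O(\epsilon^{s_h/(s_h+t_h)})$. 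Substituting, $|R_{l,2}|$ is of the order displayed in Equation~\eqref{eq:resid-convergence}, which the rate inequality in R.9 makes $o(n^{-1/2})$. Combining the three pieces and averaging over folds gives $\tilde\zeta_{ac}=n^{-1}\sum_i\{\varphi(O_i;q_a,h_a)-\zeta_{ac}\}+\zeta_{ac}+o_p(n^{-1/2})$, i.e.\ asymptotic linearity with influence function $IF_{\zeta_{ac}}$; via the delta method this in turn yields the consistency and asymptotic normality of $\tilde\beta_0=\log(\tilde\zeta_{1c}/\tilde\zeta_{0c})$ used in Theorem~\ref{thm:asym}, since $\zeta_{0c},\zeta_{1c}\neq0$.
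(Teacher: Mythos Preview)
The paper does not prove this statement at all: it is quoted verbatim as \citet{ghassami2021minimax} Theorem~2 and invoked as a black box, with the surrounding text in Section~\ref{sec:proof-rkhs} only verifying that conditions R.7 and R.9 combine with Lemma~\ref{lemma:ghassami-lemma} to yield Equation~\eqref{eq:resid-convergence}. Your proposal therefore goes well beyond the paper's own treatment by reconstructing the cross-fitting argument of \citet{ghassami2021minimax} in this setting --- the three-term decomposition $L_l+R_{l,1}+R_{l,2}$, the conditional-variance control of the empirical-process remainder $R_{l,1}$ via sample splitting and R.1--R.6, the cancellation of first-order terms in $R_{l,2}$ using the bridge moment conditions \eqref{eq:trt-bridge} and \eqref{eq:outcome-bridge} to isolate the bilinear product, and the combination of R.8 with the ill-posedness moduli from Lemma~\ref{lemma:ghassami-lemma} to obtain the rate in \eqref{eq:resid-convergence}. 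This is exactly the structure of the original proof, and your sketch is correct; in effect you have supplied the argument that the paper elected to outsource.
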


\clearpage

\section{Details of the simulation studies}\label{supp:sim}
We generate the data for a target population of $N$ observations according to the following five scenarios
\subsection*{Scenario I: Univariate binary $U$, $Z$ and $W$}

\begin{align*}
    U & \sim \text{Bernoulli}(0, 0.5);\\
    A\mid U & \sim \text{Bernoulli}(\expit(0.2 + 0.4U))\\
    Z\mid A, U & \sim \text{Bernoulli}(0.2 + 0.1 A + 0.4U + 0.2AU);\\
    W\mid U & \sim \text{Bernoulli}(0.2 + 0.4U);\\
    Y\mid A, U & \sim \text{Bernoulli}(\expit(-0.405 -1.609A - 0.7U))\\
    S\mid U, A, Y & \sim \text{Bernoulli}(\exp(-1.7 + 0.2A + 0.4Y + 0.7U)).
\end{align*}

\subsection*{Scenario II-V: Continuous $U$, $X$, $Z$, $W$.}

\begin{align*}
    U & \sim \text{Uniform}(0, 1);\\
    X & \sim \text{Uniform}(0,1); \\
    A\mid U, X & \sim \text{Bernoulli}(\expit(-1 + U + X))\\
    Z\mid A, U, X & \sim \text{N}(0.25A + 0.25X + 4U, 0.25^2);\\
    W\mid Y, U, X & \sim \text{N}(0.25Y + 0.25X + 4U, 0.25^2);\\
    Y\mid A, U, X & \sim \text{Bernoulli}(\expit(-3.89 + \beta_0A - 2U - X))\\
    S\mid U, X, A, Y & \sim \text{Bernoulli}(\exp(-5 + 0.4A + 4Y + 0.3U + 0.2X).
\end{align*}

\clearpage
\section{Estimating conditional odds ratio under effect modification by $X$}\label{supp:eff-mod}

In Section~\ref{sec:homo-or}, we assumed that the effect size of odds ratio is constant across every $(U, X)$ strata. However, in practical situation, the treatment effect may be heterogeneous. Therefore, we relax Model~\ref{mod:logit-model} and propose Model~\ref{mod:logit-model-x} below that allows heterogeneous odds ratio across $X$ strata:
\begin{model}[Heterogeneous odds ratio model by $X$]\label{mod:logit-model-x}
\begin{equation}
    \logit \,P(Y=1\mid A, U, X) = \beta_0(X) A + \eta(U, X).
\end{equation}
\end{model}

In Model~\ref{mod:logit-model-x}, the odds ratio treatment effect $\beta_0(X)$ is homogeneous with respect to the latent factors $U$ but is allowed to vary by $X$. In such scenarios, if the negative control variables are available, the odds ratio function $\beta_0(X)$ can still be identified, similar to Theorems~\ref{thm:PIPW}, \ref{thm:POR} and \ref{thm:PDR}, as stated in the following theorem:

\begin{theorem}\label{thm:identification-beta-x}Under Model \ref{mod:logit-model-x} and Assumptions~\ref{assump:selection}, \ref{assump:nc}, \ref{assump:trt-bridge} and \ref{assump:outcome-bridge}, then the odds ratio function $\beta_0(X)$ satisfies the following moment equations:
\begin{align*}
    &E\left[c(X)I(A=a)Yq(A, Z, X)e^{-\beta_0(X)}\mid S=1\right]=0;\\
    &E\left[c(X)(1-Y)\left\{h(1, W, X) - h(0, W, X)e^{\beta_0(X)}\right\}\mid S=1\right]=0;\\
    &E\bigg\{c(X)\bigg[(-1)^{1-A}q(A, Z, X)e^{-\beta_0(X) A}\left\{Y - (1-Y)h(A, W, X)\right\}+\nonumber\\
        &\qquad \qquad (1-Y)\left\{h(1, W, X)e^{-\beta_0(X)} - h(0, W, X)\right\}\bigg]\mid S=1\bigg\}=0.
\end{align*}
\end{theorem}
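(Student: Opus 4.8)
The plan is to mirror the proofs of Theorems~\ref{thm:PIPW}, \ref{thm:POR} and \ref{thm:PDR}, replacing the scalar $\beta_0$ by the function $\beta_0(X)$ and carrying every identity out conditionally within strata of $X$; the three displays in the statement then follow by multiplying each stratum-wise identity by an arbitrary $c(X)$ and applying the tower rule over $X\mid S=1$.

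First I would establish a conditional analogue of Lemma~\ref{lemma:estimand}. Fix a value $x$ of $X$ and set
\[
\theta_a(x)\;=\;E\!\left\{\frac{P(A=a\mid U,X,Y=1,S=1)}{P(A=a\mid U,X,Y=0,S=1)}\,P(Y=1\mid U,X,S=1)\,\Bigm|\,X=x,\,S=1\right\},\quad a=0,1.
\]
Under Model~\ref{mod:logit-model-x}, the $(U,X)$-conditional log odds ratio is $\beta_0(X)$, which does not depend on $U$, so within the stratum $\{X=x\}$ it is the constant $\beta_0(x)$; the computation in the proof of Lemma~\ref{lemma:estimand} goes through verbatim, with the selection factors cancelling as $\xi(U,X)\,\xi(U,X)^{-1}=1$ by Assumption~\ref{assump:selection}. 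Rearranging the resulting odds-ratio identity and factoring the constant $\exp\{\beta_0(x)\}$ out of $E[\cdot\mid X=x,S=1]$ gives $\theta_1(x)=\exp\{\beta_0(x)\}\,\theta_0(x)$, i.e. $\beta_0(x)=\log\{\theta_1(x)/\theta_0(x)\}$ whenever $\theta_0(x)\neq0$.

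Next I would re-derive the three representations of $\theta_a(x)$. The chains of equalities in the proofs of Theorems~\ref{thm:PIPW} (display~\eqref{eq:pf-tac-PIPW}), \ref{thm:POR} (display~\eqref{eq:pf-POR}) and \ref{thm:PDR} (display~\eqref{eq:pf-tac}) all proceed by iterated expectations that condition, at the innermost level, on $(A,U,X,Y)$ and then average over the remaining variables; $X$ is never marginalised. Inserting an extra conditioning on $X=x$ throughout, and invoking Assumptions~\ref{assump:nc}, \ref{assump:trt-bridge} and \ref{assump:outcome-bridge} exactly as before, yields
\begin{align*}
\theta_a(x)&=E\{I(A=a)\,Y\,q(A,Z,X)\mid X=x,S=1\},\\
\theta_a(x)&=E\{(1-Y)\,h(a,W,X)\mid X=x,S=1\},\\
\theta_a(x)&=E\big\{(1-Y)h(a,W,X)-I(A=a)q(a,Z,X)\{(1-Y)h(A,W,X)-Y\}\mid X=x,S=1\big\},
\end{align*}
the first under Assumption~\ref{assump:trt-bridge}, the second under Assumption~\ref{assump:outcome-bridge}, and the third under either of the two (split into the same two cases as in the proof of Theorem~\ref{thm:PDR}).

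Finally, for each $x$ I would rewrite $\theta_1(x)=\exp\{\beta_0(x)\}\theta_0(x)$ using these representations, e.g. in the treatment-bridge case as
\[
\exp\{-\beta_0(x)\}\,E\{I(A=1)Yq\mid X=x,S=1\}-E\{I(A=0)Yq\mid X=x,S=1\}=0,
\]
which combines the $a=0,1$ versions into the single $(-1)^{1-A}$ form of the statement, and analogously for the POR and PDR representations (the POR identity is $E[(1-Y)\{h(1,W,X)-h(0,W,X)e^{\beta_0(X)}\}\mid X=x,S=1]=0$, and similarly for PDR). Multiplying each stratum-wise identity by an arbitrary square-integrable $c(x)$ and taking $E[\cdot\mid S=1]$ by the tower rule over $X\mid S=1$ then produces the three displayed moment equations. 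I expect the only delicate point to be bookkeeping rather than substance: one must check that the cancellation of the selection probabilities via Assumption~\ref{assump:selection} and the Fredholm-type manipulations of $q$ and $h$ genuinely commute with conditioning on $X=x$ — which they do, since every step in the original proofs conditions on $(U,X)$ before any averaging — and that $\theta_0(x)\neq0$ almost surely, the natural stratum-wise analogue of the nondegeneracy condition on $c(X)$ imposed in Theorems~\ref{thm:PIPW}, \ref{thm:POR} and \ref{thm:PDR}.
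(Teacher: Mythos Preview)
Your proposal is correct and matches the paper's approach: the paper does not spell out a separate proof of this theorem but treats it as a direct extension of Theorems~\ref{thm:PIPW}, \ref{thm:POR} and \ref{thm:PDR}, obtained by replacing the scalar $\beta_0$ with $\beta_0(X)$ and carrying the argument through conditionally on $X$, exactly as you describe via the stratum-wise quantities $\theta_a(x)$ and the tower rule. Your observation that the first display should read with $(-1)^{1-A}$ and $e^{-\beta_0(X)A}$ (as in the estimating equation~\eqref{eq:est-PIPW-x} immediately below the theorem) rather than $I(A=a)e^{-\beta_0(X)}$ is also right; this is a typographical slip in the statement, not a flaw in your argument.
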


In cases where a parametric model for the odds ratio function $\beta_0(X;\alpha)$ is considered appropriate with a finite dimensional parameter $\alpha$, similar to the previous PIPW, POR and PDR estimation, the low-dimension parameter $\alpha$ can be estimated by solving the estimating equations

\begin{align}
    &\dfrac{1}{n}\sum_{i=1}^n {(-1)}^{1-A_i}c(X_i)q(A_i, Z_i, X_i; \tau)Y_ie^{-\beta_0(X;\alpha) A_i}=0;\label{eq:est-PIPW-x}\\
    &\dfrac{1}{n}\sum_{i=1}^n c(X_i)(1-Y_i)\left\{h(1, W_i, X_i;\hat\psi)e^{-\beta_0(X_i;\alpha)} - h(0, W_i, X_i;\hat\psi)\right\}=0;\label{eq:est-por-x}\\
    \text{or}\qquad &\dfrac{1}{n}\sum_{i=1}^n c(X_i)\bigg[(-1)^{1-A_i}q(A_i, Z_i, X_i)e^{-\beta_0(X;\alpha) A}\left\{Y_i - (1-Y_i)h(A_i, W_i, X_i)\right\}+\nonumber\\
        &\qquad \qquad (1-Y_i)\left\{h(1, W_i, X_i)e^{-\beta_0(X;\alpha)} - h(0, W_i, X_i)\right\}\bigg]=0,\label{eq:est-pdr-x}
\end{align}where $\hat\tau$ and $\hat\psi$ are estimators for the parameters indexing the treatment and outcome confounding bridge functions, obtained by solving~Equations~\eqref{eq:q-estimation} and \eqref{eq:h-estimation}, respectively, and $c(X)$ is a user-specified function with dimension no smaller than that of $\alpha$.

Alternatively, a similar kernel method to the one in Section~\ref{supp:RKHS} of the Supplementary Materials can be used to obtain a cross-fitting minimax learning estimator of $\alpha$ which solves Equation~\eqref{eq:est-pdr-x} and nonparametrically estimates the nuisance functions $q$ and $h$. With an estimator $\hat\alpha$ using either of the above methods, the conditional odds ratio at $X=x$ can then be estimated by $\beta_0(x;\hat\alpha)$.

Nonparametric estimation of $\beta_0(X)$ is possible, for example, by using the minimax learning approach \citep{dikkala2020minimax} or the semi-nonparametric sieve generalised method of moments approach~\citep{ai2003efficient,chen2007large}. Such a nonparametric estimator may be of interest, for example, to more flexibly account for the heterogeneity of treatment effects and making flexible prediction on the ``individual treatment effect''.

\clearpage
\section{Extension to polytomous treatment and effect}\label{supp:polytomous}

Suppose we want to study the effect of a polytomous treatment $A$ on a polytomous outcome $Y$, where $A$ have levels $a_0,a_1,\dots,a_J$ and $Y$ have levels $y_0,\dots, y_K$. Here $a_0$ and $y_0$ denote the reference treatment and outcome, respectively. We make the following homogeneity assumption:
\begin{model}[Homogeneous odds ratio model]\label{mod:logit-mod-polytomous}
    \begin{align*}
    &\log\left(\dfrac{P(Y\mid A, U, X)P(Y=y_0\mid A=a_0, U, X)}{P(Y\mid A=a_0, U, X)P(Y=y_0\mid A, U, X)}\right)=\\ &\qquad\qquad\sum_{\substack{j\in\{1,\dots,J\}\\k\in\{1,\dots, K\}}}\beta_j^k I(A=a_j, Y=y_k)+\eta^k(U,X).\end{align*}
\end{model}
Model~\ref{mod:logit-mod-polytomous} indicates that the odds ratio of $Y=y_k$ relative to $Y=y_0$ against the treatment $A=a_j$ versus $A=a_0$ is $\beta_j^k$ across all strata of $(U,X)$. Model~\ref{mod:logit-mod-polytomous} is a natural semiparametric extension to the polytomous logistic regression model~\citep{engel1988polytomous}. When $J=1$ and $K=1$, Model~\ref{mod:logit-mod-polytomous} reduces to Model~\ref{mod:logit-model}. The goal is to estimate and make inference on every $\beta_j^k$ in the presence of latent factors $U$ and confounded outcome-dependent sampling.

We make the following assumptions, corresponding to Assumption~\ref{assump:selection}, \ref{assump:trt-bridge} and~\ref{assump:outcome-bridge}:
\begin{assumption}[No effect modification by $A$ on the outcome-dependent selection]\label{assump:selection-polytomous}For $a=a_0,\dots,a_J$ and real-valued unknown functions $\xi^1(U,X),\dots,\xi^K(U,X)$, the sampling mechanism satisfies
\begin{align*}
    \dfrac{P(S=1\mid Y=y_k, A=a, U, X)}{P(S=1\mid Y=y_0, A=a, U, X)}=\xi^k(U,X)
\end{align*}
for $k=1,\dots, K$.
\end{assumption}

\begin{assumption}[Confounding bridge functions]\label{assump:bridge-fcts-polytomous}
The exists a treatment confounding bridge function $q(A, Z, X)$ and $K$ outcome confounding bridge functions $h_k(A, W, X)$ such that for $k=1,\dots, K$ and $a=a_0,a_1,\dots, a_J$,
\begin{align}
    &E\{q(a, Z, X)\mid A=a, U, X\}=1/P(A=a\mid U, X, Y=y_0,S=1)\label{eq:trt-bridge-u-polytomous}\\
    &E\{I(Y=y_0)h_k(a, W, X)\mid A=a,U, X, S=1\}=P(Y=y_k\mid A=a, U, X, S=1).\label{eq:outcome-bridge-u-polytomous}
\end{align}

\end{assumption}

We state the following results for inference of $\beta_j^k$. The proofs are similar to those in Section~\ref{supp:proof} and are therefore omitted.
\begin{lemma}\label{lemma:ident-or-polytomous}
Under Model~\ref{mod:logit-mod-polytomous} and Assumption~\ref{assump:selection-polytomous}, the log odds ratio $\beta_j^k$ satisfies
$$\beta_j^k = \log(\theta_{jc}^k/\theta_{0c}^k),$$
where 
$$\theta_{jc}^k = E\left\{c(X)\dfrac{P(A=a_j\mid U, X, Y=y_k,S=1)}{P(A=a_j\mid U, X, Y=y_0, S=1)}P(Y=y_k\mid U, X, S=1)\mid S=1\right\}.$$
and $c(X)$ is an arbitrary real-valued function that satisfies $\theta_{jc}^k\neq 0$ for $j=0,1,\dots, K$.
\end{lemma}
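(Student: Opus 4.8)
The plan is to follow the proof of Lemma~\ref{lemma:estimand} almost verbatim, replacing the single odds ratio by the ``$2\times 2$'' cross-ratio between a generic cell $(A=a_j,Y=y_k)$ and the reference cell $(A=a_0,Y=y_0)$. First I would apply Bayes' rule inside the selected population: for every $a$ and $y$, $P(Y=y\mid U,X,A=a,S=1)$ is proportional to $P(Y=y\mid U,X,A=a)\,P(S=1\mid Y=y,A=a,U,X)$, with a normalizing constant that depends on $(U,X,a)$ but not on $y$. Forming the cross-ratio $\dfrac{P(Y=y_k\mid U,X,A=a_j,S=1)\,P(Y=y_0\mid U,X,A=a_0,S=1)}{P(Y=y_k\mid U,X,A=a_0,S=1)\,P(Y=y_0\mid U,X,A=a_j,S=1)}$, the two normalizing constants (one for $A=a_j$, one for $A=a_0$) each occur once in the numerator and once in the denominator and therefore cancel, so this quantity factors as a product of (i) the same cross-ratio built from the target-population laws $P(Y\mid U,X,A)$, and (ii) the analogous cross-ratio of selection probabilities $P(S=1\mid Y,A,U,X)$.

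Next I would evaluate the two factors. Factor (i) equals $\exp(\beta_j^k)$: it is precisely the cross-ratio of target-population outcome probabilities whose logarithm Model~\ref{mod:logit-mod-polytomous}, evaluated at $A=a_j$ and $Y=y_k$, identifies with $\beta_j^k$ (the $(U,X)$-dependent baseline terms $\eta^k(U,X)$, and any factor depending on $A$ or $Y$ alone, cancel from the cross-ratio). Factor (ii) equals $1$: by Assumption~\ref{assump:selection-polytomous} applied once with $a=a_j$ and once with $a=a_0$, $\dfrac{P(S=1\mid Y=y_k,A=a_j,U,X)}{P(S=1\mid Y=y_0,A=a_j,U,X)}\cdot\dfrac{P(S=1\mid Y=y_0,A=a_0,U,X)}{P(S=1\mid Y=y_k,A=a_0,U,X)}=\xi^k(U,X)\cdot\xi^k(U,X)^{-1}=1$. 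Hence the cross-ratio of the $S=1$ outcome probabilities equals $\exp(\beta_j^k)$ for almost every $(U,X)$.

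I would then translate this into the treatment-selection ratios appearing in $\theta_{jc}^k$. Using Bayes' rule once more within the selected population, $\dfrac{P(Y=y_k\mid U,X,A=a,S=1)}{P(Y=y_0\mid U,X,A=a,S=1)}=\dfrac{P(A=a\mid U,X,Y=y_k,S=1)}{P(A=a\mid U,X,Y=y_0,S=1)}\cdot\dfrac{P(Y=y_k\mid U,X,S=1)}{P(Y=y_0\mid U,X,S=1)}$, where the factor $P(A=a\mid U,X,S=1)$ cancels. Taking this identity at $a=a_j$ and at $a=a_0$ and dividing, the common factor $P(Y=y_k\mid U,X,S=1)/P(Y=y_0\mid U,X,S=1)$ cancels, so the cross-ratio identity becomes $r_j^k(U,X)=\exp(\beta_j^k)\,r_0^k(U,X)$ almost surely, where $r_j^k(U,X):=P(A=a_j\mid U,X,Y=y_k,S=1)/P(A=a_j\mid U,X,Y=y_0,S=1)$. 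Multiplying both sides by $c(X)\,P(Y=y_k\mid U,X,S=1)$ and taking $E[\,\cdot\mid S=1]$ yields $\theta_{jc}^k=\exp(\beta_j^k)\,\theta_{0c}^k$, and since $c$ is chosen so that $\theta_{0c}^k\neq 0$, we conclude $\beta_j^k=\log(\theta_{jc}^k/\theta_{0c}^k)$.

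I do not anticipate a genuine obstacle: the argument is a structural copy of the binary case and uses only Bayes' rule, Model~\ref{mod:logit-mod-polytomous}, and Assumption~\ref{assump:selection-polytomous}. The only part requiring care is the cross-ratio bookkeeping with many treatment and outcome levels --- keeping straight which normalizing constants and which $A$-only, $Y$-only, and $(U,X)$-only factors cancel --- and invoking Assumption~\ref{assump:selection-polytomous} twice (at $a_j$ and at $a_0$) so that the two copies of $\xi^k(U,X)$ cancel; implicit positivity conditions ensuring the relevant conditional probabilities are well-defined are assumed throughout, exactly as in the binary case.
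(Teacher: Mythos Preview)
Your proposal is correct and follows essentially the same approach the paper intends: the paper states that the proof of Lemma~\ref{lemma:ident-or-polytomous} is ``similar to those in Section~\ref{supp:proof} and [is] therefore omitted,'' and your argument is precisely the natural polytomous extension of the proof of Lemma~\ref{lemma:estimand}---form the $(a_j,y_k)$-versus-$(a_0,y_0)$ cross-ratio inside the selected sample, use Model~\ref{mod:logit-mod-polytomous} and Assumption~\ref{assump:selection-polytomous} to show it equals $\exp(\beta_j^k)$, then convert to the treatment-given-outcome ratios via Bayes and integrate. There is nothing to add.
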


\begin{theorem}[Identification of $q(A, Z, X)$ and $h_k(A, W, X)$]\label{thm:q-h-ident-polytomous}

The functions $q(\cdot)$ and $h_k(\cdot)$, $k=1,\dots, K$, satisfy the moment conditions 
    \begin{align*}
        &E\left[I(Y=y_0)\left\{\kappa^*_1(A, W, X)q(A, Z, X)-\sum_{j=0}^J\kappa^*_1(a_j, W, X)\right\}\mid S=1\right]=0\\
        &E\left[\kappa^*_{2k}(A, Z, X)\left\{I(Y=y_0)h_k(A, W, X) - I(Y=y_k)\right\}\mid S=1\right]=0,\qquad \mbox{k=1,\dots, K},
    \end{align*}
    where $\kappa^*_1(\cdot)$ and $\kappa_{2k}$ are arbitrary functions.
\end{theorem}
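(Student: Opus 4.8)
The plan is to replicate, in the polytomous setting, the two-step strategy behind Theorem~\ref{thm:q-h-ident}: first push the defining equations of $q$ and of the $h_k$'s from the latent $(U,X)$-scale onto observable conditioning sets using the proximal independence conditions of Assumption~\ref{assump:nc} (which apply verbatim with polytomous $A$ and $Y$) together with the existence statement in Assumption~\ref{assump:bridge-fcts-polytomous}, and then read off each displayed moment condition by one round of iterated expectations followed by an algebraic cancellation.

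For the treatment confounding bridge function, the first step is to establish the polytomous analogue of Theorem~\ref{thm:trt-bridge}: any $q$ solving Equation~\eqref{eq:trt-bridge-u-polytomous} satisfies
\[
E\{q(a, Z, X)\mid A=a, W, X, Y=y_0, S=1\}=\frac{1}{P(A=a\mid W, X, Y=y_0, S=1)}.
\]
The derivation is word-for-word that of Theorem~\ref{thm:trt-bridge}: insert $U$ into the conditioning set, use $Z\indep(Y,W,S)\mid A,U,X$ to reduce the inner conditioning to $(A,U,X)$, apply Equation~\eqref{eq:trt-bridge-u-polytomous}, expand $f(u\mid A,W,X,Y=y_0,S=1)$ by Bayes' rule, and use $W\indep A\mid U,X,Y,S$ once more to cancel the propensities. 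Given this identity, the first displayed moment condition follows by conditioning the integrand on $(A,W,X,Y=y_0,S=1)$ (the factor $I(Y=y_0)$ makes only $Y=y_0$ contribute), substituting the identity, then conditioning on $(W,X,Y=y_0,S=1)$ and using
\[
E\!\left[\frac{\kappa^*_1(A,W,X)}{P(A\mid W,X,Y=y_0,S=1)}\;\middle|\;W,X,Y=y_0,S=1\right]=\sum_{j=0}^{J}\kappa^*_1(a_j,W,X),
\]
which exactly cancels the subtracted sum. The only new bookkeeping relative to the binary case is that the ``probabilities sum to one'' step is now a sum over the $J+1$ treatment levels $a_0,\dots,a_J$.

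For the outcome confounding bridge functions I would likewise first establish the analogue of the key step in the proof of Theorem~\ref{thm:outcome-bridge}: using $Z\indep(Y,W,S)\mid A,U,X$ to replace the inner conditioning set $(A,U,Z,X,S=1)$ by $(A,U,X,S=1)$ and then invoking the defining Equation~\eqref{eq:outcome-bridge-u-polytomous}, one obtains, for each $k=1,\dots,K$,
\[
E\{I(Y=y_0)h_k(A,W,X)-I(Y=y_k)\mid A,Z,X,S=1\}=0.
\]
The second displayed moment condition is then immediate: condition the integrand of $E[\kappa^*_{2k}(A,Z,X)\{I(Y=y_0)h_k(A,W,X)-I(Y=y_k)\}\mid S=1]$ on $(A,Z,X,S=1)$ and apply the last display; no further cancellation is needed.

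There is no substantive obstacle --- this is precisely why the paper states the proof is omitted as ``similar''. The one place that warrants care is the transfer step (the polytomous forms of Theorems~\ref{thm:trt-bridge} and \ref{thm:outcome-bridge}), since that is where the proximal independence conditions and the Bayes expansion of the latent density do their work, and where one must check that conditioning on the event $Y=y_0$ (rather than on the full value of the polytomous $Y$) remains compatible with Equations~\eqref{eq:trt-bridge-u-polytomous} and \eqref{eq:outcome-bridge-u-polytomous}. Everything downstream is iterated expectations plus the observation that the weights $\kappa^*_1(A,W,X)/P(A\mid W,X,Y=y_0,S=1)$ average, over $A$, to $\sum_{j=0}^{J}\kappa^*_1(a_j,W,X)$.
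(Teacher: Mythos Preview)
Your proposal is correct and follows precisely the route the paper intends: the authors omit the proof as ``similar to those in Section~\ref{supp:proof}'', and what you outline is exactly the adaptation of the proofs of Theorems~\ref{thm:trt-bridge}, \ref{thm:outcome-bridge}, and \ref{thm:q-h-ident} to polytomous $A$ and $Y$, with the single bookkeeping change that the sum over $A\in\{0,1\}$ becomes a sum over $a_0,\dots,a_J$. Your caveat about conditioning on $Y=y_0$ rather than on the full polytomous value of $Y$ is also well placed, though note that since $Z\indep(Y,W,S)\mid A,U,X$ implies $Z\indep(Y,W,S)\mid A,U,X,Y=y_0$ (conditioning on an event in the sigma-field of $Y$), the transfer step goes through without additional work.
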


\begin{theorem}[Identification of $\theta_{jc}^k$]\label{thm:tac-ident-polytomous}
Assume Model~\ref{mod:logit-mod-polytomous} and Assumptions~\ref{assump:selection-polytomous} and \ref{assump:nc} hold, then
\begin{enumerate}[(a)]
    \item if there exists a function $q$ that solves Equation~\eqref{eq:trt-bridge-u-polytomous}, then
    $$\theta_{jc}^k =E\left\{I(A=a_j, Y=y_k)c(X)q(a_j, Z, X)\mid S=1\right\};$$
    
    \item if there exists functions $h^k$, $k=1,\dots, K$ that solve Equation~\eqref{eq:outcome-bridge-u-polytomous}, then
    $$\theta_{jc}^k =E\left\{c(X)I(Y=y_0)h_k(a_j, W, X)\mid S=1\right\};$$
    
    \item if either (i) the function $q$ solves Equation~\eqref{eq:trt-bridge-u-polytomous}, or (ii) the functions $h_k$'s solve Equation~\eqref{eq:outcome-bridge-u-polytomous}, then
\begin{align*}\theta_{jc}^k &= E\big\{c(X)\big[I(Y=y_0)h_k(a_j, W, X)- \nonumber \\&\qquad I(A=a_j)q(a_j, Z, X)\big\{I(Y=y_0)h_k(a_j, W, X)-I(Y=y_k)\big\}\big]\mid S=1\big\}.\end{align*}
    
\end{enumerate}

\end{theorem}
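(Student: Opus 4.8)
The plan is to prove each of (a)--(c) by replicating, mutatis mutandis, the iterated-expectation arguments used for Theorems~\ref{thm:PIPW}, \ref{thm:POR} and \ref{thm:PDR} in the binary case, with $I(Y=y_k)$ playing the role of $Y$ and $I(Y=y_0)$ the role of $1-Y$ throughout, and with $Y=y_0$ replacing $Y=0$ in every conditioning event. A preliminary step records the polytomous analogues of Theorems~\ref{thm:trt-bridge} and \ref{thm:outcome-bridge}: under Assumptions~\ref{assump:nc} and \ref{assump:bridge-fcts-polytomous}, any $q$ solving \eqref{eq:trt-bridge-u-polytomous} also satisfies $E\{q(a_j,Z,X)\mid A=a_j,W,X,Y=y_0,S=1\}=1/P(A=a_j\mid W,X,Y=y_0,S=1)$, and any family $\{h_k\}$ solving \eqref{eq:outcome-bridge-u-polytomous} also satisfies $E\{I(Y=y_0)h_k(a_j,W,X)-I(Y=y_k)\mid A=a_j,Z,X,S=1\}=0$; both are obtained verbatim from the proofs of Theorems~\ref{thm:trt-bridge} and \ref{thm:outcome-bridge} after the substitutions above. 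Because $\theta_{jc}^k$ is defined in Lemma~\ref{lemma:ident-or-polytomous} purely through the selected-sample law, it remains only to re-express it in the three claimed forms (the link back to $\beta_j^k$ being supplied by that lemma).

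For (a), I would start from $E\{I(A=a_j,Y=y_k)c(X)q(a_j,Z,X)\mid S=1\}$ and condition on $(U,X,S=1)$. Inside, write $I(A=a_j,Y=y_k)q(a_j,Z,X)=I(A=a_j,Y=y_k)E\{q(a_j,Z,X)\mid A=a_j,U,X,Y=y_k,S=1\}$; since $Z\indep(Y,S)\mid A,U,X$ by Assumption~\ref{assump:nc}(b), this conditional expectation equals $E\{q(a_j,Z,X)\mid A=a_j,U,X\}$, which by \eqref{eq:trt-bridge-u-polytomous} equals $1/P(A=a_j\mid U,X,Y=y_0,S=1)$. Taking the remaining expectation over $(A,Y)$ given $(U,X,S=1)$ produces the factor $P(A=a_j\mid U,X,Y=y_k,S=1)P(Y=y_k\mid U,X,S=1)$, so after dividing by $P(A=a_j\mid U,X,Y=y_0,S=1)$ one recovers exactly the integrand defining $\theta_{jc}^k$.

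For (b), I would start from $E\{c(X)I(Y=y_0)h_k(a_j,W,X)\mid S=1\}$ and again condition on $(U,X,S=1)$. Since $W\indep A\mid U,X,Y,S$ by Assumption~\ref{assump:nc}(a), $E\{h_k(a_j,W,X)\mid U,X,Y=y_0,S=1\}=E\{h_k(a_j,W,X)\mid A=a_j,U,X,Y=y_0,S=1\}$, which by \eqref{eq:outcome-bridge-u-polytomous} equals $P(Y=y_k\mid A=a_j,U,X,S=1)/P(Y=y_0\mid A=a_j,U,X,S=1)$; hence the inner expectation is $\{P(Y=y_k\mid A=a_j,U,X,S=1)/P(Y=y_0\mid A=a_j,U,X,S=1)\}P(Y=y_0\mid U,X,S=1)$. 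Applying Bayes' rule to the odds of $Y$ given $A=a_j$ converts this ratio into $P(A=a_j\mid U,X,Y=y_k,S=1)/P(A=a_j\mid U,X,Y=y_0,S=1)$ (the marginal $P(A=a_j\mid U,X,S=1)$ and the stray outcome probabilities cancel), leaving precisely $\theta_{jc}^k$; this is the step where the ``odds'' definition of $h_k$ in \eqref{eq:outcome-bridge-u-polytomous}, rather than a ``mean'' definition, is what makes the cancellation work.

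For (c), in Case (i) I would rewrite the claimed expression as $E\{c(X)I(Y=y_0)h_k(a_j,W,X)[1-I(A=a_j)q(a_j,Z,X)]\mid S=1\}+E\{c(X)I(A=a_j)q(a_j,Z,X)I(Y=y_k)\mid S=1\}$; the second term equals $\theta_{jc}^k$ by part (a), and the first vanishes upon conditioning on $(A,W,X,Y=y_0,S=1)$, using the observed-data moment equation for $q$ to get $E\{I(A=a_j)q(a_j,Z,X)\mid A,W,X,Y=y_0,S=1\}=I(A=a_j)/P(A=a_j\mid W,X,Y=y_0,S=1)$ and then taking the further expectation over $A$ given $(W,X,Y=y_0,S=1)$. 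In Case (ii) I would rewrite it as $E\{c(X)I(Y=y_0)h_k(a_j,W,X)\mid S=1\}-E\{c(X)I(A=a_j)q(a_j,Z,X)[I(Y=y_0)h_k(a_j,W,X)-I(Y=y_k)]\mid S=1\}$; the first term is $\theta_{jc}^k$ by part (b), and the second vanishes after conditioning on $(A=a_j,Z,X,S=1)$ and invoking the observed-data moment equation for $h_k$. I expect the only real care needed is bookkeeping -- tracking which conditioning event each proxy independence in Assumption~\ref{assump:nc} removes, and checking the Bayes-rule cancellation in (b) at the level of the $(j,k)$-indexed odds rather than a single binary contrast -- with no genuinely new difficulty beyond the binary proofs.
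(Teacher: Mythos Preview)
Your proposal is correct and follows exactly the route the paper intends: the paper states that the proofs ``are similar to those in Section~\ref{supp:proof} and are therefore omitted,'' i.e., one replicates the proofs of Theorems~\ref{thm:PIPW}, \ref{thm:POR} and \ref{thm:PDR} with $I(Y=y_k)$, $I(Y=y_0)$ and $Y=y_0$ in place of $Y$, $1-Y$ and $Y=0$, precisely as you outline. Your preliminary step recording the polytomous analogues of Theorems~\ref{thm:trt-bridge} and \ref{thm:outcome-bridge}, and the Bayes-rule cancellation in (b), match the paper's binary arguments line for line.
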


By Theorem~\ref{thm:q-h-ident-polytomous} and \ref{thm:tac-ident-polytomous}, to estimate $\beta_j^k$ one may:
\begin{enumerate}
    \item Specify suitable parametric working models $q(A, Z, X;\tau)$ and $h_k(A, W, X;\psi_k)$;
    \item Estimate the nuisance parameters $\tau$ and $\psi$ by solving
    \begin{align*}
        &\sum_{i=1}^n I(Y_i = y_0)\left\{\kappa^*_1(A_i, W_i, X_i)q(A_i, Z_i, X_i;\tau) - \sum_{j=0}^J \kappa^*_1(a_j, W_i, X_i)\right\}=0;\\
        &\sum_{i=1}^n  \kappa^*_{2k}(A_i, Z_i, X_i)\{I(Y_i = y_0)h_k(A_i, W_i, X_i;\psi_k) - I(Y_i = y_k)\}=0\\
    \end{align*}for $k=1,2,\dots, K$, where $\kappa^*_1$ and $\kappa^*_{2k}$ are user-specified functions that dimensions at least as large as that of $\tau$ and $\psi_k$ respectively. Denote the resulting estimator as $\hat\tau$ and $\hat\psi_k$.
    \item Estimate $\theta_{jc}^k$ by
    \begin{align*}
        &\hat\theta_{jc,\text{PIPW}}^k = \dfrac{1}{n}\sum_{i=1}^n I(A=a_j, Y=y_k)c(X_i)q(a_j, Z_i,X_i;\hat\tau);\\
        &\hat\theta_{jc,\text{POR}}^k =\dfrac{1}{n}\sum_{i=1}^n c(X_i)I(Y_i = y_0)h_k(a_j, W_i, X_i;\hat\psi_k);\\
        \text{or}\qquad &\hat\theta_{jc,\text{PDR}}^k = \dfrac{1}{n}\sum_{i=1}^n c(X_i)\bigg[I(Y_i = y_0)h_k(a_j, W_i, X_i;\hat\psi_k)-\\&\qquad I(A_i = a_j)q(a_j, Z_i, X_i;\hat\tau)\left\{I(Y_i = y_0)h_k(a_j, W_i, X_i;\hat\psi_k) - I(Y_i = y_k)\right\}\bigg]
    \end{align*}
    for $j=0,1,\dots, J$ and $k=1,\dots, K$ and a user-specified real-valued function $c(X)$ (one may simply set $c(X)=1$).
    
    \item The resulting estimators for $\beta_j^K$ include the PIPW estimator
    $\hat\beta_{j, \text{PIPW}}^k=\log(\hat\theta_{jc,\text{PIPW}}^k / \hat\theta_{0c,\text{PIPW}}^k)$, the POR estimator $\hat\beta_{j, \text{POR}}^k=\log(\hat\theta_{jc,\text{POR}}^k / \hat\theta_{0c,\text{POR}}^k)$, and the PDR estimator $\hat\beta_{jc, \text{PDR}}^k=\log(\hat\theta_{jc,\text{PDR}}^k / \hat\theta_{0c,\text{PDR}}^k).$
\end{enumerate}

Under Model~\ref{mod:logit-mod-polytomous} and Assumptions~\ref{assump:nc} and \ref{assump:selection-polytomous}, the PDR estimator satisfies doubly robustness, i.e. $\hat\beta_{jc,\text{POR}}^k\overset{p}{\rightarrow}\beta_j^k$ if either (i) the parametric working model $q(A, Z, X;\tau)$ is correctly specified, and the completeness assumption~\ref{assump:completeness-a}(b) holds, or (ii) the parametric working model $h(A, W, X;\psi)$ is correctly specified, and Assumption~\ref{assump:completeness-b}(b) holds.

The kernel learning approach in Section~\ref{supp:RKHS} can similarly be employed to obtain semiparametric estimators for $\beta_j^k$ with flexible modeling for $q$ and $h_k$.

\end{document}